\documentclass[unsortedaddress,onecolumn, nofootinbib,10pt]{revtex4}
\usepackage[a4paper,bindingoffset=0.2in, left=0.5in,right=0.8in,top=1in,bottom=1in, footskip=.25in]{geometry}
\usepackage[utf8]{inputenc}
\usepackage{amsmath,empheq,color}
\usepackage{amsfonts}
\usepackage{amsthm}
\usepackage{amssymb}
\usepackage{graphicx}
\usepackage{hyperref}
\usepackage[normalem]{ulem}
\usepackage{epigraph}
\usepackage{mathrsfs}
\usepackage{bbm}
\usepackage{wrapfig}
\usepackage{lineno}
\usepackage{scalerel,stackengine}
\usepackage[T1]{fontenc}
\usepackage{epigraph}
\usepackage{cancel}
\usepackage{soul}
\usepackage{ulem}
\usepackage{hyperref}
\usepackage{amsmath,amssymb,amsfonts,amsthm,amscd}
\usepackage{graphicx}
\usepackage{enumerate}
\usepackage{colordvi}
\usepackage{units}
\usepackage{epsfig}
\usepackage{natbib}
\usepackage{enumerate}
\usepackage{colordvi}
\usepackage{multirow}
\usepackage{afterpage}
\usepackage{subfig}
\usepackage[utf8]{inputenc}
\usepackage{xcolor}
\usepackage{tikz-cd}

\usepackage[textwidth=1.7cm,textsize=small]{todonotes}

\newtheorem{theorem}{Theorem}
\newtheorem{remark}{Remark}
\newtheorem{corollary}{Corollary}
\newtheorem{example}{Example}
\newtheorem{definition}{Definition}
\newtheorem{proposition}{Proposition}

\setcounter{MaxMatrixCols}{10}

\def\be{\begin{equation}}
\def\ee{\end{equation}}
\def\ba{\begin{eqnarray}}
\def\ea{\end{eqnarray}}



\def\H{\mathscr{H}}
\def\D{\mathbf{D}}

\def\s{\mathbf{s}}

\def\X{\mathbf{X}}


\def\R{\mathbb R}

\newcommand{\fe}{\varphi}
\newcommand{\inv}{^{-1}}
\newcommand{\del}{\partial}
\newcommand{\e}{\varepsilon}
\newcommand{\C}{\mathbb{C}}


\usepackage{todonotes}





\begin{document}

\title{Scaling Symmetries, Contact Reduction and Poincar\'e's dream}

\author{Alessandro Bravetti}
\email{alessandro.bravetti@iimas.unam.mx} 
\affiliation{Instituto de Investigaciones en Matem\'aticas Aplicadas y en Sistemas, 
Universidad Nacional Aut\'onoma de M\'exico, A.~P.~70543, M\'exico, DF 04510, Mexico}
\author{Connor Jackman}
\email{connor.jackman@cimat.mx} 
\affiliation{CIMAT, A.P. 402, Guanajuato, Gto. 36000, M\'exico}

\author{David Sloan}
\email{d.sloan@lancaster.ac.uk}
\affiliation{Department of Physics, Lancaster University, Lancaster UK}

\begin{abstract}
A symplectic Hamiltonian system admitting a 
scaling symmetry
can be reduced to an equivalent contact Hamiltonian system 
in which some physically-irrelevant degree of freedom has been removed. 
As a consequence, one obtains an equivalent description for the same physical phenomenon, but with fewer inputs needed,
thus realizing `Poincar\'e's dream' of a scale-invariant 
description of the universe.

This work is devoted to a thorough analysis of the mathematical framework behind such reductions. 
We show that 
generically such reduction is possible and the reduced (fundamental) system is a contact Hamiltonian system.
The price to pay for this level of generality is that one is compelled to
consider the coupling constants appearing in the original Hamiltonian
as part of the dynamical variables of a lifted system. 
This however has the added advantage of removing the hypothesis of
the existence of a scaling symmetry for the original system at all, without breaking
the sought-for reduction in the number of inputs needed.
Therefore 
a large class of Hamiltonian (resp.~Lagrangian)
theories can be reduced to 
scale-invariant
contact Hamiltonian (resp.~Herglotz variational) theories.
\end{abstract}

\maketitle


\section{Introduction}

Symmetry plays an important role in physics. 
When considering observations of measurable quantities, 
the action of symmetry should be carefully considered in its effect not only on the quantity being 
measured but also upon the apparatus through which the measurement is made.
This is common when considering most physical symmetries. 
On applying a Galilean transformation to a system -- for example transforming reference frames -- 
we argue invariance of observation based on applying the transformation to both the observer and the observed quantity. 
Absolute motion should not be observable as it is not invariant under Galilean transformations. Physically meaningful quantities and their equations of motion are those invariant under such symmetry transformations.

It has been argued, at least since Poincar\'e (see eg.~\cite{gryb2021scale} for a detailed discussion), that the same considerations should be applied to scaling symmetries and the absolute scale of a system. In \cite{poincare2003science}, pg.~94, Poincar\'e imagines that overnight all dimensions in the universe have been scaled, and argues that there will be no detectable change due to both the observed system (the universe) and any possible measuring apparatus (eg one's height), having undergone a scaling by the same factor.
Accordingly, we will refer to writing the equations of motion of a system in terms of scale-invariant 
quantities as {\em Poincar\'e's dream}.

The counter to this argument was long held to be the fact that physical constants can be used to give an overall scale based on, 
for example, the Planck length. 
However in recent work it has been suggested that if these are included not as given facts about the universe, 
but rather as physical observations of quantities which must be measured, 
such a scaling symmetry can be restored~\cite{sloan2021scale}.

It should be clear that in any scale-invariant description of the physical world one would need one less piece of information with 
respect to the standard descriptions that include reference to some (unmeasurable) absolute scale, the datum
being removed being precisely the one corresponding to the scale.
Moreover, such scale reductions contain valuable information on certain singularities, eg collision singluarities, 
of the original systems and this has been exploited in several works, \cite{Koslowski:2016hds,Mercati:2019cbn,sloan2019scalar,mercati2021through}, 
to find that
the scale-invariant description is freed from the singularities of the original 
dynamics~
(although care must be taken, since this is not always necessarily the case~\cite{mercati2021total}).
As well, one finds interesting dynamical features such as a dissipative-like behavior, which can provide a natural 
origin for the observed arrow of time~\cite{barbour2014identification,sloan2018dynamical,gryb2021new}.
Therefore Poincar\'e's dream is well worth pursuing after all.

Despite the great amount of work 
in this direction, still 
the treatment of scaling symmetries and of their corresponding reductions 
so far has been based on a
case-by-case study (with the exeption of~\cite{sloan2018dynamical})
and with some of the methods being quite ad-hoc; 
for instance, in most situations
the choice of scaling function and selection of scale invariant quantities 
has been given without justification. 
While in simple systems such as the Kepler problem the choice of $r$ as scaling may indeed appear natural, 
as $r$ represents the physical distance separation between the two bodies in the problem, 
when we move beyond this even simply to the $n$-body problem the choice is not so apparent 
-- one could use the mean separation of particles, or the largest separation, or some combination of any variables that scale in 
a similar way. 
Further, even in the simple Kepler problem, one could have opted to use $p_r$, the momentum conjugate to $r$, instead of $r$. 
Would the results ensuing from these two different choices be related somehow? 
Although it is not difficult to show directly that analogous results hold  in this particular case, 
this highlights the arbitrary nature of such choices. 

In this work we will address for the first time the study of scaling symmetries in Hamiltonian systems 
and their related contact reductions in full generality.
Our main results (Theorems~\ref{th:CHS} and~\ref{thm:lifted_red}) will 
establish that any Hamiltonian (resp. variational) 
theory admitting a scaling symmetry can be reduced to a dynamically equivalent 
scale-invariant
contact Hamiltonian (resp.~Herglotz variational) theory, thus illustrating that Poincar\'e's dream
is indeed feasible. Surprisingly, we will see that this is possible even in the absence of an a-priori evident scaling symmetry,
provided some less stringent conditions are satisfied.
Moreover, we will consider explicitly the case in which one has chosen a certain scaling function
and show how to construct the resulting equations of motion in terms of scale-invariant quantities 
(Corollary~\ref{cor:coords}).
Finally, a collection of physically-relevant examples will be provided throughout the manuscript 
in order to illustrate the main points
and constructions.

The paper is laid out as follows:
in Section~\ref{sec:contactreduction} we give a precise definition of a scaling symmetry 
and prove that reducing a symplectic system with a scaling symmetry in general yields two contact systems on different regions,
which coincide if and only if the degree of the scaling symmetry is one (Theorem~\ref{th:CHS} and Corollary~\ref{cor:D1}).
Besides, in Corollary~\ref{cor:coords} we prove an invariance (up to time reparametrizations) 
of the reduced system with respect to the choice of the scaling function.
Finally, in Section~\ref{sec:globalreduction} 
we show that  Hamiltonian systems which do not exhibit scaling symmetries can be lifted into a broader space of models in which the symmetry is present, 
by promoting the physical constants (couplings) to observable variables within the system. 
In this way one may always reduce the description to an equivalent theory based on a contact Hamiltonian
(or Herglotz variational) theory on a reduced space.
Despite the dynamical equivalence among all such theories, the contact Hamiltonian one, involving less elements for its complete description,
has more explanatory power (Proposition~\ref{thm:globalreduction} and Theorem~\ref{thm:lifted_red}).
We conclude in Section~\ref{sec:conclusions} with an outlook on future work and applications.
To have the paper as self-contained as possible and in order to further motivate our constructions, three appendices are also included: 
in Appendix~\ref{app:ham_cont} we introduce the symplectic and contact Hamiltonian systems that we consider, 
and briefly summarize some known relationship betweem them. In Appendix~\ref{app:Herglotz} we review the variational approach of Herglotz
to contact systems and re-state our main results in terms of this formulation. In Appendix~\ref{app:Kepler} we provide some examples
of well-known constructions in celestial mechanics for which our formalism provides a unifying principle.

\section{Scaling symmetries, contact reduction and symplectification}\label{sec:contactreduction}

In this section we describe how a special type of symmetry, called `scaling symmetry', 
provides a new link between symplectic mechanics and contact mechanics 
(see Appendix~\ref{app:ham_cont} for the relevant definitions and our sign conventions). 
Such symmetries have been shown to be present in cosmology, 
the Kepler problem and a host of other physical contexts~\cite{sloan2018dynamical,sloan2019scalar,sloan2021new,sloan2021scale,gryb2021scale}. 

\subsection{Scaling symmetries}\label{subsec:scalingsymm}
We start with a general definition, comprising all those transformations that reparametrize the dynamics without altering 
the (unparametrized) orbits.

\begin{definition}\label{def:dynsim}\rm
A \emph{dynamical similarity} of a vector field $X\in \mathfrak X(M)$ is a vector field $Y\in \mathfrak X(M)$ such
that $[Y, X] = f X$, for some (in general non-constant) function $f:M\rightarrow\R$ 
($Y$ is also called a \emph{conformal symmetry}).
\end{definition}
In other words, a dynamical similarity is a symmetry of the one-dimensional distribution, or line field,
generated by $X$. When $Y\ne 0$, such symmetries give rise to a `local reduction of order'~\cite{arnold2012geometrical},~pg.~7, 
in describing integral curves of $X$.

Among dynamical similarities, there is a special class appearing in Hamiltonian systems that is often found in 
physical applications, the so-called scaling symmetries.

	\begin{definition}\label{def:scaling}\rm
	$\D\in\mathfrak{X}(M)$ is a {\em scaling symmetry} of degree $\Lambda\in\R$ for the Hamiltonian system $(M, \omega, H)$ if
		\begin{itemize}
		\item[i)] $L_{\D}\omega = \omega$ 
		\item[ii)] $L_{\D}H = \Lambda H$.
		\end{itemize}
	\end{definition}

Note that with this definition 
	\begin{equation}\label{eq:scalingXH}
	[\D,X_{H}]=(\Lambda-1)X_{H}
	\end{equation}
and therefore scaling symmetries of symplectic Hamiltonian systems are a very special 
case of dynamical similarities. Further, 
note that for any conserved quantity generated by the vector field $J$, we have $[X_H,J]=0$ hence 
from~\eqref{eq:scalingXH} and
the Jacobi identity
\begin{equation}
    [X_H,[\D,J]] = [J,[\D,X_H]] + [\D,[X_H,J]]=0\,,
\end{equation}
we see that dynamical similarities in general (and scaling symmetries in particular) map conserved quantities onto conserved quantities.

\begin{example}[Kepler scalings]\label{ex:Kepler}\rm
Our archetype motivating this definition is the map between solutions to the (planar) Kepler problem,  with 
    \be\label{eq:KeplerSetting1}
    M=T^{*}(\C\backslash 0)  = (\C\backslash 0)\times \C\ni (q,p), 
    \qquad \omega= dp\wedge dq = d( p\cdot  dq),
    \qquad \text{and} \qquad
    H_{K}=\frac{{|p|}^{2}}{2}-\frac{1}{{|q|}}\,.
    \ee
    In this case it is well-known that the \emph{Kepler scalings} 
    $$q\rightarrow \lambda^{2}q 
    \qquad 
    \text{and} \qquad
    p\rightarrow \lambda^{-1}p\,,
    $$
for $\lambda\in \R\backslash 0$, send orbits into orbits. 
Considering the generator of these scalings in the phase space, 
    \be\label{eq:DKKepler1}
    \D_{K}=2q\cdot \partial_{q}-p\cdot \partial_{p}\,,
    \ee
one has
$L_{\D_{K}}\omega =\omega$ and $L_{\D_{K}}H_{K}=-2H_{K}$, that is, $\D_{K}$ is a scaling symmetry of degree $-2$.
By equation~\eqref{eq:scalingXH}, it follows that the time parameter should be rescaled as $t\rightarrow \lambda^{3}t$ in order to match the corresponding 
parametrizations (`Kepler's third law').
\end{example}

Before exploring further the role of scaling symmetries in Hamiltonian systems, the following two related remarks are at order:
\begin{remark}\label{rmrk:freedom1}\rm
The set of vector fields satisfying $L_{\D}\omega=\omega$ (the {\em Liouville vector fields})
form an affine space, directed by the {\em symplectic vector fields}: 
$Y\in\mathfrak{X}(M)$ such that $L_Y\omega = 0$ (equivalently $i_Y\omega$ is closed). 
\end{remark}
\begin{remark}\label{rmrk:freedom2}\rm
A scaling symmetry need not be unique, in fact for any two scaling symmetries, $\D, \D'$, of $H$ one has $i_{\D'}\omega = i_\D\omega + \alpha$ for some closed 1-form $\alpha$ satisfying $i_{X_H}\alpha = (\Lambda' - \Lambda) H$, where $\Lambda', \Lambda$ are the degrees of $\D', \D$. Conversely, any closed 1-form with $i_{X_H}\alpha = (\Lambda' - \Lambda) H$ determines another scaling symmetry of degree $\Lambda'$ through $i_{\D'}\omega = i_\D\omega + \alpha$.
For example, given a first integral $F$ of $H$, and scaling symmetry $\D$, we have that 
$\D+X_{F}$ is also a scaling symmetry of the same degree as $\D$. In particular, one may always add $X_{H}$ to $\D$.
\end{remark}	

Let us conclude this section by 
introducing a re-characterization of scaling symmetries that is equivalent to our definition above
but will at times be more useful in derivations of explicit formulas.
\begin{proposition}\label{prop:conditionLH}\rm
$(M, \omega, H)$ admits a scaling symmetry of degree $\Lambda$ if and only if there exists a primitive, $\lambda$, of $\omega$  ($d\lambda = \omega$), 
satisfying
\begin{equation}\label{eq:conditionLH} 
i_{X_H}\lambda = \Lambda H.
\end{equation}
\end{proposition}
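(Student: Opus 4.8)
The plan is to reduce the statement to the elementary correspondence $\D\mapsto i_\D\omega$ between Liouville vector fields and primitives of $\omega$ (cf.\ Remark~\ref{rmrk:freedom1}). Concretely, I would set $\lambda := i_\D\omega$ and show that the two conditions in Definition~\ref{def:scaling} translate, one for one, into the two conditions of the Proposition. Since $\omega$ is nondegenerate, the map $\D\mapsto i_\D\omega$ is a linear isomorphism from $\mathfrak X(M)$ onto the $1$-forms, so the assignment $\lambda=i_\D\omega$ is invertible: this is exactly what lets a single computation serve both implications. Condition i) is immediate from Cartan's magic formula together with $d\omega=0$: since $L_\D\omega = d\,i_\D\omega + i_\D\,d\omega = d\lambda$, the requirement $L_\D\omega=\omega$ is equivalent to $d\lambda=\omega$, i.e.\ to $\lambda$ being a primitive of $\omega$.

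The crux is to rewrite condition ii), $L_\D H=\Lambda H$, in terms of $\lambda$. The key identity I would establish is
\begin{equation}\label{eq:keyid}
i_{X_H}\lambda \;=\; i_{X_H}\bigl(i_\D\omega\bigr)\;=\;\omega(\D, X_H)\;=\;L_\D H,
\end{equation}
valid for \emph{any} $\D$, independently of the degree. The middle equality is just the definition of the interior product, while the last one uses the defining relation of the Hamiltonian vector field in our conventions, $i_{X_H}\omega=-dH$: indeed $\omega(\D,X_H) = -\omega(X_H,\D) = -\,(i_{X_H}\omega)(\D) = dH(\D) = L_\D H$. (With the opposite sign convention one instead takes $\lambda=-i_\D\omega$; the Kepler computation of Example~\ref{ex:Kepler}, carried out with $\lambda=i_{\D_K}\omega$ and $\Lambda=-2$, provides a reassuring consistency check.) Granting \eqref{eq:keyid}, the condition $L_\D H=\Lambda H$ is literally $i_{X_H}\lambda=\Lambda H$.

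Assembling the two directions is then automatic. For the forward implication, given a scaling symmetry $\D$ of degree $\Lambda$ I take $\lambda=i_\D\omega$; the first step gives $d\lambda=\omega$ and \eqref{eq:keyid} gives $i_{X_H}\lambda=\Lambda H$. For the converse, given a primitive $\lambda$ with $i_{X_H}\lambda=\Lambda H$ I invoke nondegeneracy to define the unique $\D$ with $i_\D\omega=\lambda$, and read the same two identities backwards to recover $L_\D\omega=\omega$ and $L_\D H=\Lambda H$. I do not expect a genuine obstacle here: the only real subtlety is careful bookkeeping of the sign convention for $X_H$ and of the antisymmetry of $\omega$ in \eqref{eq:keyid}, together with the observation that the \emph{distinguished} primitive realizing the condition is $i_\D\omega$ rather than an arbitrary one (different primitives differ by a closed form, precisely the freedom quantified in Remark~\ref{rmrk:freedom2}).
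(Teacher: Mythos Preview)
Your proposal is correct and follows essentially the same approach as the paper's own proof: set $\lambda=i_\D\omega$, use Cartan's formula (with $d\omega=0$) to translate condition~(i) into $d\lambda=\omega$, and use the identity $i_{X_H}\lambda=\omega(\D,X_H)=L_\D H$ to translate condition~(ii) into $i_{X_H}\lambda=\Lambda H$. You have simply spelled out the details (sign bookkeeping, the converse via nondegeneracy of $\omega$) that the paper leaves implicit.
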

\begin{proof}
One takes $\lambda = i_\D\omega$ so that $d\lambda = \omega$ is equivalent to $L_{\D}\omega=\omega$. Moreover, one then has the relation  $i_{X_H}\lambda=L_{\D}H$ yielding the equivalence of the condition in Eq.~\eqref{eq:conditionLH} to (ii) of Definition \ref{def:dynsim}.
\end{proof}

Now we are ready to ask ourselves the main question: 
given that many symplectic Hamiltonian systems admit a scaling symmetry, how may one use it in order to
\emph{reduce} the system to one defined on a lower-dimensional manifold?
Furthermore,
can we do so in such a way that the reduced system has a Hamiltonian (resp. variational) structure?
Note that if we can prove such equivalence, then we consider Poincar\'e's dream of an equivalent description
of the physical reality in terms of scale-invariant quantities to be fulfilled.

In the next sections we will address these questions by steps.

\subsection{Scaling symmetries: from symplectic mechanics to contact mechanics}\label{sec:symtocont}

Once a scaling symmetry $\D$ is in hand, the idea is to quotient by its action and obtain a reduced dynamics on a lower-dimensional manifold. 

More precisely, we consider the projections of trajectories of the Hamiltonian system to 
the quotient of $M$ by  $\D$: $M/\D = M/\sim$ where $m\sim m'$ when $m$ and $m'$ 
lie on a common integral curve of $\D$. In general this quotient space need not be a manifold. 
Here we will simply assume that $M/\D$ is a manifold, 
e.g.~that the $\D$'s flow acts freely ($\D$ has no zeroes) and properly, with submersion $M\to M/\D$.

The first step is to describe the additional structure on this reduced space 
which may be used to characterize these projected trajectories. 
This is the content of the next results.

\begin{proposition}\label{prop:quotient}\rm
Suppose the flow of $\D$ is complete, acting freely and properly on $M$ so that $C:=M/\D$ 
is a smooth manifold with submersion $\pi:M\to C$. Set $\lambda:= i_\D\omega$. Then
\begin{itemize}
    \item[(i)] $C$ is a contact manifold with contact distribution $\mathscr{D} = \pi_*(\ker\lambda)$,
    
    \item[(ii)] on $\mathscr{D}$ there is a conformal symplectic structure,
    
    \item[(iii)] there is a local (exact) symplectomorphism from $(M,d\lambda)$ 
    to the symplectification $(\tilde C,d\tilde \alpha)$ of $C$.
\end{itemize}
\end{proposition}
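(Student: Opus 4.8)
The plan is to build everything out of two elementary identities for $\lambda := i_\D\omega$. Using Cartan's formula and $d\omega=0$, the Liouville condition $L_\D\omega=\omega$ gives $d\lambda = d\,i_\D\omega = L_\D\omega = \omega$, so $\lambda$ is a global primitive of $\omega$; moreover $i_\D\lambda = i_\D i_\D\omega = 0$ and $L_\D\lambda = i_\D d\lambda + d\,i_\D\lambda = i_\D\omega = \lambda$. The first shows $\D\in\ker\lambda$, and the second shows the flow of $\D$ rescales $\lambda$ by the positive factor $e^s$, hence preserves the hyperplane field $\ker\lambda\subset TM$. Since $\lambda$ is nowhere zero (if $\lambda_m=0$ then $i_\D\omega=0$ at $m$, forcing $\D_m=0$ by nondegeneracy of $\omega$, contradicting the free action), $\ker\lambda$ is a genuine corank-one, $\D$-invariant distribution containing the vertical direction $\ker d\pi = \langle\D\rangle$. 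It therefore descends to a well-defined hyperplane field $\mathscr{D} = \pi_*(\ker\lambda)$ on $C$, of the correct dimension $2n-2$ for a contact distribution on the $(2n-1)$-manifold $C$.

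For (i) I would establish the contact (maximal non-integrability) condition through the identity $\lambda\wedge\omega^{n-1} = \tfrac1n\,i_\D\omega^n$, obtained by contracting the volume form $\omega^n$ with $\D$. As $\omega^n$ and $\D$ are nowhere zero and $i_\D(i_\D\omega^n)=0$, this $(2n-1)$-form is nonzero, and its value on any frame $(v_1,\dots,v_{2n-1})$ transverse to $\D$ equals $\omega^n(\D,v_1,\dots,v_{2n-1})\neq 0$. Concretely I pick a local transversal $S\subset M$ to the flow, use the local diffeomorphism $\pi|_S$ to carry $\lambda|_S$ to a $1$-form $\alpha$ on $C$ with $\ker\alpha=\mathscr{D}$, noting that $d\alpha$ is carried from $\omega|_S$; then $\alpha\wedge(d\alpha)^{n-1}$ is carried from $(\lambda\wedge\omega^{n-1})|_S$, which the previous line shows is nonvanishing. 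Hence $\alpha$ is a contact form and $\mathscr{D}$ a contact distribution, independent of $S$ since two such forms share the kernel $\mathscr{D}$ and so differ by a nonvanishing factor.

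Part (ii) then comes almost for free: $\omega$ restricted to $\ker\lambda$ is a $2$-form whose kernel is exactly $\langle\D\rangle$ (as $i_\D\omega=\lambda$ vanishes on $\ker\lambda$ and $\omega$ is nondegenerate), so it induces a nondegenerate $2$-form on each fibre $\mathscr{D}_c\cong\ker\lambda/\langle\D\rangle$; because $L_\D\omega=\omega$ this induced form is only defined up to the scaling picked up along the fibre of $\pi$, which is precisely a conformal symplectic structure, represented by $d\alpha|_{\mathscr{D}}$ for any local contact form $\alpha$. For (iii) I would write down the map $\Psi:M\to \tilde C\subset T^*C$ sending $m$ to the covector $\bar\lambda_m\in T^*_{\pi(m)}C$ defined by $\bar\lambda_m(d\pi\,Z) := \lambda_m(Z)$; this is well-defined since $\lambda_m$ kills the vertical $\langle\D\rangle$, it is nonzero and annihilates $\mathscr{D}$, hence lands in the symplectification $\tilde C$ (the ray bundle of covectors annihilating $\mathscr{D}$). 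A direct check against the tautological $1$-form shows $\Psi^*\tilde\alpha = \lambda$, whence $\Psi^* d\tilde\alpha = \omega$; since $\omega$ is nondegenerate this forces $d\Psi$ to be injective, and as $\dim M = \dim\tilde C = 2n$ the map $\Psi$ is a local (exact) symplectomorphism.

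I expect the genuine work to sit in (i): converting the nondegeneracy of $\omega$ on the $2n$-dimensional $M$ into the contact condition on the $(2n-1)$-dimensional quotient, i.e.~verifying $\lambda\wedge\omega^{n-1}\neq 0$, that it descends to a nonvanishing top form, and that the descended object is independent of the chosen transversal. By contrast (ii) and (iii) are largely formal once $\lambda$, its kernel, and the homogeneity $L_\D\lambda=\lambda$ are in hand; the only care needed in (iii) is to match the convention for the symplectification and its tautological $1$-form so that $\Psi^*\tilde\alpha=\lambda$ holds on the nose.
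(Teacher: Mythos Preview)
Your proposal is correct and follows essentially the same route as the paper: for (i) both use the identity $\lambda\wedge (d\lambda)^{n-1}\propto i_{\D}\omega^{n}$ evaluated on a local transversal to $\D$; for (ii) both invoke the standard fact that $d\eta|_{\mathscr{D}}$ is nondegenerate and changes by a conformal factor under $\eta\mapsto f\eta$; and for (iii) the paper defines exactly your map $\Psi$ (written as $\varphi$) via $\bigl(\varphi(m),\pi_*v\bigr):=\lambda_m(v)$ and verifies $\varphi^*\tilde\alpha=\lambda$. Your write-up is somewhat more detailed in the preliminary identities ($i_{\D}\lambda=0$, $L_{\D}\lambda=\lambda$, $\lambda$ nowhere zero) than the paper, which for (i) largely defers to a reference, but the substance is the same.
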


\begin{proof}
(i): see e.g.~\cite{geiges2001brief}, pg.~36. 
Since there are no zeroes of $\D$, we have local coordinates on $C$ by taking a local transverse slice, 
$\Sigma$, to $\D$. Then with $\lambda = i_\D\omega, L_\D\omega = \omega$ we have $\lambda\wedge (d\lambda)^n = i_\D \omega^n$ 
is non-degenerate on $\Sigma$ determining a contact distribution $\ker\lambda|_\Sigma$ on $\Sigma$, i.e.~$\pi_*\ker\lambda$ on $C$.

(ii): this is a familiar property of contact manifolds. Namely when a contact distribution, 
$\mathscr{D}$, is given by the kernel of a 1-form $\eta$ then $d\eta|_{\mathscr{D}}$ is a symplectic 
form on $\mathscr{D}$. Changing $\eta$ to $f\eta$, with $f\ne 0$, modifies $d(f\eta)|_{\mathscr{D}}$ to $fd\eta|_{\mathscr{D}}$,
and therefore there is a conformal symplectic structure on $\mathscr{D}$.
In our situation, this conformal symplectic structure is represented by $[\omega(\tilde u_1, \tilde u_2)]$ where $\pi_*\tilde u_j = u_j\in\mathscr{D}$. 

(iii): recall (see Definition~\ref{def:symplectification}) that $\tilde C = Ann(\mathscr{D})\backslash C \subset T^*C$, 
the symplectification of $C$, is
an $\R^\times$ principal bundle over $C$ with exact symplectic structure, $d\tilde\alpha$, 
given by restriction of the standard symplectic structure on $T^*C$. 
Consider $\fe : M\to \tilde C$ defined by 
    $$\left(\fe(m), \pi_*v\right) := \lambda_m(v)$$
for all $v\in T_mM$ 
and
with $(\cdot, \cdot )$ the natural pairing of a vector space and its dual. 
One verifies that 
$$
\left(\left(\fe^*\tilde\alpha\right)_m, v\right)
\overset{\text{def of}\,\,\fe^*}{=} \left(\tilde \alpha_{\fe(m)}, \fe_* v\right)
\overset{\text{def of}\,\,\tilde\alpha}{=}\left(\fe(m), (\tilde\pi_C)_*\fe_* v\right)
=\left( \fe(m), \pi_* v\right)
\overset{\text{def of}\,\,\fe}{=}\lambda_m(v)\,\quad \forall v\in T_m M\,,
$$
that is, $\fe^*\tilde\alpha = \lambda$, and so $\fe$ is a (local) exact symplectomorphism.
\end{proof}

By item (ii) in Proposition~\ref{prop:quotient}
there is sense in speaking of orthogonal complements with respect to the conformal symplectic structure in $\mathscr{D}$. We will denote these complements by $^\perp$. 
As for the projected orbits in $C$, we have

\begin{proposition}\label{prop:linefield} \rm
The line field $\text{span}(X_H)$ on $M$ determines a line field $\ell := \pi_* \text{span}(X_H)$ on $C$. 
Integral curves of $\ell$ are projections of orbits of the Hamiltonian system to $C$.  Moreover, letting $\Sigma_0 := \pi(H=0)$, we have 
\[\ell|_{\Sigma_0} = (T\Sigma_0\cap\mathscr{D})^\perp\subset T\Sigma_0 \cap \mathscr{D}.\]
\end{proposition}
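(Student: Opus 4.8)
The plan is to handle the three assertions in turn, reducing the final identity to a pointwise computation in symplectic linear algebra.

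First I would establish that the line field descends and projects orbits to integral curves. By Eq.~\eqref{eq:scalingXH} we have $[\D,X_H]=(\Lambda-1)X_H$, so $L_\D X_H$ is a multiple of $X_H$ and the flow of $\D$ sends $X_H$ to a scalar multiple of itself; hence the line field $\operatorname{span}(X_H)$ is $\D$-invariant, is constant along the fibres of $\pi$, and pushes forward to a well-defined line field $\ell=\pi_*\operatorname{span}(X_H)$ on $C$. Since $\pi_*X_H$ spans $\ell$ along each orbit, the image under $\pi$ of an integral curve of $X_H$ is everywhere tangent to $\ell$, i.e.\ an integral curve of $\ell$ (up to reparametrisation). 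This gives the first two claims.

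Next I would set up the hypersurface. By Proposition~\ref{prop:conditionLH} take $\lambda=i_\D\omega$ with $i_{X_H}\lambda=\Lambda H$, so on $\{H=0\}$ one has $\lambda(X_H)=0$, i.e.\ $X_H\in\ker\lambda$; pushing forward gives $\ell|_{\Sigma_0}\subset\pi_*\ker\lambda=\mathscr{D}$. Moreover $L_\D H=\Lambda H$ vanishes on $\{H=0\}$, so $\D$ is tangent to $\{H=0\}$, the image $\Sigma_0=\pi(H=0)$ is a smooth hypersurface of $C$, and $T\Sigma_0=\pi_*\ker dH$. It then remains to identify $\ell$ with $(T\Sigma_0\cap\mathscr{D})^\perp$. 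For this I work pointwise on $\{H=0\}$ upstairs in $M$, writing $^\omega$ for the $\omega$-orthogonal. Recall that $\omega|_{\ker\lambda}$ has one-dimensional kernel $\mathbb{R}\D$ (since $\ker\lambda=\{\D\}^{\omega}$ and $(\ker\lambda)^{\omega}=\mathbb{R}\D$) and descends to the conformal symplectic form on $\mathscr{D}=\ker\lambda/\mathbb{R}\D$ of Proposition~\ref{prop:quotient}(ii); hence $^\perp$ in $\mathscr{D}$ is computed by taking the $\omega$-orthogonal inside $\ker\lambda$ and projecting. Using $\ker dH=\{X_H\}^{\omega}$ and $\ker\lambda=\{\D\}^{\omega}$, the subspace whose projection is $T\Sigma_0\cap\mathscr{D}$ is $U:=\ker dH\cap\ker\lambda=(\operatorname{span}(X_H,\D))^{\omega}$, so that $U^{\omega}=\operatorname{span}(X_H,\D)$. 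Intersecting with $\ker\lambda$ (which contains both $X_H$ and $\D$ on $\{H=0\}$) and applying $\pi_*$, the $\D$-direction collapses and one is left exactly with $\ell$, giving $\ell|_{\Sigma_0}=(T\Sigma_0\cap\mathscr{D})^\perp$. The inclusion $(T\Sigma_0\cap\mathscr{D})^\perp\subset T\Sigma_0\cap\mathscr{D}$ then follows because $X_H\in U$ on $\{H=0\}$, equivalently because $T\Sigma_0\cap\mathscr{D}$ is a hyperplane, hence coisotropic, in the symplectic space $\mathscr{D}$, its orthogonal being the one-dimensional characteristic direction $\ell$.

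The main obstacle is bookkeeping rather than depth: I must check that $^\perp$ in $\mathscr{D}$ is genuinely represented by the upstairs $\omega$-orthogonal in $\ker\lambda$ (independence of the chosen lift, using $\lambda|_{\ker\lambda}=0$), that $\pi_*(\ker dH\cap\ker\lambda)$ is all of $T\Sigma_0\cap\mathscr{D}$ and not a proper subspace (a short lifting argument using $\mathbb{R}\D\subset\ker dH\cap\ker\lambda$ on $\{H=0\}$), and that $X_H$ and $\D$ are pointwise independent, so that $\ell$ is genuinely a line and $T\Sigma_0\cap\mathscr{D}$ genuinely of codimension one in $\mathscr{D}$. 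The degenerate locus where $X_H\in\mathbb{R}\D$, that is where orbits are tangent to the scaling direction, would have to be excluded or treated separately.
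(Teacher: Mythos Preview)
Your proof is correct and follows essentially the same approach as the paper: both lift to $M$, use $\lambda(X_H)=\Lambda H$ to place $X_H|_{H=0}$ in $\ker\lambda$, and then compute the $\perp$ upstairs via $\omega$-orthogonality and the relation $i_{X_H}\omega=-dH$. The only organizational difference is that the paper argues the single inclusion $\ell\subset (T\Sigma_0\cap\mathscr{D})^\perp$ by directly checking $\omega(\tilde u,X_H)=-dH(\tilde u)=0$ for lifts $\tilde u$ of $u\in T\Sigma_0\cap\mathscr{D}$ (leaving equality to an implicit dimension count), whereas you compute the orthogonal in one stroke via $U^{\omega}=\operatorname{span}(X_H,\D)$; you are also more explicit than the paper about the lifting step $\pi_*(\ker dH\cap\ker\lambda)=T\Sigma_0\cap\mathscr{D}$ and about the degenerate locus where $X_H\in\mathbb{R}\D$.
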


\begin{proof}
$\ell$ is well defined from the scale invariance of $\text{span}(X_H)$: $L_{\D}X_H = (\Lambda - 1)X_H$. Since the conformal symplectic structure of Proposition \ref{prop:quotient} (ii) is given by $[\omega(\tilde u_1, \tilde u_2)]$ with $\pi_*\tilde u_j = u_j\in \mathscr{D}$, we will show that $\omega(\tilde u, X_H) = 0$ for any $\pi_*\tilde u\in T\Sigma_0\cap\mathscr{D}$.
First, note that $\D$ is tangent to $H=0$ and by $i_{X_H}i_{\D}\omega = \Lambda H$, that $X_H|_{H=0}\in \ker\lambda$. 
Hence $\ell|_{\Sigma_0}\subset\mathscr{D}$. Let $u\in T\Sigma_0\cap \mathscr{D}$. 
Any lift, $\pi_*\tilde u = u$, of $u$ lies in $T(H\inv(0)) \cap \ker\lambda$ 
and so $\omega(\tilde u, X_H) \sim dH(\tilde u) = 0$, since $\tilde u$ is tangent to $H=0$.
\end{proof}

Note that for degree zero scaling symmetries, $\Lambda = 0$, then $\D$ is tangent to each energy level, and the last proposition applies to arbitrary projections of energy levels to determine the line field $\ell$ on $C$. From here on we will consider the case $\Lambda\ne 0$. As for describing this line field more explicitly in the general case, we first define:
	\begin{definition}\label{def:scalingf}\rm
	A {\em scaling function} for $\D$ is a function $\rho:M\to \R$ such that $L_{\D}\rho = \rho$.
	\end{definition}
And we may now establish the equivalence
	\begin{proposition}\label{prop:equivalence}\rm
	Given $\D$, the following are equivalent:
	\begin{itemize}
	\item[(i)] the existence of a {\em global scaling function}, $\rho:M\to\R^\times$, for $\D$
	\item[(ii)] the existence of a global contact 1-form ($\ker\eta = \mathscr{D}$) on $C$,
	\item[(iii)] an embedding $\iota:C\to M$ as a slice of the $\D$-action: $\pi\circ \iota = id$.
	\end{itemize}
	\end{proposition}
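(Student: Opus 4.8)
The plan is to organize everything around $\lambda := i_\D\omega$ and its Cartan-calculus identities, and to establish the three conditions as equivalent by proving (i)$\Leftrightarrow$(ii) and (i)$\Leftrightarrow$(iii). First I would record the identities driving the argument: from $[\D,\D]=0$ and $L_\D\omega=\omega$ one gets $i_\D\lambda=0$ and $L_\D\lambda=\lambda$; and since $\R\D=\ker\pi_*\subset\ker\lambda$ while $\pi_*(\ker\lambda)=\mathscr{D}$, the full preimage is $\ker\lambda=\pi_*^{-1}(\mathscr{D})$, so that $\ker(\pi^*\eta)=\ker\lambda$ for every 1-form $\eta$ on $C$ with $\ker\eta=\mathscr{D}$. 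These are the only structural facts about $\lambda$ that the rest of the argument consumes.

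For (i)$\Leftrightarrow$(ii) the bridge is that $\eta_0:=\rho^{-1}\lambda$ is a basic form. Given a nowhere-zero scaling function, I would compute $i_\D\eta_0=\rho^{-1}i_\D\lambda=0$ and $L_\D\eta_0=-\rho^{-2}(L_\D\rho)\lambda+\rho^{-1}L_\D\lambda=0$, so $\eta_0$ descends to a 1-form $\eta$ on $C$ with $\pi^*\eta=\eta_0$; its kernel is $\ker\lambda$, hence $\ker\eta=\mathscr{D}$, and because $\mathscr{D}$ is a contact distribution (Proposition~\ref{prop:quotient}(i)) this $\eta$ is a global contact form, giving (ii). Conversely, given a global contact form, $\lambda$ and $\pi^*\eta$ are nowhere-zero 1-forms sharing the kernel $\ker\lambda$, so $\lambda=\rho\,\pi^*\eta$ with $\rho$ nowhere zero; comparing $L_\D\lambda=\lambda$ with $L_\D(\rho\,\pi^*\eta)=(L_\D\rho)\pi^*\eta$ (as $\pi^*\eta$ is basic, $L_\D\pi^*\eta=0$) forces $L_\D\rho=\rho$, giving (i).

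For (i)$\Leftrightarrow$(iii) I would move between a scaling function and a global slice using the flow of $\D$. Given (i), set $\tau:=\log|\rho|$; then $\D(\tau)=1$, so $\Sigma:=\tau^{-1}(0)=\{|\rho|=1\}$ is a hypersurface everywhere transverse to $\D$, and completeness of the flow together with freeness and properness ensures that each orbit meets $\Sigma$ exactly once (since $\tau\circ\phi_s=\tau+s$ sweeps $\R$ monotonically), whence $\pi|_\Sigma:\Sigma\to C$ is a diffeomorphism and $\iota:=(\pi|_\Sigma)^{-1}$ is the desired embedding. Conversely, given a slice $\iota$, each $m$ equals $\phi_{s(m)}(\iota(\pi(m)))$ for a unique flow time $s(m)$, and $\rho:=e^{s}$ satisfies $L_\D\rho=\rho$ with $\rho>0$, giving (i).

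I expect the two algebraic equivalences to be routine once the identities for $\lambda$ are in place; the delicate point will be the global existence of the cross-section in (i)$\Rightarrow$(iii), where the local transverse slices of Proposition~\ref{prop:quotient} must be upgraded to a single slice meeting every orbit exactly once, and this is precisely where completeness of the flow and properness of the action are indispensable. It is also worth flagging that (ii) is exactly the coorientability of $\mathscr{D}$, so the proposition recasts that topological property as the existence either of a global scaling function or of a global slice.
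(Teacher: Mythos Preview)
Your proof is correct and follows the same key relations as the paper's (very terse) argument---namely $\pi^*\eta=\lambda/\rho$ for (i)$\Leftrightarrow$(ii) and the slice $\{\rho=1\}$ for the link between (i) and (iii)---with the only organizational difference being that the paper closes the cycle via (ii)$\Leftrightarrow$(iii) through $\eta=\iota^*\lambda$, whereas you go directly (iii)$\Rightarrow$(i) by extracting the flow-time function and setting $\rho=e^s$. Your use of $|\rho|=1$ rather than $\rho=1$, and your explicit invocation of completeness and properness to ensure each orbit meets the slice exactly once, are details the paper leaves implicit.
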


	\begin{proof}
	The equivalence of (i) and (ii) is through the relation $\pi^*\eta = \lambda/\rho$. 
	(ii) and (iii) is through $\iota^*\lambda = \eta$. 
	(iii) and (i) is through $\iota(x) = m$ s.t. $\rho(m) = 1$.
	\end{proof}
	
	\begin{example}[Kepler scaling functions]\label{ex:Kep_scal}\rm    
	Consider the Kepler problem from Example~\ref{ex:Kepler}. 
	Here there are a number of natural choices for scaling functions available, for example any of:
    \[ |q|^{1/2} ,~~ \frac{1}{|p|} ,~~ p\cdot iq, ~~ p\cdot q \]
    are scaling functions of $\D_K = 2q\cdot\del_q - p\cdot \del_p$ (a degree $-2$ scaling symmetry of the Kepler Hamiltonian, 
    $H_K = \frac{|p|^2}{2} - \frac{1}{|q|}$). Observe that $p\cdot iq$, 
    is the angular momentum and $p\cdot q$ is (half) the rate of change of the moment of inertia, $q\cdot q$. The quotient, $C = M/\D$, 
    may be identified with $S^1\times\C$.
	\end{example}

The items of Proposition~\ref{prop:equivalence} may fail to exist globally. 
Indeed they are equivalent to the symplectification of $C$ being a trivial $\R^\times$ bundle, with connected components, 
$C_\pm$, symplectomorphic to $M$. 
In what follows, we will simply assume that there is a global -- nowhere vanishing on $M$ -- scaling function, 
or if the reader prefers that we are working locally over the set $\rho\ne 0$, where the scaling function is non-vanishing. 
Essentially, the {\em choice} of such a $\rho$ leads to explicit coordinate expressions, 
see Eqs.~\eqref{eq:contCoords}--\eqref{eq:LCHeq2},
as well as the following relation to contact Hamiltonian flows, generalizing
Arnold's description of contact Hamiltonian vector fields 
(see Proposition~\ref{prop:arnoldSHVSCH} and Corollary~\ref{cor:D1} below).

	\begin{theorem}[Contact reduction by scaling symmetries: general case]\label{th:CHS}\rm
Let $\rho : M\to \R^\times$ be a global scaling function, with corresponding contact form $\pi^*\eta = \lambda/\rho$ on $C$. With $\Sigma_0:=\pi(H=0)$,
	\begin{itemize}
	\item[(i)] the contact Hamiltonian $\pi^*\mathscr{H}_0 := -H/\rho^\Lambda$ has contact Hamiltonian vector field spanning $\ell$ on $\Sigma_0$,
	\item[(ii)] the contact Hamiltonian $\mathscr{H} :=-|\mathscr{H}_0|^{1/\Lambda}$ has contact Hamiltonian vector field spanning $\ell$ on $C\backslash \Sigma_0$.
	\end{itemize}
	We call this reduction a \emph{contact reduction by scaling symmetries}, or simply a \emph{contact reduction}.
	\end{theorem}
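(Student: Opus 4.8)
The plan is to realize both contact Hamiltonian vector fields as projections of genuine symplectic Hamiltonian vector fields on $M$, using the homogeneous lift dictated by the scaling function $\rho$, and then to compare these lifts with $X_H$ directly. Concretely, a function $\mathscr K$ on $C$ admits a canonical degree-one homogeneous lift to $M$, namely $K := \rho\,\pi^*\mathscr K$, which satisfies $L_{\D}K = K$ since $L_{\D}\rho = \rho$ and $\pi^*\mathscr K$ is $\D$-invariant. I would then use that such a $K$ has $[\D,X_K]=0$ (the degree-one case of the identity behind Eq.~\eqref{eq:scalingXH}, obtained by the same Lie-derivative computation with $L_\D\omega=\omega$), so that $X_K$ is $\D$-invariant and descends to $\pi_*X_K$ on $C$. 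Under the exact symplectomorphism $\fe$ of Proposition~\ref{prop:quotient}(iii), which identifies the slice $\rho=1$ with the section cut out by the contact form $\pi^*\eta=\lambda/\rho$, this projection is precisely the contact Hamiltonian vector field of $\mathscr K$; this is the content of the Arnold correspondence (Proposition~\ref{prop:arnoldSHVSCH}). It then suffices to compute the lifts of $\mathscr H_0$ and $\mathscr H$ and show their Hamiltonian vector fields are proportional to $X_H$ on the relevant region, since proportionality of the lifts forces equality of the projected line fields with $\ell=\pi_*\mathrm{span}(X_H)$.

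For (i), I would compute the lift of $\mathscr H_0$ as $K_0=\rho\,\pi^*\mathscr H_0=-H\rho^{1-\Lambda}$, so that $dK_0=-\rho^{1-\Lambda}\,dH-(1-\Lambda)H\rho^{-\Lambda}\,d\rho$. On the zero level $\{H=0\}$ the second term drops out, leaving $dK_0=-\rho^{1-\Lambda}\,dH$; since $\rho$ is nowhere zero this yields $X_{K_0}=-\rho^{1-\Lambda}X_H$ there, hence $\mathrm{span}(X_{K_0})=\mathrm{span}(X_H)$ on $\{H=0\}$. Projecting gives that the contact Hamiltonian vector field of $\mathscr H_0$ spans $\ell$ on $\Sigma_0=\pi(H=0)$. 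As a sanity check I would note consistency with Proposition~\ref{prop:linefield}: $\mathscr H_0$ vanishes exactly on $\Sigma_0$, so its contact vector field is tangent to $\mathscr D$ there, matching $\ell|_{\Sigma_0}\subset\mathscr D$.

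For (ii), the corresponding lift is $K=\rho\,\pi^*\mathscr H=-\rho\,|\pi^*\mathscr H_0|^{1/\Lambda}=-|H|^{1/\Lambda}$ (working over a component on which $\rho>0$). The crucial observation is that here $K=g(H)$ is a function of $H$ alone, with $g(s)=-|s|^{1/\Lambda}$ and $g'(s)\ne 0$ for $s\ne 0$, so that $dK=g'(H)\,dH$ and therefore $X_K=g'(H)X_H$ on all of $C\setminus\Sigma_0$. Projecting shows the contact Hamiltonian vector field of $\mathscr H$ spans $\ell$ away from $\Sigma_0$. This same computation explains the dichotomy in the statement: off $\{H=0\}$ the extra $d\rho$-term in $dK_0$ no longer vanishes, so $\mathscr H_0$ fails to reproduce $\ell$ and must be replaced by the homogenized Hamiltonian $\mathscr H$.

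The hard part will be purely at the level of conventions and global regularity rather than geometry. I would need to fix once and for all a sign convention both for the contact Hamiltonian vector field and for $i_{X}\omega=\pm dH$, so that the Arnold correspondence of Proposition~\ref{prop:arnoldSHVSCH} reads off with the correct sign, and I would need to control the branch of $\rho^{\Lambda}$ and of $|\cdot|^{1/\Lambda}$ when $\Lambda$ is non-integer or the arguments change sign. Since all of these ambiguities only rescale the lifted vector fields by nowhere-vanishing factors, they are immaterial for the spanning claims; the clean way to dispatch them is to restrict to the components $C_\pm$ of Proposition~\ref{prop:equivalence} (equivalently to $\rho>0$) to write the formulas above, and then to observe that the conclusion is invariant under the remaining choices.
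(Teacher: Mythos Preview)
Your proposal is correct and follows essentially the same approach as the paper's own proof: both arguments invoke the Arnold correspondence (Proposition~\ref{prop:arnoldSHVSCH}) to reduce the claim to exhibiting degree-one Hamiltonians on $M$ whose flows are proportional to $X_H$ on the appropriate region, and both identify these as (up to sign) $\rho^{1-\Lambda}H$ for~(i) and $|H|^{1/\Lambda}$ for~(ii). Your version simply makes explicit the differential computations (e.g.\ $dK_0=-\rho^{1-\Lambda}dH-(1-\Lambda)H\rho^{-\Lambda}d\rho$) that the paper leaves implicit in the phrase ``the Hamiltonian vector fields are proportional'', and is a bit more careful about the sign and branch bookkeeping than the paper.
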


	\begin{proof}
    Recall from Proposition~\ref{prop:arnoldSHVSCH}
	that Hamiltonian flows of degree one functions $F$ on $M$ 
	($L_{\D}F = F$) commute with the scaling action of $\D$ 
	and so induce  contact Hamiltonian flows associated with $\eta$. 
	Item (i) follows by considering the degree one function $H' :=\rho^{1-\Lambda}H$
	and 
	item (ii) with the degree one function $|H|^{1/\Lambda}$. 
	Note that the Hamiltonian vector fields, $X_H$ and $X_{H'}$, are proportional over $H = 0$, while those of $X_H$ and $X_{|H|^{1/\Lambda}}$ are proportional over $H\ne 0$.
	\end{proof}

Before proceeding, some further comments are in order:

	\begin{remark}\rm
	Case (i) in Theorem~\ref{th:CHS} is exactly the case considered in~\cite{sloan2018dynamical}.
	In many relevant physical examples the scaling function $\rho$ can be so chosen so that
	it exists globally on $M$ and is physically interpreted as the overall scale of the system, thus being
	an irrelevant (unmeasurable) degree of freedom for intrinsic observers. Therefore 
	the system can be reduced by eliminating this overall scale and considering only the relational (shape) degrees of freedom 
	(see also~\cite{sloan2021scale,gryb2021scale}).
	 
	Case (ii) in Theorem~\ref{th:CHS}
	is actually somewhat well-known. 
	Namely, there is always over $C\backslash {\Sigma_0}$ a contact 1-form and an embedding into $M$ 	
	associated to the scaling function $\rho = |H|^{1/\Lambda}$ 
	(the associated embedding being given by restricting to the energy surfaces $H = \pm 1$). 
	Then $\eta = \lambda|_{H = \pm 1}$ and $\ell$ is defined by $i_{\ell}d\eta = 0$ 
	(i.e.~the dynamics may be parametrized as the Reeb flow of $\lambda|_{H = \pm 1}$), see~\cite{MR2397738,bravetti2020invariant}. 
	\end{remark}

	\begin{remark}\label{remark:restr}\rm
	In practice, the reduction may be carried out by embedding $C$ as a hypersurface 
	$\Sigma := \{\rho = 1\}$ transverse to $\D$. 
	Then $\eta  = \lambda|_{\Sigma}$ and $\mathscr{H}_0 = -H|_{\Sigma}$. 
	An equivalent way to state Theorem~\ref{th:CHS} (without cases), 
	is using Definition~\ref{def:LambdaHvf} below. 
	Then the {\em $\Lambda$-Hamiltonian vector field} of $\H_0$ spans $\ell$ on $C$ (see Corollary~\ref{cor:coords} below).
	\end{remark}

By Theorem~\ref{th:CHS}, one may treat the general reduction of the symplectic Hamiltonian system to a {\em contact system} 
locally as two subcases: 
with the degree one $|H|^{1/\Lambda}$ over $C\backslash \Sigma_0$, or $\rho^{1-\Lambda}H$ over $\Sigma_0$, for some scaling function $\rho$.
It is important to stress that for $\Lambda\neq 1$ 
none of these two cases can be extended as contact vector fields (whose flows preserve $\mathscr{D}$) to the other region. 
While this is obvious for 
$|H|^{1/\Lambda}$ over $\Sigma_0$, it is not so immediate for $\rho^{1-\Lambda}H$ over $C\backslash \Sigma_0$, and this will be explained in
more detail shortly (see Corollary~\ref{cor:coords}).
Clearly there is a very important special situation in which the two subcases coincide, i.e.~when $\Lambda=1$.
In this situation, the two contact Hamiltonian flows in Theorem~\ref{th:CHS} coincide and we obtain the following reduction: 
	\begin{corollary}[Contact reduction by scaling symmetries: degree one case]\label{cor:D1}\rm
	Let $(M, \omega, H)$ be a Hamiltonian system admitting a scaling symmetry $\D$ of degree one. 
	Then:
	\begin{itemize}
	\item[i)] $X_H$ defines a vector field $X = \pi_*X_H$ on $C$,
	\item[ii)] for a scaling function $\rho$, and $\pi^*\eta = \lambda/\rho$, $X$ is the contact Hamiltonian vector field of 
	$\pi^*\mathscr{H} = - H/\rho$.
	\end{itemize}
	\end{corollary}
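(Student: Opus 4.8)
The plan is to exploit the defining feature of the degree-one case, namely that $\Lambda = 1$ forces $H$ itself to be a degree-one function for $\D$, so that the two sub-cases of Theorem~\ref{th:CHS} collapse into one and no power or root of $H$ (with its attendant loss of smoothness across $\Sigma_0$) is ever needed. I would treat the two items separately: first show that $X_H$ descends to an honest vector field on $C$, then identify this descended field as the advertised contact Hamiltonian vector field.

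For item (i), I would read off from Eq.~\eqref{eq:scalingXH} that $\Lambda = 1$ gives $[\D, X_H] = (\Lambda - 1) X_H = 0$. Thus $X_H$ is $\D$-invariant: its flow commutes with the flow of $\D$, so $X_H$ is projectable along the submersion $\pi : M \to C$ and $X := \pi_* X_H$ is a well-defined (single-valued) vector field on $C$. This is the precise point where the degree-one case differs from the general one of Proposition~\ref{prop:linefield}: there $L_{\D} X_H = (\Lambda - 1) X_H \ne 0$ rescales $X_H$ along the orbits of $\D$, so only the line field $\text{span}(X_H)$ descends, whereas here the magnitude descends as well.

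For item (ii), the key observation is that $L_{\D} H = \Lambda H = H$, i.e.\ $H$ is a degree-one function (a function $F$ with $L_{\D} F = F$). I would then invoke Proposition~\ref{prop:arnoldSHVSCH}, which states that the Hamiltonian flow of a degree-one function commutes with the $\D$-action and descends to a contact Hamiltonian flow for the contact form $\eta$. Applying this to $H$ identifies $X = \pi_* X_H$ as the contact Hamiltonian vector field of the contact Hamiltonian $\mathscr{H}$ with $\pi^* \mathscr{H} = -H/\rho$; concretely, working on the slice $\Sigma = \{\rho = 1\}$ of Proposition~\ref{prop:equivalence}(iii), where $\eta = \lambda|_{\Sigma}$ and $\mathscr{H} = -H|_{\Sigma}$ by Remark~\ref{remark:restr}, I would verify that the contact Hamilton equations reproduce the restriction of Hamilton's equations for $H$. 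The sign is fixed by the paper's conventions and is consistent with specializing $\mathscr{H}_0 = -H/\rho^{\Lambda}$ of Theorem~\ref{th:CHS}(i) to $\Lambda = 1$.

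The step I expect to be the main (though mild) obstacle is global consistency across $\Sigma_0 = \pi(H = 0)$. In the general case one is forced to patch $\rho^{1-\Lambda} H$ on $\Sigma_0$ with $|H|^{1/\Lambda}$ on $C \setminus \Sigma_0$, and the latter develops a corner at $H = 0$, so the reduced object is only a line field whose direction, but not magnitude, is defined on $\Sigma_0$. The content of the corollary is that for $\Lambda = 1$ this difficulty disappears: no root is taken, $-H/\rho$ is smooth everywhere on $C$, and the vector field $\pi_* X_H$ from item (i) is genuinely global. I would therefore make sure the descent argument of item (i) is valid across all of $C$, which requires only that $\rho$ be nowhere vanishing — exactly the standing assumption on the global scaling function.
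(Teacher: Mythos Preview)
Your proposal is correct and follows essentially the same route as the paper: the corollary is presented without separate proof, as an immediate specialization of Theorem~\ref{th:CHS} (whose proof in turn rests on Proposition~\ref{prop:arnoldSHVSCH}), and your argument spells out precisely this specialization --- when $\Lambda = 1$ the auxiliary degree-one function $H' = \rho^{1-\Lambda}H$ collapses to $H$ itself, the commutator $[\D, X_H]$ vanishes by Eq.~\eqref{eq:scalingXH}, and the two cases of the theorem coincide across $\Sigma_0$.
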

By virtue of the above Corollary, in the case $\Lambda=1$ one obtains a `full' contact reduction, 
meaning that the reduced contact system is completely determined
by the original symplectic one (this is precisely the content of Proposition~\ref{prop:arnoldSHVSCH}, 
which we have recovered as a particular case).
	Note also that in this case
	$H$ itself may be chosen as a scaling function giving the reduced dynamics as a 
	Reeb flow of $\pi^*\eta = \lambda/ H$ on $C\backslash \Sigma_0$.
	
In the general case, the line field we have been considering may in fact be described more explicitly with the use of a scaling function,
as detailed in the following
	\begin{corollary}[Dependence on the scaling function]\label{cor:coords}\rm
	Let $(M, \omega, H)$ be a Hamiltonian system admitting a scaling symmetry $\D$ of degree $\Lambda$ and $\rho$ a scaling function for $\D$.
	Then the line field $\ell$ on $M/\D$ is spanned by
	\begin{equation}\label{eq:lineSpan}
	    X := X_{\mathscr{H}_0} + (1-\Lambda)\mathscr{H}_0 \mathscr{R}\,,
	\end{equation}
	where $X_{\mathscr{H}_0}$ is the contact Hamiltonian vector field of $\pi^*\mathscr{H}_0 = - H/\rho^\Lambda$ 
	with respect to $\pi^*\eta = \lambda/\rho$ and $\mathscr{R}$ is the Reeb vector field of $\eta$ on $C$. 
	Moreover, for any other choice of scaling function, $\tilde\rho$, we have
	\[ \tilde X = \left(\frac{\tilde\rho}{\rho}\right)^{1 - \Lambda} X.\]
	\end{corollary}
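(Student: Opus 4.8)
The plan is to prove the two assertions in turn -- first that $X$ spans $\ell$ on all of $C$, and then the transformation law under change of scaling function -- and the engine for both is a single chain rule for contact Hamiltonian vector fields, which I would establish first. Working in the conventions of Appendix~\ref{app:ham_cont} (so that $\eta(X_{\mathscr F}) = -\mathscr F$ and $i_{X_{\mathscr F}}d\eta = d\mathscr F - \mathscr{R}(\mathscr F)\,\eta$, with $\ker d\eta = \mathrm{span}(\mathscr{R})$), a short computation gives, for any smooth $G:\R\to\R$,
\[
X_{G(\mathscr{H}_0)} = G'(\mathscr{H}_0)\,X_{\mathscr{H}_0} + \bigl(G'(\mathscr{H}_0)\,\mathscr{H}_0 - G(\mathscr{H}_0)\bigr)\mathscr{R}.
\]
Indeed $i_{X_{G(\mathscr{H}_0)}}d\eta = G'(\mathscr{H}_0)\,i_{X_{\mathscr{H}_0}}d\eta$ forces $X_{G(\mathscr{H}_0)} - G'(\mathscr{H}_0)X_{\mathscr{H}_0}$ to be a Reeb multiple, and pairing with $\eta$ pins down the coefficient. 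This is the only genuinely new identity needed; everything else is bookkeeping.

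For the spanning statement I would split $C$ along Theorem~\ref{th:CHS}. On $\Sigma_0$ we have $\mathscr{H}_0 = 0$, so $X = X_{\mathscr{H}_0}$, which spans $\ell$ by item (i). On $C\backslash\Sigma_0$ I apply the chain rule with $G(s) = -|s|^{1/\Lambda}$, i.e. to the contact Hamiltonian $\mathscr{H} = -|\mathscr{H}_0|^{1/\Lambda}$ of item (ii). Substituting $G'(s) = -\tfrac1\Lambda|s|^{1/\Lambda - 1}$, the Reeb coefficient collapses to exactly $(1-\Lambda)\mathscr{H}_0$, giving
\[
X_{\mathscr H} = -\tfrac1\Lambda\,|\mathscr{H}_0|^{1/\Lambda - 1}\bigl(X_{\mathscr{H}_0} + (1-\Lambda)\mathscr{H}_0\,\mathscr{R}\bigr) = -\tfrac1\Lambda\,|\mathscr{H}_0|^{1/\Lambda - 1}\,X.
\]
Since $X_{\mathscr H}$ spans $\ell$ on $C\backslash\Sigma_0$ by item (ii) and the scalar factor is nonzero there, $X$ spans $\ell$ on $C\backslash\Sigma_0$. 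As $X$ is a globally smooth field agreeing with a spanning field of $\ell$ on $C\backslash\Sigma_0$ and on $\Sigma_0$, it spans $\ell$ everywhere; this is precisely the claim that $X$ is the $\Lambda$-Hamiltonian vector field of $\mathscr{H}_0$ (cf. Remark~\ref{remark:restr}).

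For the transformation law, the key observation is that the field $X_{\mathscr H}$ appearing above does not depend on the choice of scaling function. By the degree-one mechanism of Proposition~\ref{prop:arnoldSHVSCH} we have $X_{\mathscr H} = \pi_* X_{|H|^{1/\Lambda}}$, and neither the degree-one function $|H|^{1/\Lambda}$ nor the $\D$-quotient $\pi$ involves $\rho$; hence replacing $\rho$ by $\tilde\rho$ leaves this field unchanged. I would then read off the full $\rho$-dependence of $X$ from the relation $X = -\Lambda\,|\mathscr{H}_0|^{1-1/\Lambda}X_{\mathscr H}$. Since $\pi^*\mathscr{H}_0 = -H/\rho^\Lambda$ and $\pi^*\tilde{\mathscr{H}}_0 = -H/\tilde\rho^\Lambda$ yield $\tilde{\mathscr{H}}_0/\mathscr{H}_0 = (\rho/\tilde\rho)^\Lambda$ as a function on $C$ (note $\tilde\rho/\rho$ is $\D$-invariant, hence descends and is nonvanishing), the two spanning fields satisfy
\[
\frac{\tilde X}{X} = \left(\frac{\tilde{\mathscr{H}}_0}{\mathscr{H}_0}\right)^{1 - 1/\Lambda} = \left(\frac{\rho}{\tilde\rho}\right)^{\Lambda - 1} = \left(\frac{\tilde\rho}{\rho}\right)^{1 - \Lambda}
\]
on $C\backslash\Sigma_0$, which extends to all of $C$ by continuity.

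The main obstacle I anticipate is bookkeeping rather than conceptual. One must fix the sign conventions for $X_{\mathscr F}$ and $\mathscr{R}$ consistently with the appendix so that the Reeb coefficient in the chain rule reproduces precisely $(1-\Lambda)\mathscr{H}_0$ (the convention $\eta(X_{\mathscr F}) = -\mathscr F$ is forced by the stated form of $X$), and one must cope with the fact that $|\mathscr{H}_0|^{1/\Lambda - 1}$ is singular along $\Sigma_0$. I would circumvent the latter by deriving both the proportionality $X \propto X_{\mathscr H}$ and the transformation law only on $C\backslash\Sigma_0$, and then invoking smoothness of $X$ and $\tilde X$ together with density of $C\backslash\Sigma_0$ to extend the conclusions across $\Sigma_0$.
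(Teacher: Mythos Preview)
Your argument is correct but takes a genuinely different route from the paper's. The paper works upstairs on $M$: it observes that $\rho^{1-\Lambda}X_H$ is $\D$-invariant (from $[\D,X_H]=(\Lambda-1)X_H$ and $L_\D\rho=\rho$), so $X:=\pi_*(\rho^{1-\Lambda}X_H)$ is well-defined and spans $\ell=\pi_*\mathrm{span}(X_H)$ by construction --- no case splitting needed. The decomposition~\eqref{eq:lineSpan} then follows from the \emph{symplectic} chain rule applied to $H'=\rho^{1-\Lambda}H$: one has $X_{H'}=(1-\Lambda)\rho^{-\Lambda}H\,X_\rho+\rho^{1-\Lambda}X_H$, and pushing forward uses $\pi_*X_{H'}=X_{\mathscr{H}_0}$ together with $\pi_*X_\rho=\mathscr{R}$ (the latter because $\rho$ is degree one and its associated contact Hamiltonian $-\rho/\rho=-1$ has contact vector field $\mathscr{R}$). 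The transformation law is then the single line $\tilde X=\pi_*(\tilde\rho^{1-\Lambda}X_H)=\sigma^{1-\Lambda}X$ with $\pi^*\sigma=\tilde\rho/\rho$, valid on all of $C$ without any density argument.

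Your approach stays on $C$, uses a contact chain rule, and invokes both cases of Theorem~\ref{th:CHS}; this makes the identification of $X$ with the $\Lambda$-Hamiltonian vector field of $\mathscr{H}_0$ (Remark~\ref{remark:restr}, Definition~\ref{def:LambdaHvf}) more explicit, but at the cost of routing the transformation law through the singular auxiliary field $X_{\mathscr H}$ and a continuity extension across $\Sigma_0$. One bookkeeping point: your formula $G'(s)=-\tfrac{1}{\Lambda}|s|^{1/\Lambda-1}$ is missing a factor of $\mathrm{sgn}(s)$, so your displayed identity for $X_{\mathscr H}$ is off by a sign on $\{\mathscr{H}_0<0\}$; this does not affect any of your conclusions.
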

	
	\begin{proof} Given a scaling function, $\rho$, then by Eq.~\eqref{eq:scalingXH}, the vector field $\rho^{1-\Lambda}X_H$
	is scale invariant and projects to a vector field, 
	\[X = \pi_* \rho^{1-\Lambda} X_H,\]
	on $C$ spanning $\ell$. Set $H' := \rho^{1-\Lambda}H$, a degree one Hamiltonian on $M$ 
	with $\pi_*X_{H'} = X_{\mathscr{H}_0}$ the contact Hamiltonian vector field of 
	$\pi^*\mathscr{H}_0 =  - H/\rho^\Lambda$ with respect to $\pi^*\eta = \lambda/\rho$. 
	Then
	\[ dH' = (1 - \Lambda) \rho^{-\Lambda}H~d\rho + \rho^{1-\Lambda}~dH\]
	so that
	\be\label{eq:XHVsXH'} 
	X_{H'} = (1 - \Lambda) \rho^{-\Lambda}H~X_\rho + \rho^{1-\Lambda}~X_H\,.
	\ee
	Note that by Corollary~\ref{cor:D1}, the Hamiltonian vector field $X_\rho$ of $\rho$ projects 
	to the Reeb vector field $\mathscr{R} = \pi_*X_\rho$ of $\eta$ on $C$. 
	Applying $\pi_*$ to the last equation we have Eq.~\eqref{eq:lineSpan}.
	
	For the last relation, note that for two scaling functions, $\rho$ and $\tilde\rho$, we have that $\tilde\rho/\rho$ is scale invariant, 
	so there is a function $\sigma$ on $C$ through $\pi^*\sigma = \tilde\rho/\rho$. Then
	\[ \tilde X = \pi_* \tilde\rho^{1-\Lambda}X_H = \sigma^{1-\Lambda}X.\] 

	
	
	\end{proof}
	
	\begin{remark}\rm\label{rem:coordinates}
	In Darboux coordinates $\eta = dS - p_a dq^a$ on $C$, Eq.~\eqref{eq:lineSpan} reads
	    \begin{equation}\label{eq:contCoords}
	         (q^a)' = \del_{p_a}\mathscr{H}_0 , \qquad
         (p_a)' =  - \del_{q^a}\mathscr{H}_0 {-} p_a\del_{S}\mathscr{H}_0 , \qquad
         S' =  p_a\del_{p_a}\mathscr{H}_0-\Lambda \mathscr{H}_0\,. 
	    \end{equation}
	These equations may also be derived directly from Hamilton's equations for $H$ by following 
	Remark~\ref{remark:restr} or the comment after Proposition~\ref{prop:arnoldSHVSCH}. 
	Namely, given a scaling function $\rho$, we take $\Sigma = \{ \rho = 1\}$. 
	Then a system of contact coordinates, $\eta = dS - p_a dq^a$ on $C$, 
	may be extended to symplectic coordinates 
	$P_0 = \rho , Q_0 = S, P = - \rho p, Q = q$ on $M$ with $\lambda = \rho\eta = P_0dQ_0 + P\cdot dQ$. 
	Then, we have that
	\[ H(Q_0, Q, P_0, P) = -P_0^\Lambda\mathscr{H}_0(Q_0, Q, -P/P_0)\]
	since $H/\rho^\Lambda = - \mathscr{H}_0$.
	By chain rule, we may write Hamilton's equations of motion as:
	    \begin{eqnarray}
        \rho^{1 - \Lambda}\dot S &=&  p_a\del_{p_a}\mathscr{H}_0-\Lambda \mathscr{H}_0  , \qquad 
        \rho^{1 -\Lambda}\dot \rho =  \rho \del_S\mathscr{H}_0,\label{eq:LCHeq1}\\
        \rho^{1 -\Lambda}\dot q^a &=& \del_{p_a}\mathscr{H}_0 , \qquad\qquad\qquad\quad
        \rho^{1 -\Lambda}\dot p_a =   - \del_{q^a}\mathscr{H}_0 {-} p_a\del_{S}\mathscr{H}_0 \,, \label{eq:LCHeq2}
    \end{eqnarray}
    which project onto $\Sigma = \{ \rho = 1\}$ as Eqs.~\eqref{eq:contCoords}. 
    It is also evident from Eqs.~\eqref{eq:LCHeq1},~\eqref{eq:LCHeq2} of this last computation that, 
    under the reparametrization $\rho^{1-\Lambda}d\tau = dt$, 
    the scale-reduced equations of motion~\eqref{eq:contCoords} contain a `blow up',  
    of Hamilton's equations of motion to $\rho = 0$, 
    as one would expect, compare with~\cite{MontgomeryBlowup} for $n$-body problems.
    \end{remark}

    \begin{remark}\rm
    The flow of the vector field $X$ in Eq.~\eqref{eq:lineSpan}, 
    induced by a scaling function $\rho$, does not in general preserve $\mathscr{D}$. 
    However restricted to the invariant set $\Sigma_0 = \{\mathscr{H}_0 = 0\}$ it does, 
    being a contact Hamiltonian vector field (case~(i) of Theorem~\ref{th:CHS}). 
    In general, one may rescale $X$ away from this set so that its flow does preserve $\mathscr{D}$. 
    Namely, for the scaling function $\tilde\rho = |H|^{1/\Lambda}$, 
    we have $\mathscr{\tilde H} = - 1$, and $\tilde X = \Lambda \mathscr{\tilde R}$ is proportional to the 
    Reeb vector field of $\tilde\eta$ (case~(ii) of Theorem~\ref{th:CHS}) 
    so that $|\mathscr{H}_0|^{1 - 1/\Lambda} X = \tilde X$ preserves $\mathscr{D}$.
    \end{remark}

	Another interesting aspect about item (i) in Theorem~\ref{th:CHS} is the fact that it provides a direct connection with the physics
	of the problem. Indeed, the function $\rho$ is usually connected with a global scale within the physical description 
	(hence the `scaling function' name), 
	and therefore
	by
	using it in order to reduce the dynamics we automatically obtain a description of the same physical problem in terms of 
	scale-invariant
	functions only.
	
	While it may seem that the `ambiguity' in Corollary~\ref{cor:coords} due to one's choice of scaling function is 
	a defect of the general contact reduction, we posit instead that this `freedom' to choose a scaling function to describe 
	the projected trajectories is in fact an asset of the theory, allowing one to highlight certain aspects of the original 
	systems dynamics by various choices of scaling function.
	We illustrate the reduction procedure with the Kepler problem in the following example.

	\begin{example}[Contact-reduced Kepler]\label{ex:scaledK}\rm
    Consider the Kepler problem from Example~\ref{ex:Kepler}. 
    To illustrate the contact reduction process, one may first consider polar coordinates:
    \[ q = re^{i\theta} ,~~ p = \left(p_r + i \frac{p_\theta}{r}\right) e^{i\theta},\]
    with scaling functions (Example~\ref{ex:Kep_scal}) that we denote:
    \[ \rho := |q|^{1/2} = r^{1/2} ,~~ J:= p\cdot q = rp_r ,~~ G:= p\cdot iq = p_\theta.\]
    In the coordinates $(\rho, \theta, J, G)$ on $M = T^*(\C\backslash 0)$, we have 
    \begin{equation}\label{eq:bSymplectic}
        \omega = \frac{2 dJ\wedge d\rho}{\rho} + dG\wedge d\theta,
    \end{equation}
    and
    \begin{equation}\label{eq:KepContRed}
        \D_K = \rho\del_\rho + J\del_J + G\del_G ,\qquad \lambda = i_{\D_K}\omega = G~d\theta - 2J~d\log\frac{J}{\rho} ,\qquad 
        H_K = \frac{J^2 + G^2}{2\rho^4} - \frac{1}{\rho^2}\,.
    \end{equation}
    The functions $J/\rho$, $G/\rho$ and $\theta$ are $\D_K$ invariant, determining coordinates $(\tilde J, \tilde G, \theta)$ on the quotient $M/\D_K$ by:
    \[ \pi^*\tilde J = J/\rho , \qquad \pi^*\tilde G = G/\rho, \qquad\pi^*\theta = \theta\]
    where $\pi: M\to M/\D_K = C$. 
    According to Theorem~\ref{th:CHS}, the scale-reduced orbits may be described on $C$ using:
    \begin{equation}\label{eq:KepRedCoords}
        \eta = \tilde G~d\theta - 2d\tilde J ,\qquad \mathscr{H}_0 = 1 - \frac{\tilde J^2 + \tilde G^2}{2}
    \end{equation}
    with $\pi^*\eta = \lambda/\rho$ and $\pi^*\mathscr{H}_0 = - \rho^2 H_K$. 
    More explicitly, by Corollary~\ref{cor:coords} or Remark~\ref{rem:coordinates} with $\Lambda = -2$, 
    the scale-reduced equations of motion may be written:
    \begin{equation}\label{eq:KepRedEoM}
        \tilde J' = \frac{\tilde G^2}{2} - \mathscr{H}_0 ,\qquad \tilde G' = - \frac{\tilde G\tilde J}{2} ,\qquad\theta' = \tilde G.
    \end{equation}
   
	\end{example}

\begin{remark}\rm
    The scaling symmetry, $\D_K$, of the Kepler problem and its corresponding scale-invariant functions, 
    e.g.~$\tilde J, \tilde G, \mathscr{H}_0$, have long been known and used in celestial mechanics, 
    in particular their analogues for $n$-body problems. 
    See for example Section 2.3 of Chenciner's~\cite{Chenciner1997alinfini}, where relations of the scale reduction to McGehee blow-up are explained. 
    On the other hand,  the contact structure associated to these contact reductions has been emphasized only more recently~\cite{sloan2018dynamical}.
    Observe that the contact reduction here is closely related to the structure of `$b$-manifolds' in~\cite{MirandaetalSingSymp}. 
    In particular compare eq.~\eqref{eq:bSymplectic} here to the Darboux normal form (Theorem 2 of~\cite{MirandaetalSingSymp}) 
    of a `$b$-symplectic form' and Example 8.2 of~\cite{miranda2018geometry} to our construction here of a scale-reduced contact form.
\end{remark}

An important property of the reduction by Theorem~\ref{th:CHS} is its relation to certain blow-ups as mentioned at the end of 
Remark~\ref{rem:coordinates}. We illustrate this relation with the scale-reduced Kepler problem.


    \begin{example}[Kepler blow-up]\label{ex:Kblowup}\rm We continue from the previous example 
    with the scale-reduced Kepler problem using the global scaling function $\rho = |q|^{1/2}$. 
    According to Eqs.~\eqref{eq:LCHeq1} and~\eqref{eq:LCHeq2} with $\Lambda = -2$, the scale-reduced equations 
    of motion on $\pi^*\mathscr{H}_0 = -\rho^2 H_K = 0$, represent a blow-up of the dynamics at $\rho = 0$ (collision).
    
    One may apply Proposition~\ref{prop:linefield} to determine the scale-reduced orbits on $\mathscr{H}_0 = 0$. 
    From Eq.~\eqref{eq:KepRedCoords}, we have:
    \[\mathscr{D} = \ker\eta = \text{span} \{ 2\del_\theta + \tilde G \del_{\tilde J} , \del_{\tilde G}\}\]
    while $\Sigma_0 = \{\mathscr{H}_0 = 0\}$ is a torus, parametrized by the angle $\theta\in S^1$ and the circle $\tilde J^2 + \tilde G^2 = 2$. So:
    \[ T\Sigma_0 = \text{span}\{ \del_\theta, \tilde G \del_{\tilde J} - \tilde J \del_{\tilde G}\}.\]
    The collision orbits of the Kepler problem are the homothetic motions. 
    They tend in forwards or backwards time to the fixed points of eqs.~\eqref{eq:KepRedEoM}, $\tilde J = \pm \sqrt{2}, \tilde G = 0$.
    Observe that $\mathscr{D}$ is tangent to $\Sigma_0$ exactly at these fixed points. 
    Away from these points, by Proposition~\ref{prop:linefield}, the projected Kepler orbits on $\Sigma_0$ are integral curves of:
    \[ \ell|_{\Sigma_0} = \mathscr{D}\cap T\Sigma_0 = \text{span}\{ 2\del_\theta + \tilde G \del_{\tilde J} - \tilde J \del_{\tilde G}\}.\]
    Taking the ($\D_K$-invariant) angle $\fe$ by $p = |p|e^{i\fe}$, we have the line field:
    \[\ell|_{\Sigma_0} = \text{span}\{  \del_\fe + 2\del_\theta \}.\]

    The scale-reduced dynamics on this torus, $\Sigma_0$, are a blow-up at collision $\rho = |q|^{1/2} = 0$ 
    of the equations of motion for the Kepler problem. We recover the well-known `elastic bounce' regularization 
    and collision torus (see \S 1.3 of \cite{devaneySingCM}), although our path here has been rather different.
    \end{example}
    
    \begin{remark}\rm
    In Appendix~\ref{app:Kepler} we outline some more cases of contact reductions by scaling symmetries in celestial mechanics.
    \end{remark}


We close this section with an interesting comment about the relationship between Corollary~\ref{cor:D1} and
the contact reduction in~\cite{valcazar2018contact}, which in turn gives yet another proof of Arnold's Proposition~\ref{prop:arnoldSHVSCH}
in terms of a contact version of symplectic reduction. 
	\begin{remark}\label{rmk:deLeonreduction}\rm
Corollary~\ref{cor:D1} may be alternatively deduced as a special case of the contact reduction in~\cite{valcazar2018contact}.
The relevant case is for an $\R$ action on a contact manifold $(N, \eta)$ by {\em exact} 
contactomorphisms:
$L_X\eta = 0$, where $X\in\mathfrak{X}(N)$ generates the $\R$ action 
and where $\mathscr{H}$  is an $\R$-{\em invariant} contact Hamiltonian on $N$. 
Let $J_0 := (i_X\eta)\inv(0)$. 
Then $X, X_{\mathscr{H}}$ are tangent to $J_0$ and the quotient $P_0 := J_0/\R$ 
is a contact manifold with reduced contact dynamics defined through: 
\begin{center}
\begin{tikzcd}
& N \\
J_0 \arrow[ru, "\iota_0"] \arrow[dr, "\pi_0"] & \\
  & P_0\\
\end{tikzcd}
\[\qquad \pi_{0*}X_{\mathscr{H}} = X_{\mathscr{\bar H}} ~~{\text{ for }} 
~~\pi_0^*\mathscr{\bar H} = \iota_0^*\mathscr{H}, 
~~\pi_0^*\bar\eta = \iota_0^*\eta.\]
\end{center}

{Applying this construction, we obtain}
	\begin{proposition}\label{prop:contactreduction}\rm
	Consider a Hamiltonian system $(M, \omega, H)$ admitting $\D$ as a degree one scaling symmetry. For $\lambda = i_\D\omega$ and a global scaling function $\rho : M\to \R^\times$, let $N := \R\times M$ with 
	$\eta := e^{-t}(\lambda + d\rho) - dt$ where $t$ is the coordinate on $\R$.
	Then 
	\begin{itemize}
	\item[(i)]
	$N$ is a contact manifold and $\tilde \D = \del_t + \D$ generates an $\R$-action on $N$ by exact contactomorphisms of $\eta$,	
	\item[(ii)]
	the contact Hamiltonian $e^{-t}H$ is invariant under the flow of $\tilde \D$.
	\item[(iii)] The contact manifolds $C = M/\D$ and $P_0 = J_0/\tilde\D$ are contactomorphic and moreover the reduced contact dynamics 
	(of~\cite{valcazar2018contact}) on $P_0$ corresponds to the contact Hamiltonian dynamics on $C$ (of Theorem~\ref{th:CHS}).
	
	
	\end{itemize}
	\end{proposition}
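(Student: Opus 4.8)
The plan is to verify the three items in order, treating the construction on $N = \R\times M$ as a concrete instance of the de León–Valcázar contact reduction recalled in Remark \ref{rmk:deLeonreduction}. First I would check item (i) by direct computation. To see that $\eta = e^{-t}(\lambda + d\rho) - dt$ is a contact form, I would compute $d\eta = -e^{-t}dt\wedge(\lambda + d\rho) + e^{-t}d\lambda$ and note that since $d\lambda = \omega$ is symplectic on the $2n$-dimensional $M$, the top form $\eta\wedge(d\eta)^n$ is nowhere vanishing on the $(2n+1)$-dimensional $N$. For the exact-contactomorphism claim, I would use that $\tilde\D = \del_t + \D$ and compute $L_{\tilde\D}\eta$ via Cartan's formula. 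The key inputs are $L_\D\omega = \omega$ (so $L_\D\lambda = \lambda$, using $\lambda = i_\D\omega$ together with $i_\D d\lambda = i_\D\omega = \lambda$ and $d\,i_\D\lambda = d\,i_\D i_\D\omega = 0$), the scaling-function identity $L_\D\rho = \rho$ hence $L_\D d\rho = d\rho$, and the action of $\del_t$ producing the factor $-e^{-t}$ from differentiating $e^{-t}$. These should combine so that $L_{\tilde\D}\eta = 0$.

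For item (ii) I would again apply Cartan's formula to the function-like quantity $e^{-t}H$, computing $\tilde\D(e^{-t}H) = (\del_t + \D)(e^{-t}H) = -e^{-t}H + e^{-t}L_\D H$. Since $\D$ is a degree-one scaling symmetry, $L_\D H = \Lambda H = H$, and the two terms cancel, giving invariance. This is the step where the hypothesis $\Lambda = 1$ is essential, and I would flag that it is exactly what makes the single-function reduction of Corollary \ref{cor:D1} available rather than the two-case split of Theorem \ref{th:CHS}.

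The substantive step is item (iii), identifying the two reduced contact manifolds and matching their dynamics. First I would locate the level set $J_0 = (i_{\tilde\D}\eta)^{-1}(0)$: computing $i_{\tilde\D}\eta = e^{-t}(i_\D\lambda + i_\D d\rho) - 1 = e^{-t}(i_\D\lambda + \rho) - 1$, using $L_\D\rho = i_\D d\rho = \rho$, and $i_\D\lambda = i_\D i_\D\omega = 0$, so that $J_0 = \{e^{-t}\rho = 1\} = \{t = \log\rho\}$. This exhibits $J_0$ as a graph over $M$, hence diffeomorphic to $M$, and the residual $\tilde\D$-action on $J_0$ is conjugate to the $\D$-action on $M$; consequently $P_0 = J_0/\tilde\D \cong M/\D = C$ as manifolds. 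To promote this to a contactomorphism and match dynamics, I would pull back the reduced forms: restricting $\eta$ to $J_0$ (where $e^{-t}\rho = 1$, so $dt = d\log\rho = d\rho/\rho$) should yield, after the quotient, precisely the contact form $\pi^*\eta_C = \lambda/\rho$ of Theorem \ref{th:CHS}, and restricting the invariant Hamiltonian $e^{-t}H = H/\rho$ should reproduce $\pi^*\mathscr{H} = -H/\rho$ up to the sign convention for contact Hamiltonians. The hard part will be this final bookkeeping: carefully tracking the restriction of $\eta$ and of $e^{-t}H$ to the graph $J_0$, descending them correctly through $\pi_0$, and confirming that the projected contact vector field $\pi_{0*}X_{e^{-t}H}$ agrees with the field $X = \pi_* X_H$ of Corollary \ref{cor:D1}, including reconciling the sign and normalization conventions between the abstract reduction scheme and the concrete reduction of Theorem \ref{th:CHS}. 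I would organize this as a commuting-diagram argument so that the identification of forms automatically forces the identification of the contact Hamiltonian vector fields.
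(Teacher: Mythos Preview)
Your proposal is correct and follows essentially the same route as the paper's own proof. The paper's argument in fact only spells out item~(iii) --- computing $i_{\tilde\D}\eta = e^{-t}\rho - 1$, identifying $J_0 = \{e^t = \rho\}$ with $M$ via $\iota: m\mapsto (\log\rho(m), m)$, noting $\iota_*\D = \tilde\D|_{J_0}$, and checking $\iota^*\eta = \lambda/\rho$, $\iota^*(e^{-t}H) = H/\rho$ --- so your treatment of (i) and (ii) is actually more complete than what appears in the paper, and your observation about the sign convention in the contact Hamiltonian is a genuine point the paper glosses over.
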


\begin{proof} 
{According to Remark~\ref{rmk:deLeonreduction},} 
the corresponding moment map is $i_{\tilde \D}\eta =  e^{-t} \rho - 1$ and so $J_0$ is the graph $\{e^t = \rho\}\subset N$, 
identified with $M$ via $\iota: m\mapsto  (\log \rho(m), m)$. 
Under this identification, we have $\iota_*\D =  \del_t + \D = \tilde \D|_{J_0}$ so that $C = M/\D\cong P_0 = J_0/\tilde \D|_{J_0}$. 
Moreover, $\iota^*\eta = \lambda/\rho$  and $\iota_*(e^{-t} H) = H/\rho$. 
Hence, the contact-reduced flow corresponds to the contact Hamiltonian flow of $\pi^*\mathscr{H} = H/\rho$ with respect to $\pi^*\eta = \lambda/\rho$.
\end{proof}

    \end{remark}

\section{Can one always find a contact reduction?}\label{sec:globalreduction}
Given the above discussion on the general existence of a contact reduction ensuing from Theorem~\ref{th:CHS} and
Corollary~\ref{cor:D1}, 
we are now interested in the following fundamental question:
\begin{center}
    given a symplectic Hamiltonian system $(M,\omega,H)$ admitting a scaling symmetry $\D$ of degree $\Lambda$,
    can one transform $\D$ to $\tilde \D$ such that the new vector field is a scaling symmetry 
    with $\tilde\Lambda=1$?
\end{center}
This will be the content of this section.
We will first prove that $\tilde\D$ always exists locally and find an explicit algorithm to construct it,
and then we will use
an example in order to make clear that such a symmetry cannot be found globally in general on $M$.
This will lead us to consider an extension of $(M,\omega,H)$
to a `lifted system', where
the parameters are considered as dynamical variables (together with their conjugate momenta). 
Rather surprisingly, in this space we can prove the existence of a scaling symmetry of degree one, and therefore
obtain its corresponding contact reduction, even for cases in which the original Hamiltonian system on $M$ had
no (evident) scaling symmetry at all.

\subsection{Locally on $M$: yes}

We begin with the following observation:
for any Hamiltonian system in a neighborhood of $m$ with $X_H(m)\ne 0$, there exist Darboux coordinates such that
$H = p_1$ and $\omega = dp_a\wedge dq^a$. 
Then $\tilde\D = p_a\del_{p_a}$ is a degree one scaling symmetry with $\lambda = p_a dq^a = i_{\tilde\D}\omega$. 
We conclude that locally, away from critical points of $H$, it is always possible to find $\tilde \D$.


This observation may be described in the following algorithm to find $\tilde \D$ locally: 
	\begin{itemize}
	\item[1.] fix {\em some} primitive, $\lambda$, of $\omega$ and seek a closed 1-form $\alpha$ such that 
		\begin{equation}\label{eq:condition1}
		i_{X_H}(\lambda - \alpha) = H\,.
		\end{equation}	
	\item[2.] By the discussion in Section~\ref{subsec:scalingsymm} (see e.g.~Proposition~\ref{prop:conditionLH}),
	the vector field $\tilde \D$ corresponding to $\lambda - \alpha$ via 
	\be\label{eq:recoveringtildeD}
	i_{\tilde\D}\omega=\lambda-\alpha
	\ee
	is the generator of a degree one scaling symmetry.
	\end{itemize}
We conclude that~\eqref{eq:condition1} is the equation to be solved (for $\alpha$) in order to find $\tilde \D$.

Notice that, locally at least, $\alpha=dS$ for some function $S:M\rightarrow\R$ and then by~\eqref{eq:recoveringtildeD} its associated Hamiltonian vector field $X_{S}$	
is such that $\tilde \D=X+X_{S}$, with $X$ the Liouville vector field satisfying $i_{X}\omega=\lambda$.
Locally therefore,
we can make~\eqref{eq:condition1} even more explicit. Indeed, we obtain
	\begin{equation}\label{eq:condition2}
	X_HS = i_{X_H}\lambda - H\,. 
	\end{equation}
In particular, in local Darboux coordinates with $\lambda = p_a dq^a$, this gives
	\be\label{eq:condition2bis}
	\dot S = p_a\dot q^a - H = L\,,
	\ee 
and therefore $S$ is the action.
Remarkably, it is always possible to solve~\eqref{eq:condition2bis} locally 
away from critical points of $H$  by the choice of a transverse slice $\Sigma$ to $X_H$ 
identifying  $M\cong \Sigma\times(-\e, \e)\ni (\s, t)$ and taking $S(\s, t) := \int_0^t L(\fe_\tau(\s))~d\tau$.
However, it is clear that globally there may be obstructions to solving~\eqref{eq:condition2} -- resp.~\eqref{eq:condition1} -- 
in general
(see e.g.~the Kepler case in Section~\ref{sec:Kepler}). Therefore an alternative route should be put forward, 
as we will do in Section~\ref{sec:couplings}.

In order to illustrate the above procedure, we consider now a very simple case in which the scaling symmetry
$\tilde \D$ can be found explicitly:
	\begin{example}[The 2d harmonic oscillator]\label{ex:2DHO}\rm
	We have $H = \frac{|p|^2 + k|q|^2}{2}$, $\lambda = p \cdot dq$ and $\omega = d\lambda$.
	Imposing $i_{X}\omega=\lambda$ we get $X = p\cdot \del_p$
which is by definition a Liouville vector field but not a scaling symmetry of $H$.

Now we seek a function $S$ such that $\tilde \D=X+X_{S}$ is a scaling symmetry with $\Lambda=1$.
Solving~\eqref{eq:condition2} we get
	\be\label{eq:S2DHO}
	S = \frac{p\cdot q}{2} + c p \cdot iq 
	\ee
where $c$ is a constant of integration that we will fix to 0 
(this freedom in fixing $\tilde \D$ comes from the fact that $p\cdot iq$, the angular momentum, is an integral of motion,
cf.~the discussion in Remark~\ref{rmrk:freedom2}). This degree one scaling symmetry with $c = 0$ is just $\tilde\D = 
X + X_S = \frac{p\cdot\del_p + q\cdot \del_q}{2}$.
	\end{example}
Note that in the harmonic oscillator example $S$ (and thus $\tilde\D$) is globally defined on $M$.
In the next section we will see an example in which a global solution to eq.~\eqref{eq:condition1} on $M$ cannot exist.

\subsection{Globally on $M$: no}\label{sec:Kepler}

Let us consider the Kepler problem of Example~\ref{ex:Kepler}.
This admits the `standard scaling symmetry' $\D_K = 2q\cdot\del_{q} - p\cdot\del_{p}$, which has degree $-2$. 
May we apply our algorithm to find a different `hidden scaling symmetry' of degree 1?

By the discussion in the previous section, this question is equivalent to finding a primitive $\lambda'$ of   $\omega_{K}$
with $i_{X_{H_{K}}}\lambda' = H_{K}$. 
The degree one scaling symmetry is then defined by $i_{\tilde\D}\omega_{K} = \lambda'$.
So, let $\lambda = p_a dq^a$ be the canonical 1-form, then all of our options for 
primitives are given by $\lambda - \alpha$ with $\alpha$ a closed 1-form.

It is instructive to consider first the case in which $\alpha = dS$ is exact. Then we look for a solution of~\eqref{eq:condition2}.
As an immediate observation, we see that for the Kepler problem a solution $S$ 
fails to exist globally because there are periodic orbits with positive action. 
Namely, for an elliptic orbit, $\gamma\subset T^*(\R^2\backslash 0)$, 
with period $T$ and (negative) energy $E$, a solution $S$ to~\eqref{eq:condition2} would satisfy:

    \be\label{eq:noway}
 0  \overset{!}{=} \oint_{\gamma}dS  =  S(\s, T) - S(\s, 0) = \int_0^T L(\fe_{\tau}(\s))~d\tau 
 = 3\pi \left( \frac{T}{2\pi}\right)^{1/3} = 
 3\pi \sqrt{\frac{-1}{2E}} > 0\,.
    \ee

In general, for any closed $\alpha$ as above, consider two periodic elliptic trajectories $\gamma_1, \gamma_2\subset T^*(\R^2\backslash 0)$ with different values of (negative) energies, $E_1\ne E_2$ resp. Such trajectories bound some surface, $\Sigma\subset T^*(\R^2\backslash 0)$, in the phase space.
Then, since $\alpha$ is closed, $d\alpha = 0$, we have by Stokes theorem:
    \begin{equation}\label{eq:Keplerimp1}
    \oint_{\gamma_1}\alpha = \oint_{\gamma_2}\alpha
    \end{equation}

As we have fixed $\lambda=p_a dq^a$ as the canonical one-form, eq.~\eqref{eq:condition1} reads
    \begin{equation}\label{eq:Keplerimp2}
        i_{X_H}\alpha=p_a\frac{\partial H}{\partial p_a}-H=L\,.
    \end{equation}
Therefore, since $\gamma_1, \gamma_2$ are integral curves of $X_H$,~\eqref{eq:Keplerimp1} now reads
 \begin{equation}\label{eq:Keplerimp3}
         \int_0^{T_1}L(\gamma_1(t))\, dt = \int_0^{T_2}L(\gamma_2(t))\, dt  
    \end{equation}
where $T_1\ne T_2$ are the periods of $\gamma_1, \gamma_2$. This is impossible, because for the Kepler problem, the total action around an elliptic orbit is a non-constant function of its energy (and period). 
We conclude that such a $\alpha$ cannot exist in the Kepler case.

In Example~\ref{ex:2DHO} for the 2d harmonic oscillator we have used precisely this
route to find $\alpha=dS$. 
Note that in that case we have that the condition~\eqref{eq:Keplerimp3} is identically satisfied (both sides always being zero).

\subsection{Globally on $\hat M$: yes. The evolution of couplings}\label{sec:couplings}
Now let us consider a symplectic Hamiltonian system, $(M,\omega,H_a)$, 
with the Hamiltonian depending on some parameters
$a\in\R^k$ (e.g.~the parameters might be masses, gravitational constants, etc.). 
We will define its extension to a `lifted system', where
the parameters are considered as dynamical variables (together with their conjugated momenta) 
and show that for such system a global (degree one) contact reduction
always exists.
	\begin{definition}\label{def:extsys}\rm
	Given a (parametrized) symplectic Hamiltonian system $(M,\omega,H_a)$,
	the associated \emph{lifted system} is the symplectic Hamiltonian system
	$(\hat M,\hat \omega,\hat H)$
	with 
	$$\hat M := M \times \R^{2k}\ni (m, a, b), \qquad
	\hat\omega := \omega + da_i\wedge db^i,
	\qquad \text{and} \qquad  
	\hat H: \hat M\to \R, (m, a, b)\mapsto H_a(m).$$
	\end{definition} 
Clearly, as a consequence of the definition, we have
	\be\label{eq:hatXH}
	X_{\hat H} = X_{H_a} + \del_{a} H_a \cdot\del_{b}
	\ee
from which we see that $\dot a=0$ (the parameters $a$ 
are first integrals of the lifted system) and thus, after a choice of $a$ has
been made once and for all, the lifted dynamics
always projects down to the original one.

In most cases, one may simply redefine the parameters so that $H_a$ is degree one in $a$, 
which includes for example the lift of {\em any} Hamiltonian system by taking $H_a := aH$. 
Similarly one may consider a system of {\em coupled} Hamiltonians:
\[H_a = a_1H_1 + ... + a_k H_k\]
with parameters $a_j\in\R$ the {\em couplings}.

With the above definitions we have the following important result:
	\begin{proposition}[Lifted degree one scaling symmetries]\label{thm:globalreduction}\rm 
	Consider a Hamiltonian system, $(M, \omega, H_a)$ depending on parameters, $a\in \R^k$, of the form:
	\[H_a = a_1H_1 + ... + a_k H_k\]
	with $H_j:M\to\R$, and admitting a Liouville vector field $\D$ such that $L_{\D} H_j = \Lambda_j H_j$. Then 
		$${\hat \D}:=\D+\sum_{i=1}^k{(1-\Lambda_i) a_{i}\partial_{a_{i}}}+\sum_{i=1}^k\Lambda_i b^{i}\partial_{b^{i}}$$
		is a degree one scaling symmetry for the associated lifted system, $(\hat M, \hat\omega, \hat H)$, that is:
	${L_{\hat \D}\hat \omega=\hat \omega}  \text{ and }  {L_{\hat \D}\hat H=\hat H}.$
	\end{proposition}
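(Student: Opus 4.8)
The plan is to exploit the product structure of the lifted phase space $\hat M = M\times\R^{2k}$ together with the additivity of the Lie derivative, splitting $\hat\D$ into the piece $\D$ acting on the $M$-factor and the piece $\D_{ab} := \sum_i (1-\Lambda_i)a_i\del_{a_i} + \sum_i \Lambda_i b^i\del_{b^i}$ acting on the $\R^{2k}$-factor. Since $\omega$ is pulled back from $M$ while $\sum_i da_i\wedge db^i$ is pulled back from $\R^{2k}$, and the two summands of $\hat\D$ live on complementary factors, most cross terms in $L_{\hat\D}\hat\omega$ vanish automatically: $L_\D\omega = \omega$ by the Liouville hypothesis, $L_\D(\sum_i da_i\wedge db^i)=0$ and $L_{\D_{ab}}\omega = 0$, because neither vector field sees the coordinates on which the other form depends.

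So the only nontrivial computation for the first condition is $L_{\D_{ab}}(\sum_i da_i\wedge db^i)$. For this I would use $L_{\D_{ab}}da_i = d(\D_{ab}\,a_i) = (1-\Lambda_i)\,da_i$ and $L_{\D_{ab}}db^i = d(\D_{ab}\,b^i) = \Lambda_i\,db^i$, and then apply the Leibniz rule to each wedge, obtaining $((1-\Lambda_i)+\Lambda_i)\,da_i\wedge db^i = da_i\wedge db^i$. This is the crux: the weights assigned to $a_i$ and $b^i$ are engineered precisely so that they sum to one on each canonical pair, reproducing $\sum_i da_i\wedge db^i$ and hence yielding $L_{\hat\D}\hat\omega = \hat\omega$.

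For the second condition I would note that for a function $L_{\hat\D}\hat H = \hat\D\,\hat H$, and evaluate the three contributions to $\hat\D(\sum_j a_j H_j)$ separately. The $\D$-part gives $\sum_j a_j L_\D H_j = \sum_j \Lambda_j a_j H_j$ by hypothesis; the $a$-part gives $\sum_i (1-\Lambda_i)a_i\,\del_{a_i}(\sum_j a_j H_j) = \sum_i (1-\Lambda_i)a_i H_i$; and the $b$-part vanishes since $\hat H$ is independent of $b$. Adding the first two collapses the $\Lambda_i$ exactly as before, $\sum_i(\Lambda_i + (1-\Lambda_i))a_i H_i = \sum_i a_i H_i = \hat H$, so $L_{\hat\D}\hat H = \hat H$.

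I do not expect a genuine obstacle here: once $\hat\D$ is written down, both verifications reduce to careful bookkeeping of which component of $\hat\D$ acts on which coordinate, together with the two cancellations of the $\Lambda_i$. The only real content is the design of $\hat\D$ itself---the choice of the weights $(1-\Lambda_i)$ on $a_i$ and $\Lambda_i$ on $b^i$---which homogenizes the differently weighted pieces $H_j$ (of degree $\Lambda_j$ under $\D$) into a single degree-one Hamiltonian on the enlarged space while keeping the extended symplectic form Liouville. I would present the argument in that order, emphasizing these two symmetric cancellations as the heart of the proof.
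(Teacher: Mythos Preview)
Your proposal is correct and follows essentially the same approach as the paper: both are direct verifications that $L_{\hat\D}\hat\omega=\hat\omega$ and $L_{\hat\D}\hat H=\hat H$. The only cosmetic difference is that the paper computes the contraction $i_{\hat\D}\hat\omega = \lambda + \sum_i a_i\,db^i - d\bigl(\sum_i \Lambda_i a_i b^i\bigr)$ and then invokes Cartan's formula on the closed form $\hat\omega$, whereas you compute $L_{\hat\D}$ directly via the Leibniz rule on each wedge; the underlying cancellation of the weights $(1-\Lambda_i)+\Lambda_i=1$ on each canonical pair is identical in both.
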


    \begin{proof}
By a direct calculation we have that 
$L_{\hat \D}\hat H=\hat H$ and 
$i_{\hat \D}\hat \omega=\lambda + \sum_{i=1}^ka_i db^i-d\left(\sum_{i=1}^k\Lambda_ia_ib^i\right)$,
where $\lambda = i_{\D}\omega$, and therefore $\hat \D$ is a scaling symmetry of degree one.
	\end{proof}

\begin{remark}\label{remark:coupledHamv2}\rm
   Note that the same construction above, upon replacing $a_i$ with $a_i^{1-\Lambda_i}$ in the definition of the Hamiltonian $H_a$, 
   results in a degree one scaling symmetry 
$${\hat \D} = \D + \sum_{i=1}^k a_i \partial_{a_i}.$$
This is often more useful in directly finding the scale-reduced contact Hamiltonian. 
Furthermore, when the terms $1-\Lambda_i$ are not integers, one may avoid complex values by considering various cases of sign choices in the sum
\begin{equation}\label{eq:coupledHamv2}
    H_a = |a_1|^{1-\Lambda_1}H_1 \pm ... \pm |a_k|^{1-\Lambda_k} H_k\,.
\end{equation}
\end{remark}
	
According to Section~\ref{sec:contactreduction} (Corollary~\ref{cor:D1}), we may project the Hamiltonian dynamics of $\hat H$ 
onto a contact Hamiltonian mechanics of the vector field $\hat X$ on $\hat C = \hat M/ \hat \D$. 
Our main result here describes how the dynamics of each $H_a$ is related to this reduced space. 

Let us observe first that the scaling symmetry $\hat\D$, and the Hamiltonian vector field, $X_{\hat H}$, of $\hat H$ on $\hat M$ under
\[ \pi_o: \hat M\to M\times\R^k \ni (m, a)\]
project to vector fields $\bar\D = \pi_{o*}\hat\D$ and $X = \pi_{o*} X_{\hat H}$ on this parameter space. Moreover,
\[ X(m,a) = X_{H_a}(m)\]
is $\bar\D$ invariant, ie $[X, \bar \D] = 0$, 
so projects to a vector field $\bar X$ on the quotient, $\bar M:= (M\times \R^k)/\bar\D$.

\begin{remark}\rm
The symmetry of $\bar\D$ on $X$ is easily understood. Under the assumptions of Proposition~\ref{thm:globalreduction}, 
if $\psi_s$ is the flow of $\D$ on $M$ then for $\gamma_a(t)\in M$ a trajectory of $X_{H_a}$, 
we have that $\psi_s(\gamma_a(t))$ is a trajectory of $H_{a'}$ where $a_j' = e^{(1-\Lambda_j)s}a_j$.
\end{remark}

From this last remark, the flow of $\bar X$ on $\bar M$ represents the trajectories of $H_a$ on $M$ modulo equivalence in parameter values 
and the scaling symmetry $\D$ on $M$. 
It is easy to see that these trajectories are all obtained by a projection of the contact flow of $\hat X$ on $\hat C$. 
Let us denote by 
\[\bar\pi_o: \hat C\to \bar M\]
the induced projection. Then, with the diagram
\begin{center}
\begin{tikzcd}
\hat M \arrow[r," \pi_o"] \arrow[d, "\hat\pi"] & M\times \R^k \arrow[d, "\bar\pi"] \\
\hat C \arrow[r, "\bar\pi_o"] & \bar M\\
\end{tikzcd}
\[\qquad \pi_{o*}X_{\hat H} = X ,~~ \pi_{o*}\hat \D = \bar\D, 
~~\hat C = \hat M / \hat\D ,~~\bar M = (M\times\R^k)/\bar\D\,,\]
\end{center}
we have:
\begin{theorem}\label{thm:lifted_red}\rm
Consider a Hamiltonian system depending on parameters as in Proposition~\ref{thm:globalreduction}. 
Let $\hat X := \hat\pi_* X_{\hat H}$ be the induced contact Hamiltonian vector field on $\hat C = \hat M/ \hat\D$. 
Then, for each fixed value $a_o$ of the parameters, there is an $\hat X$-invariant set $\hat C_{a_o}\subset \hat C$, with:
\[ \bar M_{a_o} := \bar \pi_o(\hat C_{a_o})\]
an $\bar X$-invariant set on $\bar M$, and moreover:
\[ \bar X|_{\bar M_{a_o}} = \bar\pi_{o*}\hat X.\]
	
	\end{theorem}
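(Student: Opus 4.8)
The plan is to exploit that the parameters $a$ are first integrals of the lifted dynamics and then to transport their invariance through the commutative square relating the two quotients, so that the whole statement reduces to a diagram chase. Concretely, I would set
\[
\hat C_{a_o} := \hat\pi\big(\{a = a_o\}\big) \subset \hat C,
\]
the image under $\hat\pi$ of the level set $\{a = a_o\} = M\times\{a_o\}\times\R^k \subset \hat M$, and verify all three assertions for this choice.

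First I would establish the invariance of $\hat C_{a_o}$. By Eq.~\eqref{eq:hatXH} one has $\dot a = 0$ along $X_{\hat H}$, so $\{a = a_o\}$ is $X_{\hat H}$-invariant. Since $\hat\D$ is a degree one scaling symmetry (Proposition~\ref{thm:globalreduction}), Eq.~\eqref{eq:scalingXH} gives $[\hat\D, X_{\hat H}] = 0$, hence $X_{\hat H}$ is $\hat\pi$-projectable and $\hat\pi$ intertwines the flows of $X_{\hat H}$ and $\hat X = \hat\pi_* X_{\hat H}$. Therefore $\hat C_{a_o}$ is $\hat X$-invariant. Using the commutativity $\bar\pi_o\circ\hat\pi = \bar\pi\circ\pi_o$ together with $\pi_o(\{a=a_o\}) = M\times\{a_o\}$, I would then identify
\[
\bar M_{a_o} = \bar\pi_o(\hat C_{a_o}) = \bar\pi\big(M\times\{a_o\}\big).
\]
Because $X(m,a) = X_{H_a}(m)$ carries no $\partial_a$ component, $M\times\{a_o\}$ is $X$-invariant, and since $[X,\bar\D]=0$ the map $\bar\pi$ intertwines the flows of $X$ and $\bar X$; hence $\bar M_{a_o}$ is $\bar X$-invariant.

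For the final identity I would differentiate the defining relation $\bar\pi_o\circ\hat\pi = \bar\pi\circ\pi_o$ of the square, obtaining $\bar\pi_{o*}\circ\hat\pi_* = \bar\pi_*\circ\pi_{o*}$ on $T\hat M$. Evaluating both sides on $X_{\hat H}$ at any $\hat m$ with $\hat\pi(\hat m)=c$ gives
\[
\bar\pi_{o*}\big(\hat X(c)\big) = \bar\pi_{o*}\hat\pi_* X_{\hat H}(\hat m) = \bar\pi_*\pi_{o*} X_{\hat H}(\hat m) = \bar X\big(\bar\pi_o(c)\big),
\]
so that $\hat X$ is $\bar\pi_o$-related to $\bar X$. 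Restricting to $c\in\hat C_{a_o}$, whose image is $\bar M_{a_o}$, yields $\bar X|_{\bar M_{a_o}} = \bar\pi_{o*}\hat X$, as claimed. Notice that this computation in fact delivers the relation on all of $\hat C$; the restriction to the $a_o$-pieces is precisely what records that fixing a physical value of the couplings recovers the dynamics of $H_{a_o}$ modulo the $\D$-scaling.

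The only substantive inputs are the commutation relations $[\hat\D, X_{\hat H}]=0$ and $[X,\bar\D]=0$; the latter is already noted in the text and is checked from $X_{H_a}=\sum_j a_j X_{H_j}$ together with $[\D, X_{H_j}]=(\Lambda_j-1)X_{H_j}$. The main obstacle I anticipate is not the diagram chase but the smoothness bookkeeping: one must guarantee that $\hat C$ and $\bar M$ are genuine quotient manifolds with $\hat\pi,\bar\pi,\pi_o,\bar\pi_o$ submersions, that $\bar\pi_o$ is well defined (which holds because $\pi_o$ intertwines $\hat\D$ and $\bar\D$, hence maps $\hat\D$-orbits into $\bar\D$-orbits), and that $\{a=a_o\}$ is transverse to $\hat\D$ so that $\hat C_{a_o}$ is an embedded invariant submanifold and not merely an invariant set. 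As throughout, these are secured by the standing free-and-proper hypotheses, or by working locally over the region where the relevant actions are regular.
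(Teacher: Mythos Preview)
Your proposal is correct and follows exactly the paper's approach: the paper's entire proof is the single line ``Take $\hat C_{a_o} = \hat\pi( M\times\{ a_o\}\times \R^k)$,'' which is precisely your definition, and you have simply supplied the routine verifications (invariance via $\dot a=0$ and $[\hat\D,X_{\hat H}]=0$, plus the diagram chase through $\bar\pi_o\circ\hat\pi=\bar\pi\circ\pi_o$) that the paper leaves implicit.
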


\begin{proof}
 Take $\hat C_{a_o} = \hat\pi( M\times\{ a_o\}\times \R^k)$.
\end{proof}

\begin{remark}\rm
As a special case of the above Proposition~\ref{thm:globalreduction} and Theorem~\ref{thm:lifted_red}, 
we may have all $\Lambda_j = \Lambda$, so that $\D$ is a scaling symmetry, of degree $\Lambda$, for each $H_a$ on $M$. 
In this case, there is a further projection from $\bar\pi_1:\hat C\to C = M/\D$, 
induced by $\pi_1:\hat M\to M$ so that the line fields $\ell_a$ on $C$ induced by $H_a$ are spanned by $\bar\pi_{1*}\text{span}\hat X|_{\hat C_{a_o}}$.
\end{remark}

	\begin{remark} \rm
	Some important observations here are in order:
	
	The first point is the fact that
    the assumption on $H_{a}$ in Proposition~\ref{thm:globalreduction} is not restrictive at all.
    Indeed, by introducing such parameters, basically any physical Hamiltonian can be written
    as a sum of pieces that scale differently.
    This allows one to apply contact reductions to 
    a much more general class of Hamiltonian systems than those which admit a 
    scaling symmetry of some degree $\Lambda$,
    as the examples below shall show.
    
    The second crucial point
	(which might have interesting consequences on the way we formulate physical theories)
	is the fact that while on the lifted symplectic system $\hat H$ the dynamical variables $a$ are constant, on the 
	corresponding reduced contact system
	they become \emph{dissipated quantities}, 
	i.e.~they dissipate at the same rate as the contact Hamiltonian $\H$, 
	namely $L_{X_{\H}}a=-\mathscr{R}(\H)a$ (see~\cite{de2020infinitesimal,gaset2020new,bravetti2021geometric} to see
	that this is the equivalent of conserved quantities for the contact case).
	However, the global equivalence of the two dynamics is well 
	established by the discussion in Section~\ref{sec:contactreduction}.
	Therefore, we are ready to remark the significance of Proposition~\ref{thm:globalreduction} and Theorem~\ref{thm:lifted_red}:
    as anticipated in the discussion on Poincar\'e's dream in the Introduction, 
    since the parameters in a Hamiltonian description of the physical reality are \emph{inferred} from measurements
    based on the observed trajectories and on the specified model, there is no reason a priori to think of them as fixed numbers. Indeed
    we see from Theorem~\ref{thm:lifted_red} that if we allow the `space of physical Hamiltonian (variational) theories' describing a given
    phenomenon to vary, meaning that the coupling constants of each theory are different, then what we have just proved
    means that there always exists in this space a way to reduce the description to an equivalent theory based on a contact Hamiltonian
    (or Herglotz variational) theory on a reduced space.
    Despite the dynamical equivalence among all such theories, the contact Hamiltonian one, involving less elements for its complete description,
    has more
    descriptive power,
    as pointed out
    already in~\cite{gryb2021scale,sloan2021scale}. 
    We call this a \emph{reduction to a more descriptive theory}. 
    We note that the universal dissipative nature of the reduced system provides support for 
    the position that open systems are fundamental to physics~\cite{cuffaro2021open}. 
    
   A third point is that it is simple to extend Proposition~\ref{thm:globalreduction} 
    to construct a scaling symmetry of degree $\Lambda$ by taking 
    	$${\hat \D}:=\D+\sum_{i=1}^k{(\Lambda-\Lambda_i) a_{i}\partial_{a_{i}}}+\sum_{i=1}^k(1-\Lambda+\Lambda_i) b^{i}\partial_{b^{i}}\,.$$
    This construction may be of particular interest when investigating blow-ups of a particular degree; by choosing 
    $\Lambda$ such that the rescaling can tame the component of $H$ which becomes infinite, 
    we can then investigate the other components to see if a regularization can be achieved.
	\end{remark}

    Now we consider some simple examples that illustrate the generality of Proposition~\ref{thm:globalreduction}
    and Theorem~\ref{thm:lifted_red}.

    \begin{example}[Global reduction of Kepler]\rm\label{ex:scaledKglobal}
    Let us consider again the Kepler example (see Examples~\ref{ex:Kepler} and~\ref{ex:scaledK}).
      This time we reinsert the parameters $m$ (mass) and $k$ (Kepler coupling) 
    that we had previously fixed to 1 to obtain
    $$H_{m,k}=\underbrace{\frac{1}{m}\left(\frac{p_r^2}{2}+\frac{p_\theta^2}{2r^2}\right)}_{H_1}-\underbrace{\frac{k}{r}}_{H_2}\,.$$
    Clearly, for $\D=\D_K$, we have
    $$
    L_\D H_1 = -2 H_1 \qquad \text{and} \qquad L_\D H_2 = -2 H_2\,,
    $$
    and therefore we can apply the procedure detailed in Proposition~\ref{thm:globalreduction} to find the 
    degree one scaling symmetry
    $$\hat \D=\D_K+3\left(\mu\partial_{\mu}+k\partial_k\right)-2\left(b^\mu\partial_{b^\mu}+b^k\partial_{b^k}\right)\,,$$
    where $\mu:=m^{-1}$
    and, by Corollary~\ref{cor:D1}, the scale-reduced contact Hamiltonian dynamics on $\hat C=\hat M/\hat \D$.
    We stress that the final dynamics on $\hat C$ carries no information about the scale used for the reduction,
    and therefore needs one less parameter with respect to the original Kepler formulation. 
    Specifically, while the original problem needed $(r_0,\theta_0,p_r^0,p_\theta^0,m,k)$ as an input
    to completely specify a solution, the reduced problem only needs 
    e.g.~$(\theta_0,p_r^0,p_\theta^0,m_0,k_0)$ 
    (supposing we are using $r$ as the scaling function for the reduction).
    \end{example}
    
    Contrary to the Kepler example, in the next case we present a system that does not possess any evident scaling symmetry of degree $\Lambda$,
    but to which we can still apply Proposition~\ref{thm:globalreduction}.
    \begin{example}[Combined Kepler-Hooke System]\rm
    With $M$ and $\omega$ as before, consider now a Hamiltonian which combines the Kepler ($1/r$) and 
    the Hooke ($r^2$) potential:
    \[ H_{KH} := \underbrace{\mu\left(\frac{p_r^2}{2} + \frac{p_\theta^2}{2r^2}\right)}_{H_1}\,\, 
    \underbrace{-\frac{k_K}{r}}_{H_2}\,\,
    \underbrace{+k_H r^2}_{H_3}\,.\]
    Clearly in this case we do not have a (evident) scaling symmetry of degree $\Lambda$.
    However, we may notice that for $\D=-p_r \del_{p_r} + 2r \del_r + p_\theta \del_{p_\theta}$, we get
    \be\label{eq:KH1}
    L_\D H_1=-2H_1 \qquad  L_\D H_2=-2H_2 \qquad \text{and} \qquad L_\D H_3=4H_3\,.
    \ee
    Therefore, by applying Proposition~\ref{thm:globalreduction} we find the 
    degree one scaling symmetry 
    \be
    \hat \D=D+3\mu\partial_{\mu}+3k_K \partial_{k_K}-3k_H  \partial_{k_H}
    -2b^{\mu}\partial_{b^{\mu}}-2b^{k_K}\partial_{b^{k_K}}+4b^{k_H}\partial_{b^{k_H}}
    \ee
    and, by Corollary~\ref{cor:D1}, the scale-reduced contact Hamiltonian dynamics on $\hat C=\hat M/\hat \D$.
    \end{example}
    Indeed, we can generalize the last example at once to the case of any potential that can be written
    as a Laurent series of $r$ as follows.
     \begin{example}[General potential as a Laurent series of $r$]\rm
    Consider now the Laurent Hamiltonian
    \[ H_{L} := \underbrace{\mu\left(\frac{p_r^2}{2} + \frac{p_\theta^2}{2r^2}\right)}_{H_{\rm kin}} + 
    \underbrace{\sum_{j=-n_1}^{n_2} a_jr^j}_{\sum_{j=-n_1}^{n_2} a_jH_j}\,.\]
    Clearly in this case we do not have a (evident) scaling symmetry of degree $\Lambda$.
    However, we may notice that for $\D=-p_r \del_{p_r} + 2r \del_r + p_\theta \del_{p_\theta}$, we get
    \be\label{eq:KHgeneral}
    L_\D H_{\rm kin}=-2H_{\rm kin} \qquad \text{and} \qquad L_\D H_j=2jH_j\,.
    \ee
    Therefore, by applying Proposition~\ref{thm:globalreduction} we find the 
    degree one scaling symmetry 
    \be
    \hat \D=D+3\mu\partial_{\mu}+\sum_{j=-n_1}^{n_2}(1-2j)a_j \partial_{a_j}
    -2b^{\mu}\partial_{b^{\mu}}+\sum_{j=-n_1}^{n_2}2jb^j\partial_{b^{j}}
    \ee
    and, by Corollary~\ref{cor:D1}, the scale-reduced contact Hamiltonian dynamics on $\hat C=\hat M/\hat \D$.
    \end{example}
    
The next example is somewhat simpler, but it has important physical ramifications in cosmology, 
and therefore it is worth presenting it at this point.

\begin{example}[FLRW Cosmology]\label{ex:FLRW}\rm
In the field of cosmology, homogeneous, isotropic solutions to Einstein's equations are given by the Friedmann-Lema\^itre-Robertson-Walker metrics. 
The symmetries of such space-times allow us to reduce a full field theory to a finite-dimensional `particle' representation. 
In these solutions, the space-time manifold is $\mathbb{R} \times \Sigma_k$, 
in which $\Sigma_k$ is the three-dimensional spatial slice, 
consisting of a three sphere ($S^3, k=1$), Euclidean space ($\mathbb{R}^3, k=0$) or hyperbolic space ($\mathbb{H}, k=-1$). 
The metric is determined by the topology of the spatial section:
\[ ds^2 = -dt^2 + v^\frac{2}{3}(t) \left(\frac{dr^2}{1-kr^2} + r^2(d\theta^2 + \sin^2 \theta d\phi^2)\right). \]
In the case of a non-compact topology, the volume, $v$ is measured with respect to some fixed, fiducial cell. 
For minimally coupled matter, the dynamics of which are given in flat space by a Hamiltonian $H_m (p,q)$,
we find the equations of motion for the gravitational sector, $v,\Pi$, from the Hamiltonian:
\[ H = v \left( -\frac{3\Pi^2}{8\pi} + H_m \left(\frac{p}{v},q\right) -\frac{k}{v^\frac{2}{3}} \right). \] 
For flat $(k=0)$ spatial slices the Hamiltonian admits a degree one scaling symmetry $\D = v\partial_v + p\partial_p$ 
(summed over all matter momenta $p$ if multiple are present), with $i_\D \omega = v( -d\Pi + p dq)$. 
Therefore in this case the reduction to the contact dynamics on $C=M/\D$ is immediate, and
following Corollary~\ref{cor:D1} with the scaling function $\rho=v$, we have
\[ \mathscr{H}_0 = \frac{3 \Pi^2}{8\pi} - H_m(p,q) \qquad \text{and} \qquad \eta = -d\Pi + pdq .\]

To extend our results to the case in which $k \neq 0$, we follow the process outlined in Proposition~\ref{thm:globalreduction}
and Theorem~\ref{thm:lifted_red}.
Indeed, the Hamiltonian in this case can be split as
\[ H =  \underbrace{v \left( -\frac{3\Pi^2}{8\pi} + 
H_m \left(\frac{p}{v},q\right)\right)}_{a_1H_1}\,\,\underbrace{-k\,v^{1/3}}_{a_2H_2} \,, \] 
and for $\D = v\partial_v + p\partial_p$ as before we obtain
    $$L_\D H_1=H_1 \qquad \text{and} \qquad L_\D H_2=\frac{1}{3}H_2\,,$$
which, by Proposition~\ref{thm:globalreduction}, leads to the degree one scaling symmetry
$$\hat \D=\D + \frac{2}{3}k\partial_k+b^v\partial_{b^v}+\frac{1}{3}b^k\partial_{b^k}\,,$$
for the lifted system and, by Corollary~\ref{cor:D1}, to a scale-reduced contact Hamiltonian dynamics on $\hat C=\hat M/\hat \D$.

\begin{remark}\rm
There are two interesting facets of cosmological dynamics that are highlighted by this construction. 
The first is that the apparent `Hubble Friction' arising due to the expansion of the universe can be directly observed since 
the equation of motion for our system are given:
\[\dot{q} = \partial_p H_m \quad \dot{p} = -\partial_q H_m +\frac{3\Pi}{4\pi} p \,, \] 
with the effect of coupling $H_m$ to gravity in the final term acting as apparent friction 
when compared to the behaviour of matter in a non-expanding background. 
The second is that both the dynamics of matter in an expanding space, 
and the expansion of space that results from the presence of matter are described by taking 
a symplectic matter system $(M,\omega=d\theta, H_m)$ and considering on its contactification 
$(\tilde M,\tilde\theta) = (M\times \mathbb{R},\theta-d\Pi)$ the matter-gravity Hamiltonian $\tilde\H=\frac{3\Pi^2}{8\pi}-H_m$.
\end{remark}

\end{example}

\section{Conclusions}\label{sec:conclusions}

Let us  recapitulate the main points of our work. 
We have shown that a general class of scaling symmetries exist within symplectic systems, 
and how this symmetry can be used to reduce the system onto its invariants.
In Theorem~\ref{th:CHS} and its Corollary \ref{cor:D1} we have shown that the reduction by identifying orbits 
of the scaling vector field on the symplectic manifold yields a contact manifold, that the Hamiltonian vector field 
on the symplectic manifold projects onto a (contact) Hamiltonian vector field on the contact manifold, 
and that the contact Hamiltonian and contact form that generate this vector field are obtained by a simple procedure. 
We refer to this procedure as a \emph{contact reduction by scaling symmetries}, or simply \emph{contact reduction}.
This process has been further generalized beyond simple systems admitting a scaling symmetry. Indeed, by 
lifting systems with multiple couplings
we find that there is a general process under which these also admit a contact reduction to an equivalent
scale-invariant description
(Proposition~\ref{thm:globalreduction} and Theorem~\ref{thm:lifted_red}),
thus realizing
Poincar\'e's dream.

The resultant scale-reduced systems, being contact systems, have interesting features not apparent in the unreduced symplectic manifolds. 
The presence of a Reeb vector leads to measure focusing and dissipation of otherwise conserved quantities. 
Such features have important roles in describing physical phenomena such as the arrow of time \cite{gryb2021scale}. 

An interesting application of the contact reduction is to examine the nature of `blow-ups' (see Appendix~\ref{app:Kepler}),
for example, when
the scale parameter in the symplectic system approaches zero, such as at the total collisions of the three-body problem. 
Many of the problems regarding the continuation of solutions beyond such points arise due to the ill-defined evolution of the scale parameter. 
Such a situation can be avoided in the reduced system as the evolution of the scale is immaterial to the behaviour of the contact system. 
One can envision situations in which the contact Hamiltonian vector field will remain well-defined 
on the contact manifold when the symplectic Hamiltonian vector field becomes ill defined, and thus a `natural' continuation 
of the symplectic system may be considered as passing to the contact manifold for evolution beyond this point, 
and symplectifying the resultant evolution. In cosmological systems \cite{Koslowski:2016hds,Mercati:2019cbn,sloan2019scalar} 
this has been shown to allow continuation beyond the initial singularity, 
and in black holes beyond the central singularity~\cite{mercati2021through}. 

\appendix

\section{Symplectic and contact Hamiltonian mechanics}\label{app:ham_cont} 
In this appendix we review briefly some of the main definitions and properties of Hamiltonian mechanics on symplectic and contact manifolds, 
paying special attention to the relationships between the two that will be used in this work in order to have it self-contained.
Most of the definitions and results presented here can be found in~\cite{Arnold} and~\cite{MR2397738}, to which we refer for more detailed discussions.

\subsection{Symplectic and contact: definitions}
We begin with some general definitions that establish the notation and sign conventions.
	\begin{definition}\rm
	A \emph{symplectic manifold} is a pair $(M,\omega)$, where $M$ is a $2n$-dimensional manifold and $\omega$ is 
	a $2$-form on $M$ which is closed ($d\omega=0$) and non-degenerate ($\omega^{n}\neq 0$). A {\em symplectic vector field} on $M$ is one whose flow preserves $\omega$ : $L_X\omega = 0$. 
	\end{definition}
Dynamics on a symplectic manifold is usually given by Hamiltonian systems, which are defined as follows.
	\begin{definition}\rm
	A (symplectic) \emph{Hamiltonian system} is a triple $(M,\omega,H)$, where $(M,\omega)$ is a symplectic manifold and $H:M\rightarrow\R$
	is a differentiable function called the (symplectic) \emph{Hamiltonian}. Given this structure, a unique symplectic vector field $X_{H}$, the \emph{(symplectic) Hamiltonian vector field},
	is defined on $M$ by the condition 
		\begin{equation}\label{eq:XHdef}
		i_{X_{H}}\omega=-dH\,.
		\end{equation}
	\end{definition}
	
By a theorem of Darboux, locally we can always find coordinates $(q^{a},p_{a})$, $a=1,\dots,n$, on a symplectic manifold 
such that
	\begin{equation}\label{eq:omegaDarboux}
	\omega=dp_{a}\wedge dq^{a}\,.
	\end{equation}	
These coordinates are called \emph{canonical} or \emph{Darboux} coordinates.
It follows directly from~\eqref{eq:XHdef} and~\eqref{eq:omegaDarboux} that the \emph{trajectories} of $X_{H}$
-- i.e.~parametrized curves $\gamma:I\subset \R\rightarrow M$ satisfying $\dot \gamma(t)=X_{H}(\gamma(t))$ --
are solutions to the \emph{symplectic Hamiltonian equations}
	\begin{equation}\label{eq:SHeq}
	\dot q^{a}=\frac{\partial H}{\partial p_{a}} \qquad \text{and} \qquad \dot p_{a}=-\frac{\partial H}{\partial q^{a}}\,.
	\end{equation}

On the contact side, we have the corresponding definitions
	\begin{definition}\rm
	A \emph{contact manifold} is a pair $(C,\mathscr D)$, where $C$ is a $(2n+1)$-dimensional manifold and $\mathscr D$ is a maximally
	non-integrable distribution of hyperplanes on $C$.
	Locally at least, there always exists a $1$-form $\eta$ on $C$ such that $\mathscr D=\ker\eta$
	and the maximal non-integrability condition is expressed as $\eta\wedge (d\eta)^{n}\neq 0$. 
	Such $\eta$ is called a \emph{contact 1-form}. A {\em contact vector field} on $C$ is one whose flow preserves $\mathscr{D}$ : $L_X\eta \sim \eta$.
	\end{definition}
	
Note that, a contact 1-form $\eta$ exists globally if and only if $\mathscr D$ is co-orientable, meaning that the quotient line bundle $TC/\mathscr D$ is trivial~\cite{MR2397738}. 
This will always be the case in this work.
Furthermore, once we fix a contact 1-form $\eta$ for $\mathscr D$, then any other 1-form of the type $\eta'=f\eta$ with $f:C\rightarrow \R^{\times}$
yields a different contact form for the same $\mathscr D$. That is, there is a freedom in choosing contact forms for the same $\mathscr D$, up to re-scaling
by a nowhere-vanishing function. However, to define certain dynamics, the choice of $\eta$ is relevant, as the next definition will show.

We recall that on a contact manifold with a fixed $\eta$ there is a special contact vector field, the \emph{Reeb vector field} $\mathscr R$, defined by
		\begin{equation}\label{eq:Reebdef}
		i_{\mathscr R}d\eta=0\qquad \text{and} \qquad i_{\mathscr R}\eta=1\,.
		\end{equation}
Now we are ready to introduce the Hamiltonian dynamics on contact manifolds 
(see as well Proposition~\ref{prop:arnoldSHVSCH} and Remark~\ref{rem:contRels} below for some context).
	\begin{definition}\rm
	A (contact) \emph{Hamiltonian system} is a triple $(C,\eta,\H)$, where $(C,\eta)$ is a contact manifold with a fixed choice for the contact 1-form 
	and $\H:C\rightarrow\R$
	is a differentiable function called the (contact) \emph{Hamiltonian}. Given this structure, a unique contact vector field $X_{\H}$, the \emph{contact Hamiltonian vector field},
	is defined on $C$, by the conditions 
		\begin{equation}\label{eq:XHcontactdef}
		i_{X_{\H}}d\eta=d\H-{\mathscr R}(\H)\eta\qquad \text{and} \qquad i_{X_{\H}}\eta=-\H\,.
		\end{equation}
	\end{definition}

By a theorem of Darboux, locally we can always find coordinates $(q^{a},p_{a},S)$ on a contact manifold 
such that
	\begin{equation}\label{eq:etaDarboux}
	\eta=dS-p_{a} dq^{a}\,, \qquad \text{and} \qquad \mathscr R=\frac{\partial}{\partial S}\,.
	\end{equation}	
These coordinates are called \emph{contact} or \emph{Darboux} coordinates.
It follows directly from~\eqref{eq:XHcontactdef} and~\eqref{eq:etaDarboux} that the trajectories of $X_{\H}$
are solutions to the \emph{contact Hamiltonian equations}
	\begin{equation}\label{eq:CHeq}
	\dot q^{a}=\frac{\partial \H}{\partial p_{a}} \qquad \dot p_{a}=-\frac{\partial \H}{\partial q^{a}}-p_{a}\frac{\partial \H}{\partial S} \qquad \dot S= p_{a}\frac{\partial \H}{\partial p_{a}}-\H\,.
	\end{equation}
It should be stressed at this point that the dynamical vector fields $X_{H}$ and $X_{\H}$ can be given variational definitions,
provided the corresponding Lagrangians are (hyper-)regular, 
both in the symplectic and in the contact case.
While in the symplectic case Hamilton's variational principle is well-known, 
its contact counterpart, known as Herglotz' variational principle, has been far less
studied until recently (see~\cite{georgieva2003generalized,vermeeren2019contact,de2019singular}). 
In Appendix~\ref{app:Herglotz} we detail this correspondence and we re-write the main results of this work from the variational perspective.

\subsection{Symplectic and contact: known relationships and new definitions}
Now we review some known relationships between symplectic and contact manifolds, together with the respective Hamiltonian dynamics.

We start with a procedure by which there always exists a natural extension of a contact manifold to a symplectic one 
(see~\cite{Arnold}, Appendix 4).
	\begin{definition}\label{def:symplectification}\rm
	 Given a contact manifold, $(C,\mathscr{D})$, the \emph{symplectification}, $\tilde C$, of $C$ is:
	\begin{equation}\label{eq:symplectification}
	\tilde C:= \{ \beta_x\in T_x^*C : \ker\beta_x = \mathscr{D}_x\} \subset T^*C.
	\end{equation}  
	\end{definition}
Note that $\tilde C$ is an $\R^\times$ principal bundle over $C$ with (exact) symplectic structure $d\tilde\alpha$ defined by restriction of the canonical symplectic form on $T^*C$:
	\begin{equation}\label{eq:alpha}
		\tilde\alpha_{\tilde x}(\xi) := \tilde x((\pi_{C})_*\xi)\,,
	\end{equation} 
where $\pi_{C} :\tilde C\to C$ is the projection (see also~\cite{bruce2017remarks}).

\begin{remark}\rm\label{remark:symplect_components}
The symplectification of $C$ consists of the non-zero elements, $Ann(\mathscr{D})\backslash C$, of the annihilator of $\mathscr{D}$: $Ann(\mathscr{D}) := \{ \beta : \beta|_{\mathscr{D}} \equiv 0\}$ and so is naturally identified with the non-zero elements of $(TC/\mathscr{D})^*$. 

When the contact distribution is co-orientable ($TC/\mathscr{D}$ is a trivial line bundle), then so too is the symplectification a trivial principal bundle: $\tilde C \cong C\times \R^\times$, trivialized by a global choice of contact 1-form $\eta$. In this case $\tilde C$ has two connected components, $\tilde C_\pm$ and, when a contact 1-form $\eta$ has been chosen, it is convenient to refer to the component $\tilde C_+ := \{ e^s\eta\}$ consisting of positive multiples of $\eta$ as the symplectification.
\end{remark}

To understand further the relationship between $\tilde C$ and  $C$, we introduce the following definition. 
	\begin{definition}\label{def:LVF}\rm
	 Let $(M,\omega=d\lambda)$ be an exact symplectic manifold, where a choice for the symplectic potential $\lambda$ has been made.
	 The vector field $\X\in\mathfrak{X}(M)$ such that
	\begin{equation}\label{eq:Liouville}
	i_{\X}\omega=\lambda
	\end{equation}
	is called the \emph{Liouville vector field} of $\lambda$. 
	\end{definition}

Then, for $(\tilde C,\tilde\alpha)$ the symplectification of $(C,\eta)$, the generator, $\X$, of the $\R^\times$ action on $\tilde C$ is the  Liouville vector field of $\tilde\alpha$.
It turns out that there is a very special correspondence between contact vector fields on $C$ and certain Hamiltonian vector
fields on $\tilde C$.
To see this, observe that 
	symplectic flows of $\tilde X\in\mathfrak{X}(\tilde C)$ commuting with the flow of $\X$ induce flows of 
	$X = (\pi_{C})_*\tilde X\in\mathfrak{X}(C)$ preserving $\mathscr{D}$
	(contact flows).
This implies the following (see~\cite{Arnold}, Appendix 4)
	\begin{proposition}\label{prop:arnoldSHVSCH}\rm
	Every contact vector field on $C$ can be lifted to a symplectic Hamiltonian vector field on $\tilde C$ (the symplectification of $C$)
	 whose Hamiltonian is homogeneous of degree 1
	with respect to $\X$, that is, $L_{\X}\tilde H= \tilde H$. 
	Conversely, every Hamiltonian vector field on $\tilde C$ such that $\tilde H$ is homogeneous of degree 1 projects onto $C$ as a contact vector field. 
	\end{proposition}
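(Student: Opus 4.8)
The plan is to establish both directions by relating the homogeneity condition $L_{\X}\tilde H = \tilde H$ to the infinitesimal commutation $[\X, X_{\tilde H}] = 0$, and then to invoke the observation stated just before the proposition, namely that symplectic flows on $\tilde C$ commuting with the flow of $\X$ descend to flows on $C$ preserving $\mathscr{D}$. Throughout I would work with the tautological form $\tilde\alpha$ of the symplectification, $\omega = d\tilde\alpha$, and the Liouville vector field $\X$ of $\tilde\alpha$ (Definition~\ref{def:LVF}), which here generates the principal $\R^\times$-action and satisfies $L_\X\tilde\alpha = \tilde\alpha$ and $i_\X\tilde\alpha = 0$.

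For the converse direction I would start from a degree-one Hamiltonian $\tilde H$ on $\tilde C$ and compute $[\X, X_{\tilde H}]$ through its contraction with $\omega$. Using the identity $i_{[\X, X_{\tilde H}]}\omega = L_\X (i_{X_{\tilde H}}\omega) - i_{X_{\tilde H}}(L_\X\omega)$, together with $i_{X_{\tilde H}}\omega = -d\tilde H$ (our sign convention~\eqref{eq:XHdef}) and the Liouville property $L_\X\omega = \omega$, one finds $i_{[\X, X_{\tilde H}]}\omega = -d(L_\X\tilde H - \tilde H)$. Hence homogeneity of degree one forces $[\X, X_{\tilde H}] = 0$ by nondegeneracy of $\omega$, so the flow of $X_{\tilde H}$ commutes with that of $\X$; by the cited observation its projection $X := (\pi_{C})_* X_{\tilde H}$ is then a well-defined contact vector field on $C$.

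For the forward direction I would lift a given contact vector field $X$ to $\tilde C$. Since the flow of $X$ consists of contactomorphisms it preserves the annihilator $Ann(\mathscr{D})$, so its natural cotangent lift to $T^*C$ restricts to a flow on $\tilde C$; I would let $\tilde X$ denote the generator of this lifted flow. Because the cotangent lift preserves the tautological one-form, $L_{\tilde X}\tilde\alpha = 0$, and then $0 = d(i_{\tilde X}\tilde\alpha) + i_{\tilde X}\omega$ shows that $\tilde X = X_{\tilde H}$ for $\tilde H := i_{\tilde X}\tilde\alpha$ in the convention~\eqref{eq:XHdef}. Equivariance of the cotangent lift under the $\R^\times$-action gives $[\X, \tilde X] = 0$, and combined with $L_\X\tilde\alpha = \tilde\alpha$ this yields $L_\X\tilde H = i_{[\X,\tilde X]}\tilde\alpha + i_{\tilde X}(L_\X\tilde\alpha) = \tilde H$, i.e.\ homogeneity of degree one, with $(\pi_{C})_*\tilde X = X$ by construction.

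The main obstacle I anticipate is the forward direction: producing the commuting symplectic lift and confirming it is genuinely Hamiltonian with a degree-one Hamiltonian. To make this concrete I would pass to a trivialization $\tilde C \cong C\times\R^\times$ induced by a global contact form $\eta$, in which $\tilde\alpha = \tau\,\pi_{C}^*\eta$ and $\X = \tau\partial_\tau$, so that degree-one functions are precisely those of the form $\tilde H = \tau\,\pi_{C}^*h$ with $h:C\to\R$. I would then check that the $h$ arising from the lift is, up to sign, the contact Hamiltonian $-i_X\eta$ of $X$, thereby matching~\eqref{eq:XHcontactdef} and closing the correspondence. The remaining verifications — well-definedness of the projection and the equivariance identity — are routine once $[\X,\tilde X]=0$ is in hand.
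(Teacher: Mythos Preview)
Your proposal is correct. The paper itself does not give a self-contained proof of this proposition: it states the key observation (symplectic flows commuting with $\X$ descend to flows preserving $\mathscr{D}$) and then cites Arnold's Appendix~4 for the result; later (Remark~\ref{rmk:deLeonreduction} and Proposition~\ref{prop:contactreduction}) it mentions an alternative derivation of the degree-one case via the contact reduction of de~Le\'on et~al.

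Your converse direction is exactly the computation underlying the paper's observation, made explicit via the identity $i_{[\X,X_{\tilde H}]}\omega = -d(L_\X\tilde H - \tilde H)$. For the forward direction you supply what the paper omits: the cotangent lift of a contactomorphism, restricted to $Ann(\mathscr{D})\setminus C$, preserves $\tilde\alpha$, hence is Hamiltonian with $\tilde H = i_{\tilde X}\tilde\alpha$, and equivariance under the fiberwise $\R^\times$-action gives $[\X,\tilde X]=0$ and then $L_\X\tilde H = \tilde H$. This is the standard argument (essentially Arnold's) and is cleaner than the alternative route via Proposition~\ref{prop:contactreduction}, which requires the auxiliary construction on $N=\R\times M$. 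Your final check in a trivialization $\tilde C\cong C\times\R^\times$ that the resulting $h$ is $-i_X\eta$ is precisely the content of Remark~\ref{rem:contRels}.
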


\begin{remark}\label{rem:contRels}\rm
  When a contact 1-form, $\eta$, of $C$ has been chosen, any contact vector field $X$ on $C$ may be associated with a function $\mathscr{H} = - i_X\eta$. 
  The vector field is recovered from the function (and contact 1-form), by the conditions $i_X\eta = - \mathscr{H}, ~ L_X\eta\sim \eta$, 
  which define a contact Hamiltonian vector field (eq.~\ref{eq:XHcontactdef}). 
  According to Proposition~\ref{prop:arnoldSHVSCH}, there is a Hamiltonian, $\tilde H$, on $\tilde C$ determined by $X$, 
  which we refer to as the {\em symplectic lift} of $\H$, related to $\H$ by $-\iota^*\tilde H = \H$, 
  for $\iota:C\to \tilde C$ the section associated to $\eta$.
\end{remark}

See e.g.~\cite{Arnold} and~\cite{vanderschaft2018geomthermo,van2021liouville} for applications of this correspondence to thermodynamics. 
Above (Theorem~\ref{th:CHS} and Corollary~\ref{cor:coords}) we extend this relationship for Hamiltonians, 
$L_\X\tilde H = \Lambda\tilde H$ of degree $\Lambda$, on $\tilde C$. In fact, defining:

\begin{definition}\rm \label{def:LambdaHvf}
The {\em $\Lambda$-Hamiltonian vector field}, $X_{\mathscr{H}}^\Lambda$, 
of a function $\mathscr{H}$ on the contact manifold $C$ with respect to the contact 1-form $\eta$, is determined by the conditions:
\[ i_{X_\H^\Lambda}\eta = - \Lambda\H , ~~~ i_{X_\H^\Lambda}d\eta = d\H - (\mathscr{R}\H)~\eta.\]
\end{definition}

The results of Section~\ref{sec:symtocont} above may be stated in a more analogous manner to their degree one counterparts.

\begin{remark} \label{rem:LambdaHvf}\rm
 For $\Lambda = 1$, Definition~\ref{def:LambdaHvf} is exactly the definition of a contact Hamiltonian vector field. 
 The trajectories of a degree $\Lambda$ Hamiltonian, $\tilde H$, on $\tilde C$ project to (reparametrized) trajectories 
 of the $\Lambda$-Hamiltonian vector field on $C$ determined by $\mathscr{H} = - \iota^*\tilde H$, 
 where $\iota: C\to \tilde C$ is the section associated to $\eta$.
\end{remark}


As a completely analogous construction to the symplectification, 
if $(M,\omega)$ is exact symplectic, there is also a natural procedure to {extend} it to 
a contact manifold
(see~\cite{Arnold}, Appendix 4).
	\begin{definition}\label{def:contactification}\rm
	 Given an exact symplectic manifold $(M,\omega=d\lambda)$, the \emph{contactification} of $M$ is 
	\begin{equation}\label{eq:contactification}
	\tilde M:=M\times \R, \qquad \tilde{\lambda}:=dS-\pi_{M}^{*}\lambda\,,
	\end{equation}   
	where $\pi_{M}:\tilde M\rightarrow M$ is the standard projection.
	\end{definition}
Again, there is an explicit relationship between Hamiltonian vector fields $M$ and certain contact vector fields on $\tilde M$, given by
    \begin{proposition}\label{prop:arnoldCHVSSH}\rm
	Every symplectic Hamiltonian vector field on $M$ can be lifted to a contact Hamiltonian vector field on $\tilde M$ (the contactification of $M$)
	 whose Hamiltonian is 
	 	\begin{equation}\label{eq:contactHlift}	
		 \tilde \H=\pi_{M}^{*}H\,.
	 	\end{equation}
Conversely, every contact Hamiltonian vector field on $\tilde M$ such that $\mathscr{\tilde R}\tilde\H=0$ projects onto $M$ as a symplectic Hamiltonian vector field $X_{H}$.
	In this case we call $\tilde \H$ the \emph{contact lift of $H$}.
	\end{proposition}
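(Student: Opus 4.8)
The plan is to reduce everything to matching the two defining equations of a contact Hamiltonian vector field against Hamilton's equations, after first pinning down the Reeb field of the contactification. First I would confirm that $\tilde\lambda = dS - \pi_M^*\lambda$ from~\eqref{eq:contactification} is indeed a contact form: since $d\tilde\lambda = -\pi_M^*\omega$, the top form $\tilde\lambda\wedge(d\tilde\lambda)^n$ has only one surviving term, $(-1)^n\, dS\wedge\pi_M^*(\omega^n)$ (the purely pulled-back piece $\pi_M^*(\lambda\wedge\omega^n)$ vanishes, being the pullback of a $(2n+1)$-form on the $2n$-dimensional $M$), and this is nonzero because $\omega^n\ne 0$. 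The same computation shows the Reeb field is $\mathscr{\tilde R} = \partial_S$, since $i_{\partial_S}\tilde\lambda = 1$ while $i_{\partial_S}d\tilde\lambda = -i_{\partial_S}\pi_M^*\omega = 0$ (the pullback carries no $dS$ leg).

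For the forward direction I would take $\tilde\H = \pi_M^*H$ and note that, being $S$-independent, it satisfies $\mathscr{\tilde R}(\tilde\H) = \partial_S\tilde\H = 0$. The first relation in~\eqref{eq:XHcontactdef} then collapses to $i_{X_{\tilde\H}}d\tilde\lambda = d\tilde\H$, i.e.~$i_{X_{\tilde\H}}\pi_M^*\omega = -\pi_M^*dH$. Setting $Y := \pi_{M*}X_{\tilde\H}$ and using that contraction of a vector field with a pulled-back form factors through the projection, $i_{X_{\tilde\H}}\pi_M^*\omega = \pi_M^*(i_Y\omega)$, the injectivity of $\pi_M^*$ on forms (as $\pi_M$ is a submersion) gives $i_Y\omega = -dH$, hence $Y = X_H$ by~\eqref{eq:XHdef}. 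The second relation $i_{X_{\tilde\H}}\tilde\lambda = -\tilde\H$ then merely fixes the fibre component, yielding $\dot S = \lambda(X_H) - H$ (the Lagrangian, in Darboux coordinates); this is consistent with the lift but irrelevant to the projection claim.

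For the converse, the hypothesis $\mathscr{\tilde R}\tilde\H = \partial_S\tilde\H = 0$ says precisely that $\tilde\H$ is constant along the fibres of $\pi_M$, so $\tilde\H = \pi_M^*H$ for a well-defined $H$ on $M$, and the forward computation then applies verbatim to give $\pi_{M*}X_{\tilde\H} = X_H$. As a sanity check, I would also run the argument in Darboux coordinates $\eta = dS - p_a\,dq^a$ with $\mathscr{\tilde R} = \partial_S$: imposing $\partial_S\H = 0$ in~\eqref{eq:CHeq} reduces the first two contact equations to Hamilton's equations~\eqref{eq:SHeq} term by term, while the third merely evolves $S$. I do not expect a genuine obstacle; the argument is essentially bookkeeping once $\mathscr{\tilde R} = \partial_S$ is identified. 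The only point requiring care is checking that $i_{X_{\tilde\H}}\pi_M^*\omega$ actually descends to $M$, and this is exactly where the condition $\mathscr{\tilde R}\tilde\H = 0$ earns its keep, by annihilating the Reeb term in~\eqref{eq:XHcontactdef} that would otherwise spoil the horizontality of the projected dynamics.
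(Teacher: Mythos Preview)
Your argument is correct. The paper does not supply a formal proof of this proposition: it is stated as a known fact (with a reference to Arnold), and the only justification offered is the one-sentence observation immediately following the statement, namely that in Darboux coordinates $\tilde\lambda = dS - p_a\,dq^a$ the contact Hamiltonian equations~\eqref{eq:CHeq} with $\partial \tilde\H/\partial S = 0$ reproduce Hamilton's equations~\eqref{eq:SHeq}. That is exactly what you call your ``sanity check'' at the end.

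Your write-up therefore goes well beyond the paper's: you give an intrinsic, coordinate-free argument (verifying $\tilde\lambda$ is contact, identifying $\mathscr{\tilde R}=\partial_S$, and using the defining relations~\eqref{eq:XHcontactdef} together with the pointwise identity $i_X\pi_M^*\omega = \pi_M^*(i_{\pi_{M*}X}\omega)$ and non-degeneracy of $\omega$ to pin down the projection as $X_H$), whereas the paper is content with the Darboux-coordinate reduction. Both routes are valid; yours has the advantage of making transparent exactly where the hypothesis $\mathscr{\tilde R}\tilde\H=0$ is used (to kill the Reeb term so that the first equation in~\eqref{eq:XHcontactdef} becomes a pullback identity), while the paper's is shorter but leans on the existence of global Darboux coordinates on the contactification.
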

In Darboux coordinates $(q^{a},p_{a})$ for $M$, one has $\tilde\lambda=dS-p_{a}dq^{a}$
and the integral curves of $X_{\tilde \H}$ are given by~\eqref{eq:CHeq} by setting $\tilde \H(q^a,p_a,S):=H(q^a,p_a)$ (note that 
in particular $\partial \tilde\H/\partial S=0$).

Note that, by definition, the flow of $X_{H}$ is just the projection of that of $X_{\tilde \H}$.
However, on $\tilde M$ there can be defined contact Hamiltonians that are not lifts of any symplectic Hamiltonian on $M.$ 
This has been
exploited largely in recent years to describe mechanical systems with dissipation (classical and quantum), thermostatted systems and 
thermomechanical phenomena in a Hamiltonian framework (see e.g.~\cite{grmela1997dynamics,grmela2014contact,bravetti2015contact,bravetti2016thermostat,bravetti2018contact,ciaglia2018contact,bravetti2021geometric,goto2016contact,goto2021nonequilibrium,entov2021contact}).

Dual to the above constructions, in which we \emph{extended} either a contact or a symplectic manifold, under the appropriate hypotheses
one can also \emph{reduce} one of these structures to the other.
In particular, we have
	\begin{definition}\label{def:ofCtype}\rm
	 Let $(M,\omega=d\lambda)$ be an exact symplectic manifold and $\X\in\mathfrak{X}(M)$
	its \emph{Liouville vector field}. 
	Any hypersurface $C$ transverse to $\X$ is called $\emph{of contact type}$.
	\end{definition}
Hypersurfaces of contact type on an exact symplectic manifold are naturally endowed with a contact structure as follows.
	\begin{proposition}\label{prop:contacttype}\rm
	Let $(M,\omega=d\lambda)$ be an exact symplectic manifold, $\X$ the Liouville vector field and $\iota_{C}:C\rightarrow M$ a hypersurface of contact type.
	Then the 1-form $\eta_{\X}:=\iota_{C}^{*}i_{\X}\omega=\iota_{C}^{*}\lambda$ is a contact form on $C$. 
	\end{proposition}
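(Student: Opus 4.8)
The plan is to verify the maximal non\-integrability condition $\eta_\X\wedge(d\eta_\X)^{n-1}\neq 0$ directly, reducing it to the transversality hypothesis by way of a single pointwise linear\-algebra fact. First I would record the two preliminary identities that make the statement well posed. Since $\X$ is the Liouville vector field of $\lambda$ we have $i_\X\omega=\lambda$, so the two expressions $\iota_C^*i_\X\omega$ and $\iota_C^*\lambda$ defining $\eta_\X$ literally coincide; and since $\omega=d\lambda$ is exact, pullback commutes with $d$, giving $d\eta_\X=\iota_C^*d\lambda=\iota_C^*\omega$. Consequently the top\-degree form on the $(2n-1)$\-dimensional manifold $C$ that I must show is nowhere zero is
\begin{equation*}
\eta_\X\wedge(d\eta_\X)^{n-1}=\iota_C^*\!\left(\lambda\wedge\omega^{n-1}\right).
\end{equation*}

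Next I would rewrite the bracketed form on $M$ using the Liouville field. Because $i_\X$ is an antiderivation and $\omega$ is a $2$\-form, a short induction gives $i_\X\omega^n=n\,(i_\X\omega)\wedge\omega^{n-1}=n\,\lambda\wedge\omega^{n-1}$, so that
\begin{equation*}
\eta_\X\wedge(d\eta_\X)^{n-1}=\tfrac{1}{n}\,\iota_C^*\!\left(i_\X\omega^n\right).
\end{equation*}
Here $\omega^n$ is a volume form on $M$ by non\-degeneracy of $\omega$, and $\X$ is nowhere zero along $C$ since it is transverse to the hypersurface $C$; thus $i_\X\omega^n$ is a nowhere\-vanishing $(2n-1)$\-form on $M$ in a neighborhood of $C$.

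The crux is the pointwise statement that pulling this $(2n-1)$\-form back to $C$ keeps it nonzero precisely because of transversality, and this is the step I would be most careful about, since it is where the contact\-type hypothesis is actually used. I would isolate the linear\-algebra lemma: for a volume form $\mu$ on a $2n$\-dimensional space $V$, a nonzero $v\in V$, and a hyperplane $W\subset V$, the restriction $(i_v\mu)|_W$ is nonzero if and only if $v\notin W$. This is immediate by evaluating on a basis $w_1,\dots,w_{2n-1}$ of $W$: one has $(i_v\mu)(w_1,\dots,w_{2n-1})=\mu(v,w_1,\dots,w_{2n-1})$, which is nonzero exactly when $v,w_1,\dots,w_{2n-1}$ is a basis of $V$, i.e.\ when $v\notin W$. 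Applying this at each $p\in C$ with $V=T_pM$, $W=T_pC$, $\mu=\omega^n_p$ and $v=\X(p)$, the transversality $\X(p)\notin T_pC$ guaranteed by $C$ being of contact type yields $\iota_C^*(i_\X\omega^n)_p\neq 0$, and hence $\eta_\X\wedge(d\eta_\X)^{n-1}\neq 0$ everywhere on $C$. Therefore $\eta_\X$ is a contact form, as claimed; every other ingredient (the two preliminary identities and the $i_\X\omega^n$ computation) is routine.
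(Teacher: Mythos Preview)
Your argument is correct and is the standard one. The paper does not attach a dedicated proof to this proposition (it is stated in the appendix as a known fact), but exactly the same reasoning appears in the proof of Proposition~\ref{prop:quotient}(i), where the key identity $\lambda\wedge(d\lambda)^n = i_{\D}\omega^n$ is invoked to conclude non\-degeneracy on a transverse slice; your write\-up simply spells out the constant factor and the pointwise linear\-algebra lemma that makes transversality equivalent to the restriction being nonzero.
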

		\begin{example}[Transverse $H$ levels]\label{ex:H=E}\rm
	A typical example of a hypersurface of contact type is given by any regular energy level $C_{E}:=H^{-1}(E)$ of a Hamiltonian system 
	on the cotangent bundle $(T^{*}Q,d\alpha)$, where $\alpha$ is the canonical 1-form,
	which is transverse to the Liouville vector field. 
	Interestingly, in this case the restriction of the Hamiltonian vector field $X_{H}$ to $C_{E}$ 
	is just a reparametrization
	of the Reeb vector field of $\eta_{\X}$. Indeed, one has
	$\mathscr{R}=\frac{X_{H|C_{E}}}{\X(H)_{|C_{E}}},$
	and therefore the \emph{orbits} -- the unparametrized trajectories -- of the two vector fields are the same. 
	\end{example}

Analougously, one can reduce a contact manifold to an exact symplectic one on a proper ``energy hypersurface'' of the contact Hamiltonian 
and find an equivalence between 
the orbits of the contact Hamiltonian vector field and the Liouville one as follows (see~\cite{bravetti2020invariant}).
	\begin{definition}\label{def:ofStype}\rm
	 Let $(C,\eta)$ be a contact manifold and $\mathscr R\in\mathfrak{X}(C)$
	its \emph{Reeb vector field}. 
	Any hypersurface $S$ transverse to $\mathscr R$ is called $\emph{of symplectic type}$.
	\end{definition}
Hypersurfaces of symplectic type on a contact manifold are naturally endowed with a symplectic structure as follows.
	\begin{proposition}\label{prop:symplectictype}\rm
	Let $(C,\eta)$ be a contact manifold, $\mathscr R$ the Reeb vector field and $\iota_{S}:S\rightarrow C$ a hypersurface of symplectic type.
	Then the 2-form $\Omega=d\theta$, with $\theta = \iota_S^*\eta$ is an exact symplectic form on $S$. 
	\end{proposition}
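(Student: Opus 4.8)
The plan is to verify the two defining properties of an exact symplectic form on $S$ separately. Exactness (hence closedness) is immediate: by construction, and since pullback commutes with $d$, one has $\Omega = d\theta = d(\iota_S^*\eta) = \iota_S^*(d\eta)$, so $d\Omega = 0$ automatically and $\Omega$ is by definition exact. The entire content of the statement therefore lies in non-degeneracy, i.e.~in showing that $\Omega^n$ is nowhere vanishing on $S$, where $\dim S = 2n$.

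For non-degeneracy I would exploit the contact condition on $C$ together with the transversality of $\mathscr{R}$ to $S$. Recall that $\eta\wedge(d\eta)^n$ is a volume form on $C$ by maximal non-integrability. The key step is to contract this volume form with the Reeb field: using $i_{\mathscr{R}}\eta = 1$ and $i_{\mathscr{R}}d\eta = 0$ from the definition of $\mathscr{R}$ in Eq.~\eqref{eq:Reebdef}, together with the antiderivation property of the interior product, one computes
\[ i_{\mathscr{R}}\left(\eta\wedge(d\eta)^n\right) = (i_{\mathscr{R}}\eta)(d\eta)^n - \eta\wedge i_{\mathscr{R}}\left((d\eta)^n\right) = (d\eta)^n, \]
since $i_{\mathscr{R}}\left((d\eta)^n\right) = n\,(i_{\mathscr{R}}d\eta)\wedge(d\eta)^{n-1} = 0$. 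Hence $(d\eta)^n = i_{\mathscr{R}}\left(\eta\wedge(d\eta)^n\right)$ as forms on $C$, and pulling back to $S$ yields $\Omega^n = \iota_S^*(d\eta)^n = \iota_S^*\,i_{\mathscr{R}}\left(\eta\wedge(d\eta)^n\right)$.

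The final step is the standard observation that contracting a volume form with a transverse vector field and restricting produces a volume form on the hypersurface. Concretely, at $p\in S$ choose a basis $e_1,\dots,e_{2n}$ of $T_pS$; since $S$ is of symplectic type, $\mathscr{R}_p$ is transverse to $T_pS$, so $\mathscr{R}_p, e_1,\dots,e_{2n}$ is a basis of $T_pC$, and by the definition of the interior product
\[ \Omega^n(e_1,\dots,e_{2n}) = \left(\eta\wedge(d\eta)^n\right)(\mathscr{R}_p, e_1,\dots,e_{2n}) \neq 0, \]
the last inequality holding because $\eta\wedge(d\eta)^n$ is nowhere vanishing. Thus $\Omega^n\neq 0$ everywhere on $S$ and $\Omega$ is symplectic. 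This argument mirrors the proof of Proposition~\ref{prop:contacttype}, with $\mathscr{R}$ playing the role that the Liouville field plays there; I expect the only place demanding real care to be the combinatorial and sign bookkeeping in the contraction identity, as the remaining steps are purely formal.
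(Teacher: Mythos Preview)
Your argument is correct. The paper itself does not supply a proof of this proposition: it is stated as a standard fact (dual to Proposition~\ref{prop:contacttype}, also left unproved), with further details deferred to the reference~\cite{bravetti2020invariant}. Your computation $i_{\mathscr{R}}\bigl(\eta\wedge(d\eta)^n\bigr) = (d\eta)^n$ together with the transversality of $\mathscr{R}$ to $S$ is exactly the standard way to establish non-degeneracy, and the bookkeeping you flag as the only delicate point is handled correctly. So there is nothing to compare against; you have simply filled in a proof the paper omits.
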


\begin{example}[Transverse $\H$ 0-level]\label{ex:cH=E}\rm
Let $\H: C \to \R$ be a Hamiltonian function on a contact manifold $(C, \eta)$ with Reeb vector field $\mathscr R$ 
and assume that $S = \H^{-1}(0) \not=\emptyset$ and that $\mathscr R(\H)(x)\not=0,$ for all $x\in S.$ Then $(S,\Omega)$
is an exact symplectic manifold as above. Moreover, 
if $\Delta$ is the Liouville vector field of the exact symplectic manifold $(S,\Omega)$, 
then $X_{\H|S}$ is the reparametrization of $\Delta$ given by 
$X_{\H|S}=-(\mathscr R(\H)\circ \iota_S)\Delta.$
\end{example}

Finally, let us also note that all the above relationships have motivated the definition of a \emph{symplectic sandwich with contact bread}, see~\cite{bravetti2020invariant},
where it was proved that the existence of an invariant measure for $X_{\H|S}$ is equivalent to the existence of a symplectic sandwich with contact bread. 

\section{Herglotz' Principle}\label{app:Herglotz}

In this appendix, we will consider certain contact reductions
admitting variational descriptions (details will be left to the reader). 
We refer to eg~\cite{georgieva2003generalized,vermeeren2019contact,de2019singular} for the variational formulation of contact
systems and to~\cite{wang2016implicit,wang2019aubry,liu2018contact,liu2021orbital} 
for the use of  variational techniques in order to study dynamical properties and solutions of the related Hamilton-Jacobi equations.

Let us recall that certain Hamiltonian systems admit Lagrangian variational descriptions.
	\begin{definition}\rm
	A \emph{Lagrangian system} is a pair $(Q,L)$, where $L:TQ\rightarrow \R$. The {\em action} of a curve $q(t)\in Q$ is $S:= \int L(q, \dot q)~dt$.
	\end{definition}

	
The extremals of the action, with say fixed endpoints, satisfy the well-known Euler-Lagrange equations:
\begin{equation}
 \frac{d}{dt}\left(\frac{\partial L}{\partial \dot{q}} \right) = \frac{\partial L}{\partial q}
\end{equation}



The contact analogue of a Lagrangian system is
 
 	\begin{definition}\rm
	A \emph{Herglotz Lagrangian system} (also known as a \emph{contact Lagrangian system}) is a pair $(\bar Q,\mathscr{L})$, where $\mathscr{L}:T\bar Q \times \R \rightarrow \R$. The {\em action} of a curve 
	$q(t)\in \bar Q$ 
	is $\Delta S :=\int \mathscr{L}(q,\dot{q},S) dt$, where $S(t)$ solves $\dot S = \mathscr{L}(q, \dot q, S)$.
	\end{definition}

 This definition extends the scope of Lagrangian systems to allow for dependence on the action. 
 The extremals of this action, with say fixed endpoints and initial condition $S(0)=S_0$, 
 satisfy the Herglotz-Lagrange equations:
 \begin{equation}
     \frac{d}{dt}\left(\frac{\partial \mathscr{L}}{\partial \dot{q}}\right) =  \frac{\partial \mathscr{L}}{\partial S}\frac{\partial \mathscr{L}}{\partial \dot{q}} + \frac{\partial \mathscr{L}}{\partial q} , \qquad \dot S = \mathscr{L}.
 \end{equation}
 
 The connection between Herglotz Lagrangian systems and contact Hamiltonian systems runs in parallel to the usual connection between Lagrangian and Hamiltonian systems. Under the {\em Legendre transform},
 \[ Leg: T\bar Q\times\R \to T^*\bar Q\times\R , \qquad  p = \frac{\del \mathscr{L}}{\del \dot q}, \]
  the Herglotz-Lagrangian extremals correspond to trajectories of the contact Hamiltonian system
  \begin{equation}
 C=T^*\bar Q \times \R, \qquad  \H = p \cdot \dot{q} - \mathscr{L}, \qquad \eta = dS - p\cdot dq.
 \end{equation}
 
 
 \begin{remark}\rm
     Here we consider systems whose Legendre transform is a diffeomorphism. One way to see the connection of a contact Hamiltonian system to its Herglotz variational principle is via the Poincar\'e-Cartan form: $\alpha := p\cdot dq - \mathscr{H}~dt$ and $\hat \eta := dS - \alpha = \eta + \mathscr{H}~dt$ on the extended space, $C\times\R\ni (c, t)$. The world lines of trajectories of the contact Hamiltonian system are characterized by lying tangent to the line field $\ker\hat\eta \cap \{X: i_Xd\alpha\sim \eta\}$, or equivalently as extremals of $\gamma\mapsto \int_\gamma \alpha$ among curves satisfying the constraint $\hat\eta|_{\gamma}\equiv 0$.
 \end{remark}
 
 ~~One may obtain `Lagrangian analogues' of our results above by applying the Legendre transform to 
 Hamiltonian systems and their contact reductions. 
 It is instructive to first consider a local coordinate analogue of Proposition~\ref{prop:arnoldSHVSCH}, or Corollary~\ref{cor:D1} (the degree one case).
 
 \begin{example}\label{ex:deg1LagRed}\rm
    Consider a Herglotz-Lagrangian system $\mathscr{L}(q, \dot q, S)$ corresponding under Legendre transform to the contact Hamiltonian system $\mathscr{H}(q,p,S),~\eta = dS - p\cdot dq$. The contact Hamiltonian system is the scale reduction of the Hamiltonian system:
    \[ H(\rho, q, S, P) = - \rho\mathscr{H}(q, -P/\rho, S)\]
    on the symplectification, $\omega = d\tilde\alpha = d (\rho~dS + P\cdot dq)$, by the degree one scaling symmetry $\D = \rho\del_\rho + P\cdot \del_P$. We identify this symplectification with the cotangent bundle of $ Q\ni (\rho, q)$ by taking $S = - p_\rho$, with canonical 1-form:
    \[ \lambda_{o} = p_\rho~d\rho + P\cdot dq = \tilde\alpha + d(\rho p_\rho).\]
    For which $S = - \frac{i_\D\lambda_o}{\rho}$. Now, the Legendre transform of $H$ is the Lagrangian system $L = - (\dot\rho S + \rho \mathscr{L})$, or
    \[ \mathscr{L} = - \frac{ L + \dot\rho S }{\rho}\]
    is our Herglotz-Lagrangian system on $q\in \bar Q =  Q/\rho\del_\rho$, for the scale reduction of the Lagrangian system $L$ by the scaling symmetry $\D_L = \rho\del_\rho + \dot\rho\del_{\dot\rho}$ (corresponding to $\D$ under Legendre transform).
    We note that we have as well the equivalent coordinate expressions:
    \[\mathscr{L} = - \del_\rho L ,~~ S = - \del_{\dot\rho}L.\]
 \end{example}
 
 
 
 
 
 Only certain Hamiltonian systems on cotangent bundles admit Lagrangian variational descriptions, 
 and as well only certain contact Hamiltonian systems admit Herglotz variational descriptions. 
 We will consider the following special case of scale reduction.
 
 \begin{proposition}[Cotangent contact reduction] \rm \label{prop:cotRed}
  Let $H, T^*Q$ be a Hamiltonian system admitting a scaling symmetry, $\D$, of degree $\Lambda$ such that
  \[   L_{\D}\lambda_o = \lambda_o\]
  where $\lambda_o$ is the canonical 1-form on $T^*Q$ ($\D$ being a \emph{basic scaling symmetry}). 
  Then $\D$ projects to a vector field, $\bar\D$, on $Q$. Assume that the flow of $\bar \D$ acts freely and properly on $Q$, and set
  \[ \bar Q := Q/\bar\D.\]
  Then, given a scaling function $\rho:Q\to\R^\times$ of $\bar \D$, one may identify:
  \[ C = (T^*Q)/\D \cong T^*\bar Q\times\R \ni (\bar q, \bar p , S)\]
  with contact 1-form $\eta = dS - \bar\lambda_o$, where $\bar\lambda_o$ is the canonical 1-form on $T^*\bar Q$, and $S = - i_\D\lambda_o/\rho$.
 \end{proposition}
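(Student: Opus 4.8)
The plan is to first pin down the rigid structure that the hypothesis $L_{\D}\lambda_o=\lambda_o$ imposes on $\D$, and then to read off the claimed identification in coordinates adapted to $\bar\D$ and $\rho$. The key observation is that the fibre (Euler) vector field $\Delta_0:=p_a\partial_{p_a}$ on $T^*Q$ is the Liouville vector field of $\lambda_o$ in the sense of Definition~\ref{def:LVF}, i.e.\ $i_{\Delta_0}\omega=\lambda_o$, and it satisfies $L_{\Delta_0}\lambda_o=\lambda_o$ (since $i_{\Delta_0}\lambda_o=0$). Therefore $L_{\D-\Delta_0}\lambda_o=0$, so $\D-\Delta_0$ preserves the canonical $1$-form. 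Writing $f:=i_{\D-\Delta_0}\lambda_o$, the identity $L_{\D-\Delta_0}\lambda_o=0$ gives $i_{\D-\Delta_0}\omega=-df$, and evaluating $f=\lambda_o(X_f)=p_a\partial_{p_a}f$ shows that $f$ is fibrewise homogeneous of degree one, hence fibrewise linear, $f=p_a v^a(q)$. Thus $\D-\Delta_0=\widehat v$ is the cotangent lift of $v=v^a\partial_{q^a}$; in particular the $\partial_{q^a}$-component of $\D$ depends only on $q$, so $\D$ is $\pi_Q$-projectable with $\bar\D:=\pi_{Q*}\D=v$. This proves the first assertion, and $\pi_Q^*\rho$ is then a scaling function for $\D$ on $T^*Q$.

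Next I would make the quotient explicit. Since $\rho$ is nowhere zero and $\bar\D\rho=\rho$, along each orbit $\rho=e^{s}\rho(0)$, so (using that the action is free and proper) the hypersurface $\{\rho=1\}\subset Q$ is a global slice, identifying it with $\bar Q$ and trivialising $Q\cong\bar Q\times\R$ through $\rho$. Choosing $\bar\D$-invariant coordinates $\bar q^i$ on $\bar Q$ and $\rho$ as fibre coordinate gives $\bar\D=\rho\partial_\rho$, and lifting to canonical coordinates $(\bar q^i,\rho,p_i,p_\rho)$ on $T^*Q$ with $\lambda_o=p_i\,d\bar q^i+p_\rho\,d\rho$, the cotangent lift is $\widehat{\bar\D}=\rho\partial_\rho-p_\rho\partial_{p_\rho}$. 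Adding $\Delta_0=p_i\partial_{p_i}+p_\rho\partial_{p_\rho}$, the $p_\rho$-terms cancel and $\D=\Delta_0+\widehat{\bar\D}=\rho\partial_\rho+p_i\partial_{p_i}$. I expect this cancellation to be the clean computational core of the argument.

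Finally I would identify the quotient and match the contact form. The functions $\bar q^i$, $\bar p_i:=-p_i/\rho$ and $S:=-p_\rho$ are $\D$-invariant, and since $i_{\D}\lambda_o=\rho p_\rho$ one has $S=-i_{\D}\lambda_o/\rho$ as claimed; as $\dim C=2\dim Q-1=\dim(T^*\bar Q\times\R)$ they furnish coordinates exhibiting $C\cong T^*\bar Q\times\R$, with $(\bar q^i,\bar p_i)$ canonical on $T^*\bar Q$. Computing $\lambda=i_{\D}\omega=p_i\,d\bar q^i-\rho\,dp_\rho$ and dividing by $\rho$ gives, using $\bar p_i=-p_i/\rho$ and $dS=-dp_\rho$, the equality $\pi^*\eta=\lambda/\rho=(p_i/\rho)\,d\bar q^i-dp_\rho=dS-\bar p_i\,d\bar q^i=dS-\bar\lambda_o$, with $\bar\lambda_o=\bar p_i\,d\bar q^i$ the canonical $1$-form on $T^*\bar Q$. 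The sign in $\bar p_i=-p_i/\rho$ is precisely the reduced-momentum convention $-P/\rho$ already encountered in Example~\ref{ex:deg1LagRed}, so the result is consistent with the rest of the paper.

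The main obstacle is the structural first step: upgrading the pointwise condition $L_{\D}\lambda_o=\lambda_o$ to the global splitting of $\D$ into the fibre Euler field plus a cotangent lift. The characterisation of $\lambda_o$-preserving fields as cotangent lifts rests on $f=i_X\lambda_o$ being fibrewise linear, which I would justify from its degree-one fibre homogeneity together with smoothness across the zero section (arguing on $T^*Q$ off the zero section and extending if needed). Everything after this step is bookkeeping; the only genuine care is the sign convention for the reduced momenta and the use of the free and proper hypothesis with a nowhere-vanishing $\rho$ to ensure $\{\rho=1\}$ is a global slice, so that the identification $C\cong T^*\bar Q\times\R$ holds globally rather than only locally.
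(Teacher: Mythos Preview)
Your proof is correct. The paper does not actually supply a proof of this proposition; it is stated and then immediately used, with the implicit justification being the coordinate computation of Remark~\ref{rem:coordinates} (the extension of Darboux contact coordinates to symplectic coordinates $P_0=\rho,\,Q_0=S,\,P=-\rho p,\,Q=q$ with $\lambda=\rho\eta$) together with Example~\ref{ex:deg1LagRed}, which runs the construction in the reverse direction starting from a given Herglotz system. Your steps 2--4 are exactly this coordinate computation, and your sign convention $\bar p_i=-p_i/\rho$ is consistent with the paper's $P=-\rho p$.

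Where you go beyond the paper is your first step: the structural decomposition $\D=\Delta_0+\widehat{\bar\D}$ obtained from $L_{\D-\Delta_0}\lambda_o=0$ and the characterisation of $\lambda_o$-preserving vector fields as cotangent lifts. The paper simply asserts that $\D$ projects to $Q$ without argument, whereas you actually prove it, and in doing so obtain the clean cancellation of the $p_\rho$-terms that makes the adapted-coordinate expression $\D=\rho\partial_\rho+p_i\partial_{p_i}$ manifest. This is a genuine improvement in rigour over what the paper offers; the fibrewise-linearity of $f=i_{\D-\Delta_0}\lambda_o$ from degree-one homogeneity plus smoothness at the zero section is standard and your caution there is appropriate but not an obstacle.
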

 

The description of a scale reduction of a Lagrangian system may be obtained by applying the Legendre transform 
to a Hamiltonian system and its contact reduction of the form in Proposition~\ref{prop:cotRed}. 
The relationship is laid out in the following commutative diagram:
\[ \begin{tikzcd}
TQ, L \arrow[leftrightarrow]{r}{Legendre} \arrow[swap]{d}{Quotient} & T^*Q, H \arrow{d}{Quotient} \\%
T\bar Q\times \R, \mathscr{L}~~ \arrow[leftrightarrow]{r}{Legendre}& ~~T^*\bar Q\times\R, \mathscr{H}
\end{tikzcd}
\]

Thus, on the Lagrangian side, we define:

\begin{definition} \rm
    A \emph{basic scaling symmetry} of {\em degree} $\Lambda$ for a Lagrangian system $(Q, L)$ is a vector field $\D$ on $TQ$ such that 
    $\D(L)= \Lambda L$ and $L_{\D} \lambda_L = \lambda_L$, where $\lambda_L = \del_{\dot q} L\cdot dq$ 
    is the {\em Lagrangian 1-form} (the pullback of the canonical 1-form by the Legendre transform).
\end{definition}

As we have remarked above (Remark~\ref{rem:LambdaHvf}), the trajectories of a degree $\Lambda$ Hamiltonian system scale reduce to trajectories of a $\Lambda$-Hamiltonian vector field on $C$. Such trajectories may be described variationally.

\begin{definition} \rm
    A {\em $\Lambda$-Herglotz system} is a pair, $(\bar Q, \mathscr{L})$, 
    where $\mathscr{L}: T\bar Q\times\R\to \R$. 
    The action of a curve $q(t)\in\bar Q$ is $S_\Lambda := \int \mathscr{L}(q, q', S)~dt$, where $S$ solves $S' = \mathscr{L} + (1-\Lambda)\mathscr{E}$, for $\mathscr{E} := Leg^*\mathscr{H} = \del_{q'}\mathscr{L}\cdot q' - \mathscr{L}$.
\end{definition}

\begin{remark}\rm
    Under Legendre transform, $\mathscr{H} = p\cdot \dot q - \mathscr{L}$, 
    extremals of the $\Lambda$-Herglotz system $\mathscr{L}$ correspond to trajectories of the $\Lambda$-Hamiltonian vector field of $\mathscr{H}$ (Definition~\ref{def:LambdaHvf}). 
    One way to see this connection is via the Poincar\'e-Cartan form: 
    $\alpha := p\cdot dq - \mathscr{H}~dt$ and $\hat \eta_\Lambda := \eta + \Lambda\mathscr{H}\,dt$ on the extended space, $C\times\R\ni (c, t)$. 
    The world lines of trajectories of the $\Lambda$-Hamiltonian vector field 
    are characterized by lying tangent to the line field $\ker\hat\eta_\Lambda \cap \{X: i_Xd\alpha \sim \hat\eta_\Lambda\}$, 
    or equivalently as extremals of $\gamma\mapsto \int_\gamma \alpha$ among curves satisfying 
    the constraint $\hat\eta_\Lambda|_{\gamma}\equiv 0$. 
    Alternately, one may verify the correspondence directly via the equations of motion (eqs.~\eqref{eq:contCoords} above). 
\end{remark}


With this definition, we have:

\begin{theorem}[Counterpart to Theorem~\ref{th:CHS}] \label{thm:LagReduce} \rm
Let $(Q,L)$ be a Lagrangian system, admitting a basic scaling symmetry, $\D$, of degree $\Lambda$ 
and a scaling function $\rho:Q\to \R^\times$. Then one may identify 
\[\pi: TQ\to (TQ)/\D \cong T\bar Q \times\R \ni (\bar q, \bar q', S)\]
where $\pi^*S = - i_{\D}\lambda_L /\rho$. The extremals of the Lagrangian system project to extremals of the $\Lambda$-Herglotz system:
\[\pi^*\mathscr{L} = - \frac{L + \dot\rho S}{\rho^\Lambda}.\]
\end{theorem}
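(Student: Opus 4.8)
The plan is to read the statement off the Hamiltonian side through the commutative square displayed just above the theorem: move right along the Legendre transform $Leg:TQ\to T^*Q$, descend along the cotangent contact reduction of Proposition~\ref{prop:cotRed} (itself an instance of Theorem~\ref{th:CHS}), and then move back left along the contact Legendre transform to land on the $\Lambda$-Herglotz system. Concretely, I would let $E_L:=\dot q\cdot\del_{\dot q}L-L$ be the energy and let $H$ be the function on $T^*Q$ with $Leg^*H=E_L$, so that $(T^*Q,H)$ is the Hamiltonian partner of $(Q,L)$ while $Leg^*\lambda_o=\lambda_L$ identifies the two tautological one-forms.

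\emph{First leg (Legendre to $T^*Q$).} I would show that $Leg_*\D$ is a basic scaling symmetry of degree $\Lambda$ for $(T^*Q,H)$, i.e. that it satisfies the hypotheses $L_{Leg_*\D}\lambda_o=\lambda_o$ and $L_{Leg_*\D}H=\Lambda H$ of Proposition~\ref{prop:cotRed}. The first is immediate from $L_\D\lambda_L=\lambda_L$ and $Leg^*\lambda_o=\lambda_L$. For the second, since $Leg^*H=E_L$, it suffices to prove $\D(E_L)=\Lambda E_L$, and as $\D(L)=\Lambda L$ this reduces to $\D(\dot q\cdot\del_{\dot q}L)=\Lambda\,\dot q\cdot\del_{\dot q}L$. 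In local coordinates, $L_\D\lambda_L=\lambda_L$ forces $\D$ to project to the base ($\D(q^a)=\eta^a(q)$) and pins down $\D(\del_{\dot q^a}L)=\del_{\dot q^a}L-(\del_{q^a}\eta^c)\,\del_{\dot q^c}L$; combining this with the $\dot q$-derivative of $\D(L)=\Lambda L$ yields the required homogeneity. With this in hand Proposition~\ref{prop:cotRed} applies verbatim, giving $C=(T^*Q)/\D\cong T^*\bar Q\times\R$ with $\eta=dS-\bar\lambda_o$ and $S=-i_{Leg_*\D}\lambda_o/\rho$; pulling back through $Leg$ and using $i_\D\lambda_L=Leg^*(i_{Leg_*\D}\lambda_o)$ gives exactly $\pi^*S=-i_\D\lambda_L/\rho$.

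\emph{Second leg (down and back left).} By Theorem~\ref{th:CHS}(i) together with Corollary~\ref{cor:coords} and Remark~\ref{rem:LambdaHvf}, the trajectories of $X_H$ project to trajectories of the $\Lambda$-Hamiltonian vector field on $C$ of the contact Hamiltonian $\mathscr{H}_0$ with $\pi^*\mathscr{H}_0=-H/\rho^\Lambda$. Applying the inverse contact Legendre transform, $\mathscr{H}_0=\bar p\cdot\bar q'-\mathscr{L}$, these correspond to extremals of the $\Lambda$-Herglotz system $\mathscr{L}$ of the remark following that definition; here it is essential that the modified action equation $S'=\mathscr{L}+(1-\Lambda)\mathscr{E}$ and the reparametrization $\rho^{1-\Lambda}d\tau=dt$ of Remark~\ref{rem:coordinates} are mutually consistent. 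It then remains to identify $\pi^*\mathscr{L}$: repeating the coordinate computation of Example~\ref{ex:deg1LagRed} but keeping the degree-$\Lambda$ weight—so that the denominator is $\rho^\Lambda$ rather than $\rho$ and the term $\dot\rho\,S$ records the failure of $\rho$ to be $\D$-invariant—expresses $\pi^*\mathscr{H}_0=(L-\dot q\cdot\del_{\dot q}L)/\rho^\Lambda$ in the form $\pi^*\mathscr{L}=-(L+\dot\rho S)/\rho^\Lambda$, as claimed.

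\emph{Main obstacle.} I expect the one genuinely non-formal step to be the homogeneity identity $\D(\dot q\cdot\del_{\dot q}L)=\Lambda\,\dot q\cdot\del_{\dot q}L$ underlying $\D(E_L)=\Lambda E_L$: this is precisely where the two hypotheses $L_\D\lambda_L=\lambda_L$ and $\D(L)=\Lambda L$ must be used together, and where regularity of $L$ (invertibility of the Hessian $\del_{\dot q}\del_{\dot q}L$) enters to solve for the fiber components of $\D$. Everything downstream—Proposition~\ref{prop:cotRed}, Theorem~\ref{th:CHS}, and the contact Legendre correspondence—is then quotable, and the final bookkeeping of the $\rho^\Lambda$ weight and the $\dot\rho\,S$ term is a routine extension of Example~\ref{ex:deg1LagRed}.
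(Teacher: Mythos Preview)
Your proposal is correct and follows exactly the route the paper indicates. The paper does not give a formal proof of this theorem; rather, just before the statement it displays the commutative square
\[
\begin{tikzcd}
TQ, L \arrow[leftrightarrow]{r}{Legendre} \arrow[swap]{d}{Quotient} & T^*Q, H \arrow{d}{Quotient} \\
T\bar Q\times \R, \mathscr{L}~~ \arrow[leftrightarrow]{r}{Legendre}& ~~T^*\bar Q\times\R, \mathscr{H}
\end{tikzcd}
\]
and says that the Lagrangian analogue ``may be obtained by applying the Legendre transform to Hamiltonian systems and their contact reductions of the form in Proposition~\ref{prop:cotRed},'' with ``details left to the reader.'' Your two legs (Legendre to $T^*Q$, then cotangent contact reduction via Proposition~\ref{prop:cotRed} and Theorem~\ref{th:CHS}/Remark~\ref{rem:LambdaHvf}, then back via the contact Legendre transform) are precisely this diagram chase, and your identification of $\D(E_L)=\Lambda E_L$ as the one non-formal step is apt: it is exactly where both hypotheses $L_\D\lambda_L=\lambda_L$ and $\D(L)=\Lambda L$ are needed simultaneously. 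The final bookkeeping you outline is the degree-$\Lambda$ generalization of Example~\ref{ex:deg1LagRed}, as the paper intends.
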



\begin{remark}\rm
 As in the Hamiltonian case (see Remark~\ref{rem:coordinates} above), 
 the scale-reduced trajectories are reparametrized according to \[\rho^{1-\Lambda}d\tau = dt,\]
 and we write $' = \frac{d}{d\tau}$. 
 Note for $\Lambda = 1$, the Lagrangian system scale reduces to a Herglotz-Lagrangian system, 
 as we saw in Example~\ref{ex:deg1LagRed} above.
\end{remark}

\begin{example}[The 2d harmonic oscillator]\label{ex:LagHooke} \rm
Let us consider the two dimensional harmonic oscillator of Example~\ref{ex:2DHO}, with Lagrangian
\begin{equation}
    L = \frac{\dot{r}^2 + r^2 \dot{\theta}^2 - k r^2}{2},
\end{equation}
in polar coordinates. 
We have the basic scaling symmetry: $\D=\frac{1}{2}(r \partial_r + \dot{r} \partial_{\dot{r}})$, of degree one, 
and scaling function $\rho=r^2$. Following Theorem~\ref{thm:LagReduce}, we write 
\begin{equation}
    S = -\frac{i_\D\lambda_L}{\rho} = -\frac{\dot\rho}{4\rho} , \qquad L = \frac{-\dot\rho S + \rho\dot\theta^2 - k\rho}{2}, 
\end{equation}
to obtain the scale-reduced Herglotz Lagrangian system: 
\begin{equation}
    \mathscr{L} = -\frac{L + \dot\rho S}{\rho} = 2S^2  +  \frac{k - \dot{\theta}^2}{2} 
\end{equation}
on $TS^1\times\R\ni (\theta, \dot\theta, S)$. 
Note that $\mathscr{L}$ is the (contact) Legendre transform of $\mathscr{H} = - 2S^2 - \frac{k  + \bar p_\theta^2}{2}$ on $T^*S^1\times \R$ 
obtained by contact reduction of $H = \frac{|p|^2 + k|q|^2}{2}$ by the scaling symmetry $\frac{q\cdot \del_q + p\cdot\del_p}{2}$ and $\rho = |q|^2$.
\end{example}


\begin{example}[Contact-reduced Kepler] \label{ex:LagK}\rm
We consider the Kepler system of Example~\ref{ex:Kep_scal}, with Lagrangian:
\[ L = \frac{|\dot q|^2}{2} + \frac{1}{|q|}\]
and basic scaling symmetry $\D_K = 2q\cdot \del_q - \dot q\cdot \del_{\dot q}$, of degree $\Lambda = -2$, and scaling function $\rho = |q|^{1/2}$. 
Following Theorem~\ref{thm:LagReduce}, we take polar coordinates and write:
\[ S = - \frac{i_\D\lambda_L}{\rho} = - 4\rho^2\dot\rho, \qquad L = \frac{-\dot\rho S + \rho^4 \dot\theta^2}{2} + \frac{1}{\rho^2}\]
to obtain the scale-reduced $-2$-Herglotz system:
\[ \mathscr{L} = - \rho^2( L + \dot\rho S) = \frac{S^2}{8} - \frac{(\theta')^2}{2} - 1\]
on $TS^1\times\R\ni (\theta, \theta', S)$ (using $'$ to denote $\frac{d}{d\tau}$ where $\rho^3 d\tau = dt$). 
Again, note that $\mathscr{L}$ is related by Legendre transform to the $-2$-Hamiltonian system, 
$\mathscr{H}$, obtained by an analogous contact reduction of the Legendre transform of $L$. 
We also observe that the reparametrization $\rho^3d\tau = dt$ has:
\[L~dt = \rho \left(\frac{2\rho'^2}{\rho^2} +\frac{\theta'^2}{2} +1 \right) d\tau = \tilde L~d\tau\]
where $\mathscr{L} = -\frac{\tilde L + \rho'S}{\rho}$ gives the scale-reduced $-2$-Herglotz system ($S = -4\frac{\rho'}{\rho}$).
\end{example}

Finally, we consider a counterpart to Theorem~\ref{thm:lifted_red} 
for certain Lagrangian systems depending on parameters, $a_j$, of the form:
\begin{equation}\label{eq:NatLagCoupled}
    L_a =  T(q, \dot q) + \sum_{j=1}^k a_j L_j(q)
\end{equation}
and for which we have a vector field $\D$ on $TQ$ with
\begin{equation}\label{eq:NatLagSS}
    {\D}(T) = T ,  \qquad {\D}(L_j) = \Lambda_j L_j , \qquad L_{\D}\lambda_L = \lambda_L\,.
\end{equation}
Note that the first condition in~\eqref{eq:NatLagSS} 
fixes the parametrization of $\D$, being the precise analogue of the first condition in Definition~\ref{def:scaling}.
\begin{theorem}[Counterpart to Theorem~\ref{thm:lifted_red}]\label{thm2:laglifted}\rm
Consider a Lagrangian system depending on parameters as in eq.~\eqref{eq:NatLagCoupled}, 
and admitting a vector field $\D$ as in eq.~\eqref{eq:NatLagSS}. 
Then the Lagrangian system
\[ \hat L := T +  \sum_{j=1}^k  \dot X_j \left(\frac{L_j}{\dot X_j}\right)^{\frac{1}{\Lambda_j}} \]
on $\hat Q := Q\times\R^k\ni (q, X)$ admits $\D$ as a basic degree one scaling symmetry. 
Moreover, the projection to $Q$ of an extremal of $\hat L$ is an extremal of $L_a$, with parameter values
\[ \Lambda_j a_j = \left( \frac{\dot X_j}{L_j}\right)^{1 - \frac{1}{\Lambda_j}}\]
constant over the extremals of $\hat L$.
\end{theorem}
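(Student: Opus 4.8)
The plan is to exploit two structural features of $\hat L$: first, that each new variable $X_j$ is \emph{cyclic} (the Lagrangian depends on $\dot X_j$ but not on $X_j$), and second, that after rewriting $\hat L = T + \sum_{j=1}^k \dot X_j^{\,1-1/\Lambda_j} L_j^{1/\Lambda_j}$ each summand is homogeneous of degree one under $\D$. First I would record the elementary consequence of $\D(L_j) = \Lambda_j L_j$ that
\[ \D\!\left(L_j^{1/\Lambda_j}\right) = \tfrac{1}{\Lambda_j} L_j^{1/\Lambda_j - 1}\,\D(L_j) = L_j^{1/\Lambda_j}, \]
so that $L_j^{1/\Lambda_j}$ is a degree-one function for $\D$. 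Throughout, $\D$ is regarded as a vector field on $T\hat Q = TQ\times\R^{2k}$ by extending it trivially in the $(X,\dot X)$ directions, so that it annihilates $X_j,\dot X_j, dX_j$ and $d\dot X_j$.

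For the first claim I would verify the two defining conditions of a basic degree-one scaling symmetry directly. Since $\D$ does not act on $\dot X_j$ and $\D(T)=T$, the degree computation above gives $\D(\hat L) = \D(T) + \sum_j \dot X_j^{\,1-1/\Lambda_j}\,\D(L_j^{1/\Lambda_j}) = T + \sum_j \dot X_j^{\,1-1/\Lambda_j} L_j^{1/\Lambda_j} = \hat L$. For the Lagrangian one-form I would compute the fibre derivatives: because $L_j=L_j(q)$ is velocity-independent, $\partial_{\dot q}\hat L = \partial_{\dot q} T$, so the $dq$-part of $\lambda_{\hat L}$ is exactly $\lambda_L$, for which $L_{\D}\lambda_L=\lambda_L$ holds by hypothesis; the $dX_j$-part is $(1-1/\Lambda_j)(L_j/\dot X_j)^{1/\Lambda_j}\,dX_j$, and since $(L_j/\dot X_j)^{1/\Lambda_j}$ is again degree one under $\D$ while $L_{\D}(dX_j) = d(\D X_j)=0$, the Lie derivative fixes each such term. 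Adding the pieces gives $L_{\D}\lambda_{\hat L}=\lambda_{\hat L}$, establishing the scaling symmetry.

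For the second claim I would use the Euler--Lagrange equations. Cyclicity of $X_j$ gives the conserved momentum
\[ p_{X_j} = \frac{\partial\hat L}{\partial\dot X_j} = \left(1-\tfrac{1}{\Lambda_j}\right)\!\left(\frac{L_j}{\dot X_j}\right)^{1/\Lambda_j}, \]
so that (for $\Lambda_j\neq 1$) the ratio $\dot X_j/L_j$, and hence $(\dot X_j/L_j)^{1-1/\Lambda_j}$, is constant along any extremal of $\hat L$. It then remains to compare the $q$-equation. Using $\partial_{\dot q}\hat L = \partial_{\dot q}T$ and
\[ \frac{\partial\hat L}{\partial q} = \frac{\partial T}{\partial q} + \sum_{j} \tfrac{1}{\Lambda_j}\,\dot X_j^{\,1-1/\Lambda_j} L_j^{1/\Lambda_j-1}\,\frac{\partial L_j}{\partial q}, \]
the Euler--Lagrange equation for $q$ reads $\tfrac{d}{dt}\partial_{\dot q}T - \partial_q T = \sum_j \tfrac{1}{\Lambda_j}(\dot X_j/L_j)^{1-1/\Lambda_j}\,\partial_q L_j$. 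This is precisely the Euler--Lagrange equation of $L_a = T + \sum_j a_j L_j$ provided $\Lambda_j a_j = (\dot X_j/L_j)^{1-1/\Lambda_j}$; since the right-hand side was just shown to be constant, the projection of an extremal of $\hat L$ to $Q$ is an extremal of $L_a$ at these fixed parameter values, as claimed.

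The computations themselves are routine; the points requiring care---and where I would be most cautious---are the degenerate and sign-sensitive cases of the fractional powers. When $\Lambda_j = 1$ the $j$-th summand collapses to $L_j$, the variable $X_j$ disappears entirely and the corresponding parameter is pinned to $a_j = 1$, so this case must be excluded from (or handled separately in) the cyclicity argument. More generally, when $1-1/\Lambda_j$ is not an integer one must choose branches of the powers, working over the regions $L_j\neq 0$, $\dot X_j\neq 0$ and inserting absolute values with the sign choices of Eq.~\eqref{eq:coupledHamv2} (as in Remark~\ref{remark:coupledHamv2}); I would phrase the statement over these open sets so that $\hat L$ and $p_{X_j}$ are real and smooth. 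Finally, I would note that the constancy of $\Lambda_j a_j$ here is the Lagrangian shadow of the $a_j$ being first integrals of the lifted Hamiltonian system in Theorem~\ref{thm:lifted_red}.
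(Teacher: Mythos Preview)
Your argument is correct. The paper does not actually give a proof of this theorem: the appendix opens by saying that ``details will be left to the reader,'' and the theorem is followed immediately by a remark rather than a proof. Your direct verification---checking $\D(\hat L)=\hat L$ and $L_{\D}\lambda_{\hat L}=\lambda_{\hat L}$ term by term, then reading off the conserved momenta $p_{X_j}$ from the cyclicity of $X_j$ and matching the $q$--Euler--Lagrange equation to that of $L_a$---is exactly the computation one is meant to supply, and each step you write down checks out.

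The only difference in viewpoint is that the paper's remark after the theorem indicates an alternative route: pass by Legendre transform to the Hamiltonian side, where the lifted system $\hat L$ corresponds to $\hat H=\sum a_j H_j$ with $\Lambda_j a_j = (\dot X_j/L_j)^{1-1/\Lambda_j}$, and then invoke Proposition~\ref{thm:globalreduction} (in the form of Remark~\ref{remark:coupledHamv2}) for the degree-one scaling symmetry. That approach buys a tighter link to the Hamiltonian story already developed in the body of the paper, but it requires tracking the Legendre transform of the somewhat awkward $\dot X_j$--dependence. Your purely Lagrangian computation is more self-contained and makes the role of the cyclic variables $X_j$ completely transparent; it is arguably the cleaner way to fill in the omitted details. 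Your caveats about $\Lambda_j=1$ and the branch/sign issues for non-integer exponents are also well placed and mirror the paper's own Remark~\ref{remark:coupledHamv2}.
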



\begin{remark} \rm More generally, for $L = \sum a_j L_j$, one may consider $\hat L = \sum \dot X_j \left( \frac{L_j}{\dot X_j}\right)^{1/\Lambda_j}$.
Under Legendre transform, this lifted system, $\hat L$, 
corresponds to the Hamiltonian system $\hat H = \sum a_j H_j$, 
where $\Lambda_j a_j = \left(\frac{\dot X_j}{L_j}\right)^{1 - \frac{1}{\Lambda_j}}$ and $H_j = \del_{\dot q}L_j\cdot \dot q - L_j$ 
is the Legendre tranform of $L_j$. 
A Legendre transform of $\hat \D$ from Remark~\ref{remark:coupledHamv2}, yields a degree one scaling symmetry for this Lagrangian system, $\hat L$ 
(which in general is {\em not} $\D$).

Similar comments as those of Remark~\ref{remark:coupledHamv2} apply here, 
in that in order to avoid potential complex numbers from the exponents $\frac{1}{\Lambda_j}$, 
one may consider various cases with absolute values to take care of the signs. 
In general, to avoid such exponents, one may consider eg:
\[ \hat L := \sum e^{\dot X_j + L_j}\]
as the lift of the Lagrangian system, $L_a = \sum a_j L_j$, where $e^{\dot X_j + L_j} = a_j$ is constant over extremals of $\hat L$, with the degree one scaling symmetry $\hat \D = \sum \del_{\dot X_j}$.
\end{remark}

\begin{example}[The Kepler-Hooke potential]\label{ex:LagKeplerHooke}\rm
    Let us consider the system described in Example~\ref{ex:LagHooke} and extend the system to include a Kepler potential:
    \begin{equation}
    L = \underbrace{ \frac{\dot{\rho}^2}{8\rho} + \frac{\rho \dot{\theta}^2}{2} - \frac{k_H \rho}{2}}_{L_1} +  
    \underbrace{\frac{k_K}{\sqrt{\rho}}}_{L_{2}}  \qquad \D = \rho \partial_\rho + \dot{\rho} \partial_{\dot{\rho}}
    \end{equation}
    For this system, $\D$ is not a scaling symmetry, as $\D(L_1)=L_1$ but $\D(L_2)=-\frac{1}{2} L_2$. 
    However, following Theorem~\ref{thm2:laglifted} we can consider 
    \begin{equation} \label{eq:laglifted}
        \hat L =  \frac{\dot{\rho}^2}{8\rho} + \frac{\rho \dot{\theta}^2}{2} - \frac{k_H \rho}{2} + \frac{\dot{X}^3 \rho}{k_K^2} 
    \end{equation}
    for which $\D$ is a scaling symmetry. 
    We then impose the initial condition $\dot{X}_0 = \frac{k_K}{(-2)^{\frac{1}{3}}r_0}$, where $r_0$ is the initial value of $r$. 
    This gives rise to the same equations of motion as the original system. 
    Since this is a Lagrangian scaling system of degree one, we can follow Example~\ref{ex:LagHooke} and find the equivalent Herglotz Lagrangian system:
    \begin{equation} \label{eq:lagliftedreduced}
        \mathscr{L} = -2S^2 + \frac{\dot{\theta}^2}{2} - \frac{k_H}{2} + \frac{\dot{X}^3}{k_K^2}\,.
    \end{equation}
\end{example}

\begin{remark}\rm
    Note that we could have dropped $k_K$ entirely from the description of the system in equations~\eqref{eq:laglifted} and~\eqref{eq:lagliftedreduced} 
    by appropriately choosing a boundary condition for $\dot{X}$, 
    and hence we have in essence exchanged specifying a coupling constant, $k_K$, for an initial condition. 
\end{remark}

\section{Blow-ups in celestial mechanics}\label{app:Kepler}

In this appendix, we remark on the contact reduction by scaling symmetries presented here in relation to some well-known constructions 
in celestial mechanics (see for example~\cite{Chenciner1997alinfini,miranda2018geometry,MontgomeryBlowup}).

Let us first recall the scale-reduced Kepler problem (Examples~\ref{ex:Kepler}--\ref{ex:Kblowup}). 
We denote the scaling functions of Example~\ref{ex:Kep_scal} by
\[ \rho := |q|^{1/2} ,~~ \kappa:=\frac{1}{|p|} ,~~ G:= p\cdot iq, ~~J:= p\cdot q.\]
where $\rho^4 = q\cdot q$ is the moment of inertia, and $\frac{1}{\kappa^2} = p\cdot p$ is (twice) the kinetic energy.
    
    Of course, the contact reductions corresponding to each choice of scaling function represent the same curves in $C = M/\D$ 
    obtained by projections of Kepler orbits. They are merely different choices of coordinates for $C$. 
    However, each choice of coordinates highlights certain properties of these scale-reduced curves. 
    For instance, the scaling function $\rho$, is naturally associated to a blow-up of the collision orbits (Example~\ref{ex:Kblowup}), 
    while the scaling function $\kappa$ to a blow-up of orbits with $|p|\to\infty$ (as well the collision orbits in the Kepler problem).
    
To determine some contact reductions (Table~\ref{tab:Kep_reds} below) for various choices of scaling function, 
we first note that the scaling functions above are not independent, satisfying the relation $\kappa^2 (J^2 + G^2) = \rho^4$.
    One may use any three of these scaling functions, along with a scale invariant angle, as coordinates on the phase space. There are essentially three (dependent) natural angles, $\theta, \fe, \delta = \fe - \theta$, present,
    where
    \begin{equation}\label{eq:Kep_angles}
        q = \rho^2 e^{i\theta} ,    \qquad p = \frac{1}{\kappa} e^{i\fe} , \qquad \rho^2 e^{i\delta} = \kappa (J + iG)
    \end{equation}
    for $\delta$ the angle between position and momentum.
    
    To see the equations of motion obtained by these various choices of scaling functions, 
    we will consider $(\rho, \theta, J, G)$ for coordinates on the phase space. Then we have (eq.~\eqref{eq:KepContRed} above): 
    \[ \lambda =  G~d\theta + 2J~ d\log\left( \frac{\rho}{J}\right), \qquad 
    H_K = \frac{1}{2\kappa^2} - \frac{1}{\rho^2} ,
    \qquad 
    \kappa^2(J^2 + G^2) = \rho^4.\] 
    
    One may describe the contact reduction for various choices of scaling functions according to Remark~\ref{remark:restr} 
    by restriction of $\lambda$ and $-H_K$ to a level set of the scaling function, 
    obtaining the following representations of the scale-reduced Kepler problem:
    \begin{table}[h!]
        \centering
        \begin{tabular}{||c|c|c||}
        \hline
          Scaling function   &  Contact system & Scale invariant equations of motion\\
          \hline\hline
           1.~~$\rho$  & $\eta = \tilde G~d\theta - 2~d\tilde J ,~~\mathscr{H}_0 = 1 - \frac{\tilde J^2 + \tilde G^2}{2}$ & $\tilde J' = \frac{\tilde G^2}{2} - \mathscr{H}_0,~~\tilde G' = - \frac{\tilde J\tilde G}{2} ,~~ \theta' = \tilde G$\\
           \hline
           2.~~$\kappa$ & $\eta_\kappa = G_\kappa~d\fe - dJ_\kappa ,~~\mathscr{H}_\kappa = \frac{1}{R} -  \frac{1}{2}$ & $J_\kappa' = - 2\mathscr{H}_\kappa + \frac{G_\kappa^2}{R^3},~~G_\kappa' = - \frac{G_\kappa J_\kappa}{R^3} ,~~\fe' = \frac{G_\kappa}{R^3}$\\
           \hline
           3.~~$G$ & $\eta_G = d\theta - J_G~d \log\left(\frac{J_G}{\rho_G}\right)^2 ,~~\mathscr{H}_G = \frac{2\rho_G^2  - J_G^2 - 1}{2\rho_G^4}$ & $J_G' = \frac{1}{\rho_G^2} - 2\mathscr{H}_G ,~~\rho_G' = \frac{J_G}{2\rho_G^3} ,~~\theta' = \frac{1}{\rho_G^4}$\\
           \hline
           4.~~$J$ & $\eta_J = G_J~d\theta + d \log\rho_J^2 ,~~\mathscr{H}_J = \frac{2\rho_J^2 - G_J^2 - 1}{2\rho_J^4}$ & $G_J' = G_J(2\mathscr{H}_J - \frac{1}{\rho_J^2}) ,~~\rho_J' = \rho_J\mathscr{H}_J - \frac{G_J^2}{2\rho_J^3} ,~~\theta' = \frac{G_J}{\rho_J^4}$\\
           \hline
        \end{tabular}
        \caption{\small{Contact reductions of the Kepler problem with various scaling functions. In row 2 we set $R := \sqrt{J_\kappa^2 + G_\kappa^2}$.}}
        \label{tab:Kep_reds}
    \end{table}
    the contact reduction with $\rho$ (item 1 of Table~\ref{tab:Kep_reds}) is Example~\ref{ex:Kep_scal} above. 
    In this table we used, for example, in the $\kappa$ case, the notation:
    \[ \pi^*J_\kappa = J/\kappa ,\quad\pi^*G_\kappa = G/\kappa ,\quad\pi^*\fe = \fe ,\quad \pi^*\mathscr{H}_\kappa = -\kappa^2 H_K ,\quad \pi^*\eta_\kappa = \lambda/\kappa\,,\]
    the subscript notation in the other cases taking the analogous meaning.

    \begin{remark}\rm
        We have already seen (Example~\ref{ex:Kblowup}), the relation of row 1 in Table~\ref{tab:Kep_reds} to the blow-up of the collision orbits. 
        Likewise, row 2 yields a blow-up of $|p|\to \infty$ orbits. 
        Here, for the Kepler problem, these are as well the collision orbits. 
        In general $n$-body problems, choices of various mutual distances as scaling functions or analogues of $\kappa$ will be interesting to examine. 
        
        We also remark that the use of a first integral as a scaling function, as in row 3, corresponds to a first integral for the scale-invariant curves. 
        Here, this first integral, $\pi^*\mathscr{H}_G = - G^2 H_K$, is a well known scale invariant (sometimes referred to as the Dziobek constant). 
        Also observe that the equations of motion for the scale invariant $J$ are forms of the well-known Lagrange-Jacobi identity.
        
        The reduction based on the scaling function $J$ of row 4, is used in Barbour et.~al's~\cite{barbour2014identification} (with $n$-body problems), 
        where it is referred to as the dilational momentum and used, for instance, to describe long term `clustering' behavior of solutions.
    \end{remark}

    \begin{remark}\rm
        One has an analogous contact reduction for certain inverse power force laws:
        \[H_\alpha = \frac{|p|^2}{2} - \frac{1}{\alpha |q|^\alpha}\]
        with $\D_\alpha:= \frac{2q\cdot \del_q - \alpha p\cdot \del_p}{2-\alpha}$ a scaling symmetry of degree $\Lambda = -\frac{2\alpha}{2-\alpha}$ 
        (when $\alpha\ne 2$). 
        Note that $J, G$ above are still scaling functions, as is $\rho_\alpha := |q|^{1 - \alpha/2}$. 
        Carrying out the analogue of Exampple~\ref{ex:Kblowup} here gives exactly the blown-up collision tori 
        (see \S 1.3 of \cite{devaneySingCM}) of these central force problems.
    \end{remark}
    
    The contact reduction may be carried out similarly for $n$-body problems:
    \[ q = (q_1,...,q_n) \in \R^{nd} ,\qquad p = (p_1,...,p_n) \in \R^{nd}\]
    with $q_j, p_j\in \R^d$ given as the Hamiltonian system:
    \[ \omega = dp\wedge dq = d(p\cdot dq) ,\qquad  H = \frac{\|p\|^2}{2} - U(q)\]
    where $U(\lambda q) = \lambda\inv U(q)$ is homogeneous of degree $-1$. 
    For simplicity consider the masses are unit (otherwise, one may take, $\|~\|$, as a suitable mass-weighted norm). 
    The system admits the degree $-2$ scaling symmetry:
    \[ \D = 2q\cdot \del_q - p\cdot\del_p.\]
    
    Taking $\rho := \|q\|^{1/2}$ as a scaling function, the quotient $\R^{2nd} / \D \cong S^{nd - 1}\times \R^{nd}$, has scale-invariant coordinates:
    \[s := q/\rho^2 \in S^{nd - 1} ,\qquad y := \rho p \in \R^{nd}.\]
    
    The contact reduction associated to the scaling function $\rho$ gives the well-known McGehee blow-up. 
    Taking $\rho \eta = i_\D\omega = -2q\cdot dp - p\cdot dq$ and $\mathscr{H} = - \rho^2 H$, one has:
    \[  \eta = d\nu + s\cdot dy ,  \qquad \mathscr{H} = U(s) - \frac{\|y\|^2}{2}\]
    where $\nu := s\cdot y$. 
    The scale-invariant equations of motion (eqs.~\eqref{eq:contCoords}, with $\Lambda = - 2$) are exactly the usual 
    McGehee blow-up equations (eg \S 4.2 of~\cite{MontgomeryBlowup}):
    \[ s' = y - \nu s ,\qquad y' = \frac12 \nu y + \nabla U(s)\,,\]
    the collision manifold being given by the invariant level $\mathscr{H} = 0$ (see as well \cite{mercati2021total}).


\bibliographystyle{abbrvnat_mv}
\bibliography{contact.bib}


\end{document}